\documentclass[12pt]{article}
\usepackage{graphicx} 
\usepackage{amsmath}
\usepackage{amsfonts}
\usepackage{amssymb}
\usepackage{amsthm}
\usepackage{pgfplots}
\usepackage[round]{natbib}
\usepackage{appendix}
\usepackage{comment}
\usepackage{authblk}
\usepackage{mathdots}
\usepackage{enumitem}
\usepackage[linesnumbered,ruled,vlined]{algorithm2e}
\let\oldnl\nl
\newcommand{\nonl}{\renewcommand{\nl}{\let\nl\oldnl}}

\def\@centernot#1#2{%
  \mathrel{%
    \rlap{%
      \settowidth\dimen@{$\m@th#1{#2}$}%
      \kern.5\dimen@
      \settowidth\dimen@{$\m@th#1=$}%
      \kern-.5\dimen@
      $\m@th#1\not$%
    }%
    {#2}%
  }%
}
\makeatother

\newcommand{\blind}{1}

\usepackage{tikz}
\usetikzlibrary{arrows.meta,shapes}
\usetikzlibrary{arrows,shapes.arrows,shapes.geometric,shapes.multipart,
decorations.pathmorphing,positioning,shapes.swigs,}

\newtheorem{lemma}{Lemma}
\newtheorem{proposition}{Proposition}
\newtheorem{theorem}{Theorem}
\newtheorem{corollary}{Corollary}
\theoremstyle{definition}
\newtheorem{assumption}{Assumption}
\newtheorem{definition}{Definition}
\theoremstyle{remark}
\newtheorem{remark}{Remark}

\newcommand{\independent}{\perp\mkern-9.5mu\perp}

\DeclareUnicodeCharacter{00A0}{ }

\DeclareMathOperator*{\argmax}{argmax}
\DeclareMathOperator*{\argmin}{argmin}

\pgfplotsset{compat=1.18}

\usepackage{apptools}

\AtAppendix{\counterwithin{lemma}{section}}  
\AtAppendix{\counterwithin{proposition}{section}}
\AtAppendix{\counterwithin{theorem}{section}}
\AtAppendix{\counterwithin{corollary}{section}}
\AtAppendix{\counterwithin{assumption}{section}}
\AtAppendix{\counterwithin{definition}{section}}
\AtAppendix{\counterwithin{remark}{section}}

\begin{document}

\def\spacingset#1{\renewcommand{\baselinestretch}%
{#1}\small\normalsize} \spacingset{1}



\if1\blind
{
  \title{\bf Optimal sequential decision-making with initiation regimes}
  \author[1]{Julien D. Laurendeau \thanks{
    The authors gratefully acknowledge support from the Swiss National Science Foundation.}\hspace{.2cm}\thanks{julien.laurendeau@epfl.ch}}
  \author[2]{Leora Sarvet \thanks{asarvet@umass.edu}}
  \author[1]{Mats J. Stensrud \thanks{mats.stensrud@epfl.ch}}
    \affil[1]{Institute of Mathematics, Ecole Polytechnique Fédérale de Lausanne, Station 8, 1015 Lausanne, Switzerland}
    \affil[2]{School of Public Health \& Health Sciences, University of Massachusetts Amherst, USA}
    \date{}                                                       
  \maketitle
} \fi

\if0\blind
{
  \bigskip
  \bigskip
  \bigskip
  \begin{center}
    {\LARGE\bf Optimal sequential decision-making with initiation regimes}
\end{center}
  \medskip
} \fi

\begin{abstract}
Consider an optimal dynamic treatment regime, $g^{\textbf{opt}}$ correctly identified from a large, perfectly executed sequentially randomized experiment. Even when the experimental results are generalizable to a future target population, there is no guarantee that $g^{\textbf{opt}}$ outperforms human decision-makers; human experts can do better than $g^{\textbf{opt}}$ whenever they have access to relevant information beyond the covariates recorded in the experiment. Motivated by this observation, we derive results on a new class of regimes called initiation regimes, which generalize existing results on superoptimal regimes. These regimes follow human decision-makers up to the point where it becomes more beneficial to initiate a sequential optimal regime, and are guaranteed to outperform both purely human and purely algorithmic decision rules, e.g., based on reinforcement learning algorithms. Furthermore, we present modified experimental designs that identify the best initiation regimes, show how the best initiation regime can be identified from classical observational data under explicit assumptions, and give estimation and statistical inference methodology for these regimes. To illustrate the practical utility of the methods, we consider initiation regimes in a case study on treatment of lower back pain. 
\end{abstract}

\noindent%
{\it Keywords: Causal inference, Optimal treatment regimes, Natural treatment values, Algorithmic decision making} 

\spacingset{1.65} 

\section{Introduction}

There is a flourishing methodological literature on optimal dynamic treatment regimes (\cite{murphy2003optimal, robins2004optimal, moodie2007demystifying, moodie2012q, chakraborty2013statistical, chakraborty2014dynamic,schulte2014q_and_a, tsiatis2019dynamic,clifton2020qlearning, kosorok2021introduction}, to name a few). These methods are now implemented in practice, for example, in AI systems using reinforcement learning \citep{kober_reinforcement_2013, silver2016mastering, vinyals2019grandmaster, kendall2019learning, openai_2024_learning,reid2024gemini,meta2024responsible,aws2024responsible, Lgayhardt_2024}, and in the analysis of Sequential Multiple Assignment Randomized Trials (SMARTs) \citep{murphy_experimental_2005, murphy_customizing_2007, murphy2007developing}, see also Appendix \ref{app: related_literature} for a brief review of relevant literature on optimal regimes. Nevertheless, there is skepticism about the practical implementation of estimated regimes used in AI systems, despite the existence of certain theoretical optimality guarantees \citep{verghese2018computer, matheny2019artificial}. A main concern is that human decision makers have access to relevant information that is not easily encoded in datasets. For example, doctors often receive visual or auditory cues from a patient, which are subsequently used in the process of making decisions \citep{hamerman1999toward}. Leveraging this extra information, human decision makers can outperform algorithmic regimes learned from perfect sequential experiments, see, e.g., \citet[Fine Points 22.7 and 22.8]{hernan_causal_2024} for a discussion on how this can occur,  \citet{zhang2022can,stensrud_optimal_2024, raghavan2025counterfactual} for methodological discussions, and \citet{cabitza2017unintended, verghese2018computer} for concern in the medical literature. In other words, the human decision makers might successfully tailor their treatment decisions based on characteristics that were not used to learn the optimal algorithmic regime.  

Recent work on optimal regimes in a point treatment setting suggests that the existing results can be improved; specifically, \citet{stensrud_optimal_2024} considered regimes that also use an individual's natural treatment value, that is, the treatment value an individual would take in the absence of it being assigned by a regime. 
These regimes are guaranteed to outperform the conventional optimal regimes and also the regime implicitly implemented in the observed data, and have therefore been called superoptimal regimes in point treatment settings. The assumptions needed to identify these regimes in a point treatment setting are subtly different from, but not necessarily stronger than, those needed to identify conventional optimal regimes.   

Here we consider a longitudinal, sequential treatment setting and give new optimality results for regimes that leverage time-varying natural treatment values. The extension from the point treatment setting is not straightforward. One problem is that recording a human's natural treatment preferences over time is often difficult when an algorithm governs the decision process: without the ability to enforce their natural treatment choice, a human has little incentive to articulate or even reflect on it. Moreover, even when asked to report their natural treatment value, the reliability of their response may be compromised, particularly if they have repeatedly experienced decisions that contradict their intentions. Unsurprisingly, sequential analogues of patient preference trials have thus far not been implemented in practice: natural treatment values are not recorded in conventional SMART trials, which assign treatments based on possibly a high-dimensional time-varying vector of covariates, but not natural treatment values. 


Our aim with this article is to construct a sequential algorithmic decision rule that is guaranteed to outperform human decision makers and, at the same time, overcomes the problem of using natural treatment values when an algorithm entirely governs the decision process. The proposed solution is based on creating a class of so-called initiation regimes. The initiation regimes are defined such that an individual initiates a dynamic optimal regime, given their past history of covariates and natural treatment preferences (values), at a time $k$ after follow-up. Until time $k$, however, the individual follows their natural course of action; that is, the individual takes their natural treatment value without any algorithmic intervention. This class includes both the observed regime and the conventional, previously considered algorithmic optimal regime, which uses measured covariate histories alone: the observed regime corresponds to setting $k>K$, where $K$ is the end of the study, and the conventional optimal corresponds to $k=0$, where $k = 1$ is the first treatment time: We call the initiation regime that maximizes the expected reward the optimal initiation regime. Informally, this regime  finds the optimal time $k$ to deviate from an individual's natural treatment decision and, subsequently, for times $j>k$, follows an algorithmic decision rule given time-varying covariates through time $j$ and natural treatment values through time $k$. Under this regime, each individual might deviate from their natural course at different times, depending on their natural covariate and treatment history. A feature of the optimal initiation regime is that it outperforms the conventionally defined optimal regime and the observed regime under a non-trivial class of data-generating mechanisms. We will consider larger classes of regimes that share this feature, but our focus on initiation regimes is also motivated by the fact that these regimes can be identified in a wider range of common data structures. Furthermore, the initiation regimes are easier to estimate in many settings. 

The optimal initiation regime is computable via Bellman equations, similar to the conventional optimal regime in longitudinal settings. We leverage the Bellman equations to derive identification conditions, and describe how these conditions are related to identification results for conventional optimal regimes. Broadly, the initiation regimes are identified under assumptions not necessarily stronger than assumptions used to identify conventional sequential optimal regimes. 


To fix ideas, we will consider a running example on the treatment of chronic back pain, motivated by a large consortium funded by the National Institute of Health (NIH).

\subsection{Optimal treatment of chronic back pain}
\label{sec: chronic back pain ex intro}
Chronic lower back pain is one of the leading causes of disability worldwide, affecting hundreds of millions of individuals \citep{world-health-organization-who-2023}. The population suffering from chronic lower back pain is heterogeneous, and there is interest in finding optimal regimes that are tailored to individual characteristics. The Back Pain Consortium is a NIH-funded research program created to improve diagnosis and treatment of chronic lower back pain. A part of the consortium is the Biomarkers for Evaluating Spine Treatments (BEST) trial, a Sequential Multiple Assignment Randomized Trial (SMART) with 630 patients who suffered from chronic lower back pain. The trial participants were sequentially assigned to be treated with Duloxetine or Enhanced Self Care (ESC) at baseline and to Evidence-Based Exercise and Manual Therapy (EBEM) or Acceptance and Commitment Therapy (ACT) three months thereafter. The outcome of interest was the Pain, Enjoyment of life and General activity (PEG) Score, a measure of pain and pain interference \citep{krebs2009development} ranging from $0$ (no pain) to $10$ (worst pain). Because we aim to maximize a value function, we consider a reversed scale, where $10$ is the best score and $0$ is the worst. Age was recorded at baseline, and opioid usage and depression status were recorded at baseline and 3 months later, and the patient's response to treatment is recorded, see Figure \ref{fig: Batorsky trial illustration} for an illustration adapted from \citet{batorsky2024integrating}. The BEST trial allows identification of the conventional optimal dynamic treatment regime for back pain treatment in the population the trial was conducted in. However, the BEST trial does not record the physician's recommendation for back pain treatment, corresponding to the individual's natural treatment value, that the optimal initiation regime uses to improve over the conventional optimal regime.

\begin{figure}
    \centering
    \begin{tikzpicture}[scale=0.7, transform shape,
    arrow/.style={-{Latex[scale=1.0]}, thick}]

\node[name=R,circle, draw=blue, line width=1mm, text=black, minimum size=1cm] at (0,0) {R};
\node[name=Duloxetine,rectangle, draw=black, fill=gray!50, rounded corners, align=center, text width=4cm, minimum height=1.5cm] at (4,4) {Treatment (1)\\Duloxetine};
\node[name=ESC,rectangle, draw=black, fill=gray!50, rounded corners, align=center, text width=4cm, minimum height=1.5cm] at (4,-4) {Treatment (0)\\ESC};
\node[name=R_h,circle, draw=blue, line width=1mm, text=black, minimum size=1cm] at (7,4) {R};
\node[name=R_l,circle, draw=blue, line width=1mm, text=black, minimum size=1cm] at (7,-4) {R};
\node[name=EBEM_h,rectangle, draw=black, fill=gray!50, rounded corners, align=center, text width=4cm, minimum height=1.5cm] at (12,6) {Treatment (1)\\EBEM};
\node[name=ACT_h,rectangle, draw=black, fill=gray!50, rounded corners, align=center, text width=4cm, minimum height=1.5cm] at (12,2) {Treatment (0)\\ACT};

\node[name=EBEM_l,rectangle, draw=black, fill=gray!50, rounded corners, align=center, text width=4cm, minimum height=1.5cm] at (12,-2) {Treatment (1)\\EBEM};
\node[name=ACT_l,rectangle, draw=black, fill=gray!50, rounded corners, align=center, text width=4cm, minimum height=1.5cm] at (12,-6) {Treatment (0)\\ACT};

\draw[arrow] (R) -- (Duloxetine.197);
\draw[arrow] (R) -- (ESC.163);
\draw[arrow] (R_h) -- (EBEM_h.197);
\draw[arrow] (R_h) -- (ACT_h.163);
\draw[arrow] (R_l) -- (EBEM_l.197);
\draw[arrow] (R_l) -- (ACT_l.163);

\node[rectangle, draw=black, align=center, minimum height=1cm, minimum width=7cm] at (3.5,-8) {Stage 1};
\node[rectangle, draw=black, align=center, minimum height=1cm, minimum width=8cm] at (11,-8) {Stage 2};

\node[name = Cap_1] at (0,-9) {Baseline};
\node[name = Cap_2] at (7,-9) {3 months};
\node[name = Cap_3] at (15,-9) {6 months};

\node[name = Cap_1_1, align = center] at (0,-11.5) {Data collection:\\$L_1$: age, \\opioid
usage, \\depression};
\node[name = Cap_2_2, align = center] at (7,-12) {Data collection:\\$Y_1$: PEG Score\\$L_2$: opioid usage,\\depression,\\responder status};
\node[name = Cap_3_3, align = center] at (15,-10.5) {Data collection:\\ $Y_2$: PEG Score};

\end{tikzpicture}
    \caption{Data Structure of a SMART on treatment of lower back pain, adapted from \citet{batorsky2024integrating}. Here, 'R' represents randomization to Duloxetine or Enhanced Self Care (ESC) at baseline, and to Evidence-Based Exercise and Manual therapy (EBEM) or Acceptance and Commitment Therapy (ACT) 3 months later.}
    \label{fig: Batorsky trial illustration}
\end{figure}

However, following \citet{batorsky2024integrating}, we will consider a hypothetical setting where observational data from 1'000 individuals are additionally available from the same superpopulation as the trial participants. Such observational data could, for example, come from the observational studies that are part of The Back Pain Consortium \citep{BACPAC2019, mauck2023back, batorsky2024integrating}. In the observational data, physicians make decisions for patients based on their own discretion and intentions. We will examine how the trial and observational data help identify regimes that follow the physician's recommendation for assigning Duloxetine or ESC at baseline, and then intervene to give EBEM or ACT depending on covariates and the physician's recommendation for treatment after 3 months.

We use this chronic back pain example from \citet{batorsky2024integrating}, where randomized and observational data are available, as a running example. However, to illustrate when identification is possible and under what assumptions, we also consider other data structures that involve observational or randomized data in Section \ref{sec: data fusion} and thereafter.

\section{Data structure, notation and basic assumptions}
\label{sec: Data structure}
 Suppose that we observe data on time-indexed vectors of covariates $\overline{L}_K = (L_1, \ldots, L_K)$, treatments $\overline{A}_K = (A_1, \ldots, A_K)$ and outcomes, also called rewards, $\overline{R}_K = (R_1, \ldots, R_K)$. There might also exist time-varying unobserved covariates $\overline{U}_K = (U_1, \ldots, U_K)$ that, e.g., exert effects on treatments and outcomes.
 
 In the special case of a sequentially randomized trial, $\overline{U}_k$ will be empty as treatments $A_k$ are randomized, conditional at most on the measured past, $\overline{L}_k$.  

We define a regime $g$ as a sequence of decision rules $(g_1, \ldots, g_K)$ taken at times $1, \ldots, K$. Let superscripts denote counterfactuals. In particular, $(\overline{A}_{k-1}^{\overline{a}_{k-2}}, \overline{R}_{k-1}^{\overline{a}_{k-1}}, \overline{L}_{k}^{\overline{a}_{k-1}})$ are the counterfactual values of $(\overline{A}_{k-1}, \overline{R}_{k-1}, \overline{L}_{k})$ when implementing the regime $(g_1, \ldots, g_{k-1}) = (a_1, \ldots, a_{k-1})$. Furthermore, we use a ``$+$" to denote treatment values that are assigned by the regime. Thus, $A_k^{g+}$ is the value of treatment at time $k$ assigned under the regime $g$. We define the history for times $k \geq 2$ as $H_k^{\Bar{a}_{k-1}} = (\overline{R}_{k-1}^{\overline{a}_{k-1}}, \overline{L}_{k}^{\overline{a}_{k-1}})$, taking values in $\mathcal{H}_k$, and $H_1 = L_1$.

Like \citet{stensrud_optimal_2024, laurendeau2024improved}, we aim to leverage natural treatment values to improve $H$-optimal regimes. 
To be more precise, we give a definition of natural treatment values in time-varying settings, following \citet{richardson2013single, sarvet2025natural}:
\begin{definition}[Natural treatment value]
    The natural treatment value at time $k$ is the value the treatment would take in absence of it being assigned by an intervention $A_k^g = A_k^{\overline{A}_{k-1}^{g+}} = A_k^{\overline{g}_{k-1}}$.
\end{definition}
See Table \ref{tab:notation} in the Appendix for a summary of the terminology introduced in this and the following sections.

\subsection{Conventional regimes and their relation to initiation regimes}
Throughout this article, we will refer to different types of dynamic regimes that intervene on treatments $A_k$ for $k =1,\dots,K$. In particular, it will be useful to express the treatment regime that is implemented naturally in the absence of interventions as a particular dynamic regime:
\begin{definition}[Observed regime]
    The observed regime $g^{\textbf{obs}}$ is the regime that intervenes on treatments $A_k$ only, and assigns $A_k^{g^{\textbf{obs}}+}=A_k^{g^{\textbf{obs}}}$, at every time $k = 1, \ldots, K$, with probability 1.
    \label{def: obs regime}
\end{definition}
For Definition \ref{def: obs regime} to be linked to the observed data, we need a standard consistency assumption:
\begin{assumption}[Consistency]
    For all $k = 1, \ldots, K$, for all $\overline{a}_{k}$,
    \begin{align*}
        (R_k^{\overline{a}_{k}}, A_k^{\overline{a}_{k-1}}, L_k^{\overline{a}_{k - 1}}) = (R_k, A_k, L_k) \mid  \overline{A}_{k} = \overline{a}_{k},
    \end{align*}
    where $\overline{a}_{k-1}$ is assumed to be empty when $k = 1$.
    \label{ass: consistency}
\end{assumption}

By Assumption \ref{ass: consistency}, $g^{\textbf{obs}}$ assigns the observed treatment $A_k$ at every time $k =1,\dots,K$. The observed regime can be described as the ``current standard of care" \citep{stensrud_optimal_2024,laurendeau2024improved}, or ``status-quo regime" \citep{levis2024intervention},\footnote{This definition of ``current standard of care" is (intentionally) different from baseline regimes that sometimes have been used in the literature, like the ``never-treat" regime where $g_k = 0$ for all $k$ \citep{luedtke2016optimal, cui2021individualized, qiu2022individualized} or an arbitrary baseline regime that is a function of observed covariates only \citep{kallus_confounding_2018}. However, our results on optimality also apply to these other baseline regimes, see Appendix \ref{app: other baseline regimes}.} as it corresponds to decisions made by current decision makers in observed data. 

We will contrast our proposed regimes with the conventional optimal regime based on covariate history, hereby the \emph{$H$-optimal regime}, where ``$H$" stands for ``history". For a regime $g = (g_1, \ldots, g_K)$, let $\overline{g}_k = (g_1, \ldots, g_k)$, let $\underline{g}_k = (g_k, \ldots, g_K)$, let $\mathcal{G}$ be the set of regimes such that for each $k$, $g_k$ is a function mapping $\mathcal{H}_k$ to $\{0,1\}$, and let $R$ denote the utility function; we will consider $R = \sum_{k = 1}^K R_k$, such that, under a regime $g$, we have $R^g = \sum_{k = 1}^K R_k^g$. The outcome $R$ can be an arbitrary real function of $R_1, \ldots, R_K$, it does not need to be the sum \citep{bertsekas1996neuro,murphy2003optimal}. 

Then, the $H$-optimal regime, as often described in the literature \citep{murphy2003optimal}, is defined as follows:
\begin{definition}[$H$-optimal regime]
  $  
        g^{\textbf{opt}} := \argmax_{g \in \mathcal{G}} \mathbb{E}[R^g].
 $
    \label{def: optimal regime}
\end{definition}

Let $\mathcal{V}_{P,k}^{g}(H_{k}^g)$ be the expected cumulative reward of implementing regime $g$ given history $H_{k}^g$, 
\begin{align*}
    \mathcal{V}_{P,k}^{g}(H_{k}^g) := \mathbb{E}[\sum_{j = k }^K R_j^{\overline{g}_j(H_j^g)} | H_{k}^g],
\end{align*}
which only depends on $g_{k}, \ldots, g_K$ conditional on $H_{k}^g = H_{k}^{g_1(H_1), \ldots, g_{k-1}(H_{k-1}^{g})}$. We index $\mathcal{V}_{P,k}^{g}$ by the true law $P$ to emphasize that the expected cumulative reward changes with $P$. Using $\mathcal{V}_{P,k}^{g}(H_{k}^g)$, we can express the $H$-optimal regime $g^{\textbf{opt}}$ as the solution of Bellman equations:
\begin{theorem}[\citet{bellman1956dynamic, murphy2003optimal}]
The $H$-optimal regime verifies

   \begin{align}
    g_K^{\textbf{opt}}(h_K) &= \argmax_{a_K'\in \{0,1\}} \mathbb{E}[R_K^{\overline{a}_{K-1}, a_K'} | H_K^{\overline{a}_{K-1}} = h_K],\label{eq: Bellman opt 1}\\
    g^{\textbf{opt}}_k(h_k) &= \argmax_{a_k'\in \{0,1\}} \mathbb{E}[R_k^{\overline{a}_{k-1}, a_k'} + \mathcal{V}_{P,k+1}^{ \overline{a}_{k-1}, a_k', \underline{g}^{\textbf{opt}}_{k+1}}(H_{k+1}^{\overline{a}_{k-1}, a_k'}) | H_k^{\overline{a}_{k-1}} = h_k], \label{eq: Bellman opt 2}
\end{align} for $k = 1, \ldots, K-1$, and with $\overline{a}_{K-1} = \overline{g}_{K-1}(h_{K-1})$.


\label{thm: opt Bellman eqs}
\end{theorem}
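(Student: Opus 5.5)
The argument is backward induction on $k$ from $K$ down to $1$, i.e.\ the classical dynamic-programming argument of \citet{bellman1956dynamic} written in counterfactual notation.

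First, for any $g\in\mathcal{G}$, iterated expectations and the definition of $\mathcal{V}_{P,k}^{g}$ give the recursion
\begin{align*}
\mathcal{V}_{P,k}^{g}(H_k^g) = \mathbb{E}\big[R_k^{\overline{g}_k} + \mathcal{V}_{P,k+1}^{g}(H_{k+1}^{g}) \,\big|\, H_k^g\big],
\end{align*}
with $\mathcal{V}_{P,K+1}^g:=0$. On the event $\{H_k^{\overline{a}_{k-1}}=h_k\}$ with $\overline{a}_{k-1}=\overline{g}_{k-1}(h_{k-1})$, the counterfactuals under $g$ coincide with those under the static prefix $\overline{a}_{k-1}$ followed by $g_k(h_k)$. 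This is because $g_j$ is a deterministic function of $H_j$, so fixing the history fixes the assigned treatments. This lets us write each conditional expectation in the form appearing in \eqref{eq: Bellman opt 1}--\eqref{eq: Bellman opt 2}.

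Second, define the candidate regime $\tilde g$ by the right-hand sides of \eqref{eq: Bellman opt 1}--\eqref{eq: Bellman opt 2}, constructed backward. The induction claim is: for every $k$, every $g\in\mathcal{G}$ and every history $h_k$,
\begin{align*}
\mathcal{V}_{P,k}^{\overline{g}_{k-1},\underline{\tilde g}_k}(h_k)\ \ge\ \mathcal{V}_{P,k}^{g}(h_k),
\end{align*}
where $\mathcal{V}_{P,k}$ depends only on the tail rules.

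The base case $k=K$ holds because $\tilde g_K(h_K)$ maximizes $\mathbb{E}[R_K^{\overline a_{K-1},a'}\mid h_K]$ over $a'\in\{0,1\}$, and $g_K(h_K)$ is one of these two values. For the inductive step, use the recursion to write $\mathcal{V}_{P,k}^g(h_k)$ as $\mathbb{E}[R_k^{\overline a_{k-1},a'}+\mathcal{V}^{g}_{P,k+1}(H_{k+1}^{\overline a_{k-1},a'})\mid h_k]$ with $a'=g_k(h_k)$. Apply the induction hypothesis pointwise in $H_{k+1}$, then monotonicity of conditional expectation, to replace the tail $\underline g_{k+1}$ by $\underline{\tilde g}_{k+1}$ and get an upper bound. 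Finally maximize over $a'$, which gives exactly $\tilde g_k(h_k)$.

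Taking $k=1$ and expectation over $H_1=L_1$ yields $\mathbb{E}[R^{\tilde g}]\ge \mathbb{E}[R^g]$ for all $g$. Hence $\tilde g$ attains the argmax in Definition \ref{def: optimal regime}. Conversely, any maximizer must satisfy the Bellman equations almost everywhere, which is the sense of ``verifies''; ties are broken by any fixed rule.

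The main obstacle is the bookkeeping in the first step. We must justify that conditioning on $H_{k}^{g}=h_k$ is the same as conditioning on $H_k^{\overline a_{k-1}}=h_k$ with $\overline a_{k-1}=\overline g_{k-1}(h_{k-1})$, and that the tail value depends only on $\underline g_{k}$. I would handle this by induction on $j$, showing $H_j^g = H_j^{\overline g_{j-1}(H_{j-1}^g)}$ as a pathwise identity of counterfactuals. This holds because the regimes in $\mathcal{G}$ are deterministic functions of the measured history, so no identification assumptions (such as Assumption \ref{ass: consistency}) are needed at this stage. Measurability of the argmax is immediate since the action set is binary.
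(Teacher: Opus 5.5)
The paper gives no proof of its own; it cites \citet{bellman1956dynamic} and \citet{murphy2003optimal}. Your backward-induction route is that classical argument, but the induction fails as you set it up. The bookkeeping you flag as the main obstacle is essentially the paper's definition of $H_k^g$. The real obstacle is what separates this setting from Murphy's: $\mathcal{H}_k$ records $(\overline{R}_{k-1},\overline{L}_k)$ but not $\overline{A}_{k-1}$. The conditional law of the future given $H_k^{\overline a_{k-1}}=h_k$ therefore depends on $\overline a_{k-1}$. So $\mathcal{V}_{P,k}^{g}(h_k)$ does not depend only on the tail rules, and no single rule $\tilde g_k:\mathcal{H}_k\to\{0,1\}$ need be optimal after every prefix.

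Take $K=2$, $L_1,L_2$ degenerate, $R_1\equiv 0$ and $R_2^{a_1,a_2}=2I(a_1=a_2=1)+I(a_1=a_2=0)$. Then $\mathcal{G}=\{0,1\}^2$ and $g^{\textbf{opt}}=(1,1)$ with value $2$. Your inductive claim at $k=2$, with $\tilde g=(1,1)$ and competitor $g=(0,0)$, reads $\mathbb{E}[R_2^{0,1}]=0\ge 1=\mathbb{E}[R_2^{0,0}]$, which is false. The same example shows that defining $\tilde g$ ``by the right-hand sides of \eqref{eq: Bellman opt 1}--\eqref{eq: Bellman opt 2}'' does not single out an optimal regime: $(0,0)$ satisfies both displayed equations but has value $1$. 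A literal backward construction is also circular, since \eqref{eq: Bellman opt 1} needs $\overline{\tilde g}_{K-1}(h_{K-1})$ before the earlier rules exist.

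The missing idea is to index everything by the past treatment values. For every $\overline a_{k-1}\in\{0,1\}^{k-1}$, set $V_{K+1}:=0$, $Q_k(h_k,\overline a_{k-1},a):=\mathbb{E}[R_k^{\overline a_{k-1},a}+V_{k+1}(H_{k+1}^{\overline a_{k-1},a},(\overline a_{k-1},a))\mid H_k^{\overline a_{k-1}}=h_k]$ and $V_k:=\max_a Q_k$. Your backward induction then proves $V_k(h_k,\overline a_{k-1})\ge\mathcal{V}_{P,k}^{\overline a_{k-1},\underline g_k}(h_k)$ for every tail $\underline g_k$.

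Define $g^{\textbf{opt}}$ forward by $g_k^{\textbf{opt}}(h_k)\in\argmax_a Q_k(h_k,\overline g^{\textbf{opt}}_{k-1}(h_{k-1}),a)$. It lies in $\mathcal{G}$ and attains $\mathbb{E}[V_1(H_1)]$, so it is optimal. It also satisfies \eqref{eq: Bellman opt 2}, even though the continuation there after the non-chosen action is the non-adaptive tail $\underline g^{\textbf{opt}}_{k+1}$. At the non-chosen action that right-hand side is at most $Q_k$ at that action, while at the chosen action it equals $\max_a Q_k$.

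Finally, the direction the theorem actually asserts is that a maximizer satisfies the equations, and you only assert it in your last paragraph. It has a short one-step-deviation proof. Suppose \eqref{eq: Bellman opt 1} or \eqref{eq: Bellman opt 2} failed on a set $S$ with $P(H_k^{g^{\textbf{opt}}}\in S)>0$. Switch $g_k^{\textbf{opt}}$ to $1-g_k^{\textbf{opt}}$ on $S$ and keep all other rules. The result is in $\mathcal{G}$, shares the history through time $k$, and has strictly larger value. Hence the equations hold almost everywhere with respect to the law of $H_k^{g^{\textbf{opt}}}$. As the example shows, they are necessary but not sufficient.
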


Theorem \ref{thm: opt Bellman eqs} motivates a dynamic programming algorithm to find $g^{\textbf{opt}}$ when value functions $\mathbb{E}[R_k^{\overline{a}_k} | H_k^{\overline{a}_{k-1}}]$ are identified. Such an approach to computing the $H$-optimal regime is often referred to as Q-learning \citep{sutton1999reinforcement}. Modified versions of the Bellman equations in Theorem \ref{thm: opt Bellman eqs} can be used to define optimal regimes in a larger class of regimes than in Theorem \ref{thm: opt Bellman eqs}, as we will see in Section \ref{sec: initiation regimes}.


\subsubsection{Initiation regimes}
\label{sec: initiation regimes}
To introduce a richer class of regimes that leverage natural treatment values and include $g^{\textbf{obs}}$ and $g^{\textbf{opt}}$, we first define $H$-optimal regimes conditional on an observed past $(h_j, \overline{a}_{j-1}')$,
\begin{align*}
    g^{\textbf{opt}_j}:= \argmax_{g \in \mathcal{G}_j} \mathbb{E}[ R^g],
\end{align*}
where 
\begin{align*}
    \mathcal{G}_j:= \{ \underline{g}_{j} = (g_{j}, \ldots, g_K) \text{ s.t. } g_k: \mathcal{H}_k \times \{0,1\}^{j-1} \to \{0,1\} \text{ for $k \geq j$}\}.
\end{align*}

The joint distribution of $\overline{L}_j, \overline{A}_{j-1},$ and $\overline{R}_{j-1}$ is identified from observational (non-interventional) data for all $j$. Because, under Assumption \ref{ass: consistency}, $\overline{A}_{j-1}^{g+} = \overline{A}_{j-1}$ for $g \in \mathcal{G}_j$, we have that ${H}_j^g = {H}_j$ and $\overline{A}_{j-1}^g = \overline{A}_{j-1}$. Thus, using the definitions of the $g^{\textbf{opt}_j}$, we can define regimes that align with the observed (factual) regime until a set time $j$, and thereafter aligns with the $H$-optimal regime:
\begin{definition}[$j$th initiation regime]
    The $j$th initiation regime $g^{\textbf{obs}_j}$ is identical to the observed regime through time $j-1$, where $j = 1, \ldots, K+1$, and then switches to an $H$-optimal regime that additionally uses patients' observed treatments through $j-1$,
\begin{align*}
    g^{\textbf{obs}_j}_k( H_k^{g^{\textbf{obs}_j}}, \overline{A}^{g^{\textbf{obs}_j}}_{j-1}) := \begin{cases}
        A^{g^{\textbf{obs}_j}}_k & \text{ if $k < j$}\\
        g^{\textbf{opt}_j}_k(H_k^{g^{\textbf{obs}_j}}, \overline{A}_{j-1}^{g^{\textbf{obs}_j}}) & \text{if $k \geq j$}.
    \end{cases}
\end{align*}
\label{def: initiation regimes}
\end{definition}
In particular, the first initiation regime, $g^{\textbf{obs}_1}$, is equal to the $H$-optimal regime, $g^{\textbf{opt}}$, and the last initiation regime, $g^{\textbf{obs}_{K + 1}}$, is equal to the observed regime $g^{\textbf{obs}}$. 

Consider the optimal $j$th initiation regime, that is, $\argmax_{ j = 1, \ldots, K + 1} \mathbb{E}[R^{g^{\textbf{obs}_j}}]$. As the $j$th initiation regimes shift the time-point from which we start the $H$-optimal regime from time $1$ to time $j$, we call this regime the \emph{optimal shifting regime}, $g^{\textbf{osh}}$, see Appendix \ref{app: osh regime} for more details on properties of these regimes.

However, we will focus on a regime that also outperforms the optimal shifting regime; to further leverage the natural treatment values before initiating the $H$-optimal regime, we consider
\begin{align*}
    g^{\textbf{sup}_j}:= \argmax_{g \in \mathcal{G}^{\textbf{sup}}_j} \mathbb{E}[ R^g],
\end{align*}
where for each $k$, $g_k := g_k(H_k^g, \overline{A}^g_j)$, and 
where 
\begin{align*}
    \mathcal{G}_j^{\textbf{sup}}:= \{ \underline{g}_{j} = (g_{j}, \ldots, g_K) \text{ s.t. } g_k: \mathcal{H}_k \times \{0,1\}^{j} \to \{0,1\} \text{ for $k \geq j$}\},
\end{align*}
which includes $\mathcal{G}_j$.

This motivates the $j$-th superoptimal initiation regime:
\begin{definition}[$j$th superoptimal initiation regime]
    The $j$th initiation regime $g^{\textbf{sup-obs}_j}$ is identical to the observed regime through time $j-1$, where $j = 1, \ldots, K+1$, and then switches to the $H$-optimal regime,
\begin{align*}
    g^{\textbf{sup-obs}_j}_k( H_k^g, \overline{A}^{g^{\textbf{sup-obs}_j}}_j) = \begin{cases}
        A^{g^{\textbf{sup-obs}_j}}_k & \text{ if $k < j$}\\
        g^{\textbf{sup}_j}_k(H_k^{g^{\textbf{sup-obs}_j}}, \overline{A}_{j}^{g^{\textbf{sup-obs}_j}}) & \text{if $k \geq j$}.
    \end{cases}
\end{align*}
\label{def: superopt initiation regimes}
\end{definition}
Definition \ref{def: superopt initiation regimes} is different from Definition \ref{def: initiation regimes} as $g^{\textbf{sup}_j}$ is a function of the natural treatment at time $j$, $A^{g^{\textbf{sup-obs}_j}}_j$, but $g^{\textbf{opt}_j}$ is not. 

Although the $j^{th}$ superoptimal initiation regime can be equal to the observed or the optimal regime given the past observed history, there are data generating mechanisms for which this superoptimal regime at time-point $j$ almost surely equals $1-A_j$, the opposite of the observed treatment, see \citet[Proposition 3]{stensrud_optimal_2024}. Conceptually, the optimal initiation regime, $g^{\textbf{init}}$ can be understood as a strategy that initially adheres to the observed regime, deviates from it once, at some time-point $T^{g^{\textbf{init}}}$,
and subsequently aligns with the $H$-optimal regime, conditional on the observed past,

\begin{definition}[Optimal initiation regime]
    Let 
    \begin{align*}
        \mathcal{G}^{\textbf{init}} := \{g = (g_1, \ldots, g_K) &\text{ s.t. } g_k : \mathcal{H}_k \times \{0,1\}^k \to \{0,1\}, \\
&g_k(H_k^g, \overline{A}_k^g) = I(\overline{A}_{k - 1}^{g+} = \overline{A}_{k-1}^g) g_k^k(H_k^g, \overline{A}_k^g)  \\
       & + \sum\limits_{j=1}^{k-1}I(\overline{A}_{j-1}^{g+} = \overline{A}_{j-1}^g)I({A}_{j}^{g+} \neq {A}_{j}^g) g_k^{j}(H_k^g, \overline{A}_{j}^g)\\
       &\text{for some }  g_k^j : \mathcal{H}_k \times \{0,1\}^j \to \{0,1\}, j\leq k, k = 1,\ldots, K\}. 
    \end{align*}

   The optimal initiation regime $g^{\textbf{init}}$ is given by
    \begin{align*}
        g^{\textbf{init}} &:= \argmax_{g \in \mathcal{G}^{\textbf{init}}} \mathbb{E}[R^{g}].
    \end{align*}
\end{definition}

\begin{remark}
    For $g \in \mathcal{G}^{\textbf{init}}$, let
    \begin{align*}
        T^{g} := \min\left(K + 1, \min\{k = 1, \ldots, K : A_k^{g+} \neq A_k^{g}\}\right)
    \end{align*}
    be the time at which a unit deviates from the observed regime and aligns with the $H$-optimal regime conditional on the observed past. The definition of $\mathcal{G}^{\textbf{init}}$ implies that $T^{g}$ is a random variable that depends on each unit's history and responses to $g$; in particular, units can deviate from the observed regime at different time points. This follows from the observation that the first time $j$ at which $A_j^{g+} \neq A_j^g$ depends on $g_1(H_1, A_1), \ldots, g_j(H_j^g, \overline{A}_j^g)$, which varies with $H_j^g$ and $\overline{A}_j^g$. Let $\mathcal{T}$ be the set of $\{1, \ldots, K+1\}$-valued stopping times with respect to the filtration $\mathcal{F}_k := \sigma(H_k^g, \overline{A}_k^g)$, $k = 1, \ldots, K$, that is, $T \in \mathcal{T}$ if the event $\{T = k\}$ is determined by $(H_k^g, \overline{A}_k^g)$ for every $k \leq K$. By the above, $T^g \in \mathcal{T}$ for every $g \in \mathcal{G}^{\textbf{init}}$, and, for $T \in \mathcal{T}$, $g^{\textbf{sup-obs}_T}$ denotes the regime of Definition
\ref{def: superopt initiation regimes} with $j$ replaced by $T$ pathwise.
    \label{rem: adaptive switch time}
\end{remark}

The optimal initiation regime performs at least as well as the $H$-optimal and observed regimes as it optimizes over a larger class of regimes; that is, $\mathcal{G}^{\textbf{init}}$ contains the observed regime $g^{\textbf{obs}}$ and the $H$-optimal regime $g^{\textbf{opt}}$:

\begin{proposition}
    The optimal initiation regime performs at least as well as the $H$-optimal and observed regimes,
    \begin{align*}
        \mathbb{E}[R^{g^{\textbf{init}}}] \geq \max\left(\mathbb{E}[R^{g^{\textbf{obs}}}], \mathbb{E}[R^{g^{\textbf{opt}}}]\right),
    \end{align*}
    and there exist distributions for which the inequality is strict.
    \label{prop: ssw > max(obs,opt)}
\end{proposition}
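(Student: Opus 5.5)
The proof has two parts: a containment argument for the weak inequality, and an explicit two-period (or even one-period, $K=1$) distribution for strictness.

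For the weak inequality, I will show that both $g^{\textbf{obs}}$ and $g^{\textbf{opt}}$ belong to $\mathcal{G}^{\textbf{init}}$. Since $g^{\textbf{init}}$ maximizes $\mathbb{E}[R^g]$ over $\mathcal{G}^{\textbf{init}}$, the bound then follows at once.

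For $g^{\textbf{obs}}$, I choose $g_k^k(H_k^g,\overline{A}_k^g) = A_k^g$ for every $k$, and take the other $g_k^j$ arbitrary, say $0$. I then argue by induction on $k$ that $A_k^{g+} = A_k^g$ almost surely. At $k=1$, $g_1 = g_1^1 = A_1^g$. If $\overline{A}_{k-1}^{g+} = \overline{A}_{k-1}^g$, then the indicator multiplying $g_k^k$ equals one and every term in the sum vanishes, so $g_k = A_k^g$. This reproduces Definition \ref{def: obs regime}.

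For $g^{\textbf{opt}}$, I take $g_k^j(H_k^g,\overline{A}_j^g) := g^{\textbf{opt}}_k(H_k^g)$ for all $j \le k$, ignoring the treatment arguments. The indicators $I(\overline{A}_{k-1}^{g+} = \overline{A}_{k-1}^g)$ and $I(\overline{A}_{j-1}^{g+} = \overline{A}_{j-1}^g)I(A_j^{g+} \neq A_j^g)$, $j = 1,\dots,k-1$, form a partition of the sample space: they correspond to the events $\{T^g \ge k\}$ and $\{T^g = j\}$. Hence exactly one indicator equals one, and $g_k = g^{\textbf{opt}}_k(H_k^g)$ pathwise. Checking this partition property carefully, with the conventions $\overline{A}_0 = \emptyset$ and $I(\overline{A}_0^{g+} = \overline{A}_0^g) = 1$, is the only bookkeeping step. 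Since $H_k^g$ then coincides with the history under $g^{\textbf{opt}}$, we get $\mathbb{E}[R^g] = \mathbb{E}[R^{g^{\textbf{opt}}}]$.

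For strictness, I take $K=1$ and binary $U$ with $P(U=1)=1/2$, and set $L_1$ to a constant so that $H_1$ carries no information. Let the natural treatment be $A_1 = U$ and the reward be $R_1^{a} = I(a = U)$. Then $\mathbb{E}[R_1^{a}] = 1/2$ for each $a$, so $\mathbb{E}[R^{g^{\textbf{opt}}}] = 1/2$, while $\mathbb{E}[R^{g^{\textbf{obs}}}] = 1$. This equalizes the two but does not yet separate $g^{\textbf{init}}$ from $g^{\textbf{obs}}$.

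To get a strict gain over both benchmarks, I modify the example so that the natural treatment is informative but imperfect, with improvement coming from using $A_1$ as an input rather than following it. I would use the construction behind \citet[Proposition 3]{stensrud_optimal_2024}: let $A_1 = U$ and $R_1^{a} = I(a \neq U)$. Then the observed regime yields $0$, the $H$-optimal regime yields $1/2$, and the rule $g_1^1(H_1,A_1) = 1-A_1$, which lies in $\mathcal{G}^{\textbf{init}}$, yields $1$. This gives $\mathbb{E}[R^{g^{\textbf{init}}}] = 1 > 1/2 = \max(\cdot)$.

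Finally, I will check that the example satisfies Assumption \ref{ass: consistency}; this is immediate by defining $R_1 = R_1^{A_1}$. I see no real obstacle here, since the example is essentially a single-period instance of the point-treatment superoptimality result.
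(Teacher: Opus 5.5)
Your proof is correct. For the weak inequality you follow the same route as the paper, showing $g^{\textbf{obs}}, g^{\textbf{opt}} \in \mathcal{G}^{\textbf{init}}$. The paper only asserts this ``by definition'', whereas you verify it. For $g^{\textbf{obs}}$ you take $g_k^k = A_k^g$ and argue by induction. For $g^{\textbf{opt}}$ you note that the indicators encode the partition $\{T^g \geq k\}, \{T^g = 1\}, \ldots, \{T^g = k-1\}$. This is a useful addition.

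Where you differ is the strictness example. The paper uses a two-period mechanism with $U_k \sim \text{Bern}(0.5)$, $A_k \sim \text{Bern}(0.1+0.8U_k)$ and interaction rewards. There $g^{\textbf{init}}$ follows the natural treatment at time 1 and switches to $1-A_2$ at time 2. The resulting values are $15, 13.5, 12, 11$ for $g^{\textbf{init}}, g^{\textbf{osh}}, g^{\textbf{opt}}, g^{\textbf{obs}}$ respectively.

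Your one-period example ($A_1 = U$, $R_1^{a} = I(a \neq U)$, values $1 > 1/2 > 0$) is simpler and suffices for the proposition as stated. If strictness is wanted at a prescribed $K \geq 2$, append periods with $R_k \equiv 0$.

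What the paper's example buys is that it is genuinely longitudinal. With $K = 1$, $\mathcal{G}^{\textbf{init}}$ coincides with the point-treatment superoptimal class, and $g^{\textbf{osh}}$ is just the better of $g^{\textbf{obs}}$ and $g^{\textbf{opt}}$. Your example is therefore the point-treatment superoptimality result in disguise. The paper's example instead shows the gain arising at an interior switch time after deferring to the observed regime. It also strictly beats the optimal shifting regime, showing that using the natural value at the initiation point adds value beyond merely choosing when to switch.

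A minor wording point: your modified example is not ``informative but imperfect'', since the natural treatment is perfectly informative and always wrong. This does not affect the argument.
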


Proposition \ref{prop: ssw > max(obs,opt)} follows from the fact that the optimal initiation regime $g^{\textbf{init}}$ uses information on unmeasured confounders $U_1, \ldots, U_K$ from $(A_1, \ldots, A_{T^{g^{\textbf{init}}}})$ that is ignored by the $H$-optimal regime $g^{\textbf{opt}}$, which is only a function of measured confounders $L_1, \ldots, L_k$. In particular, if $\mathbb{E}[R^{g^{\textbf{init}}}] > \mathbb{E}[R^{g^{\textbf{opt}}}]$, we can deduce that there is unmeasured confounding, see Section \ref{sec: DGPs for initiation regimes} for additional discussion.


 Similar to $g^{\textbf{opt}}$, we can use modified Bellman equations to compute $g^{\textbf{init}}$, see Algorithm \ref{algo: superoptimal switching} in Appendix \ref{app: Bellman eqs}. The Bellman equations allow for a search over a small space of value functions in an efficient manner. 




\subsection{Longitudinal superoptimal regimes}
Proposition \ref{prop: ssw > max(obs,opt)} implies that $g^{\textbf{init}}$ outperforms $g^{\textbf{obs}}$ and $g^{\textbf{opt}}$ on average. However, there exist larger classes of regimes that leverage information on natural treatment values at every time. Consider
\begin{align*}
    \mathcal{G}^{\textbf{sup}} := &\{g = (g_1, \ldots, g_K) \text{ s.t. } g_k: \mathcal{H}_k \times \{0,1\}^k \to \{0,1\}, \\
    &\text{ where $g_k$ is a function of $H_k$ and $\overline{A}_k^g$ for every $k = 1, \ldots, K$} \}.
\end{align*}

Then, we can define the time-varying superoptimal regime as follows:
\begin{definition}[Superoptimal regime]
    The superoptimal regime $g^{\textbf{sup}}$ is given by 
    \begin{align*}
        g^{\textbf{sup}} := \argmax_{g \in \mathcal{G}^{\textbf{sup}}} \mathbb{E}[R^g].
    \end{align*}
    \label{def: superoptimal regime}
\end{definition}

The next proposition then follows immediately from $\mathcal{G}^{\textbf{init}} \subseteq \mathcal{G}^{\textbf{sup}}$.
\begin{proposition}
    The superoptimal regime performs at least as well as the initiation regime,
    \begin{align*}
        \mathbb{E}[R^{g^{\textbf{sup}}}] \geq \mathbb{E}[R^{g^{\textbf{init}}}].
    \end{align*}
    \label{prop: sup > init}
\end{proposition}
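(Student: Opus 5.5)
The plan is to show the class inclusion $\mathcal{G}^{\textbf{init}} \subseteq \mathcal{G}^{\textbf{sup}}$ and then compare the two maxima. Everything else is bookkeeping.

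First, I take an arbitrary $g \in \mathcal{G}^{\textbf{init}}$, with component functions $g_k^j : \mathcal{H}_k \times \{0,1\}^j \to \{0,1\}$ for $j \le k$. I need to check that each $g_k$ is a function of $(H_k^g, \overline{A}_k^g)$ alone. The expression for $g_k$ in the definition of $\mathcal{G}^{\textbf{init}}$ involves three kinds of terms:
\begin{itemize}
\item indicators of the form $I(\overline{A}_{j-1}^{g+} = \overline{A}_{j-1}^g)$ and $I(A_j^{g+} \neq A_j^g)$ for $j \le k-1$;
\item the natural values $\overline{A}_j^g$, which are coordinates of $\overline{A}_k^g$;
\item the components $g_k^j(H_k^g, \overline{A}_j^g)$.
\end{itemize}

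The only terms that need care are the assigned values $A_j^{g+}$ for $j<k$. I will argue by induction on $k$ that $A_j^{g+} = g_j(H_j^g, \overline{A}_j^g)$ is a measurable function of $(H_j^g, \overline{A}_j^g)$, and hence of $(H_k^g, \overline{A}_k^g)$, because histories are nested: $H_j^g$ is a subvector of $H_k^g$.
\begin{itemize}
\item \emph{Base case} ($k=1$): only the term $g_1^1(H_1, A_1^g)$ appears, so the claim holds.
\item \emph{Inductive step}: assume the claim for all $j<k$. Then every indicator in $g_k$ is a function of $(H_k^g, \overline{A}_k^g)$, and so is $g_k$ itself.
\end{itemize}
It follows that $g_k$ can be written as $\tilde g_k(H_k^g, \overline{A}_k^g)$ for some $\tilde g_k : \mathcal{H}_k \times \{0,1\}^k \to \{0,1\}$. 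The resulting regime $\tilde g$ coincides with $g$ pathwise, so $R^{\tilde g} = R^g$ and $g \in \mathcal{G}^{\textbf{sup}}$ up to this relabeling.

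Finally, because $\mathcal{G}^{\textbf{init}} \subseteq \mathcal{G}^{\textbf{sup}}$, the supremum over the larger class is at least the supremum over the smaller one:
\[
\mathbb{E}[R^{g^{\textbf{sup}}}] = \max_{g \in \mathcal{G}^{\textbf{sup}}} \mathbb{E}[R^g] \ge \max_{g \in \mathcal{G}^{\textbf{init}}} \mathbb{E}[R^g] = \mathbb{E}[R^{g^{\textbf{init}}}].
\]
If either argmax is not attained, the same argument applies with suprema in place of maxima.

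The main obstacle is the recursive unfolding of $A_j^{g+}$. One has to make sure that the ``feedback'' through the assigned treatments does not introduce dependence on anything outside $(H_k^g, \overline{A}_k^g)$, such as unobserved $U$. The induction above handles this, since each $g_j$ by construction only takes measured history and natural treatment values as inputs.
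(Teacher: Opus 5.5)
Your proposal is correct and follows the paper's argument: the paper states only that the result follows from $\mathcal{G}^{\textbf{init}} \subseteq \mathcal{G}^{\textbf{sup}}$ and compares the two maxima, and your induction, which shows that the assigned values $A_j^{g+}$ inside the indicators are functions of $(H_j^g, \overline{A}_j^g)$ and hence of $(H_k^g, \overline{A}_k^g)$, just spells out that inclusion. The induction is largely redundant, since the definition of $\mathcal{G}^{\textbf{init}}$ already requires $g_k : \mathcal{H}_k \times \{0,1\}^k \to \{0,1\}$, but it is harmless.
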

The following corollary follows immediately from Proposition \ref{prop: sup > init} and Proposition \ref{prop: ssw > max(obs,opt)}.
\begin{corollary}
    The optimal superoptimal regime performs at least as well the $H$-optimal and observed regimes,
    \begin{align*}
        \mathbb{E}[R^{g^{\textbf{sup}}}] \geq \max\left(\mathbb{E}[R^{g^{\textbf{obs}}}], \mathbb{E}[R^{g^{\textbf{opt}}}]\right).
    \end{align*}
\end{corollary}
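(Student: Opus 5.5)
The corollary is a chaining of two inequalities already established, so the plan is a two-line argument. First, Proposition \ref{prop: sup > init} gives
\begin{align*}
\mathbb{E}[R^{g^{\textbf{sup}}}] \geq \mathbb{E}[R^{g^{\textbf{init}}}].
\end{align*}
This proposition itself rests on the inclusion $\mathcal{G}^{\textbf{init}} \subseteq \mathcal{G}^{\textbf{sup}}$. Every $g \in \mathcal{G}^{\textbf{init}}$ has each $g_k$ defined as a function of $(H_k^g, \overline{A}_k^g)$: the indicators $I(\overline{A}_{j-1}^{g+} = \overline{A}_{j-1}^g)$ are themselves determined recursively by $g_1,\dots,g_{k-1}$ applied to the history and natural treatments. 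Hence $g$ satisfies the defining requirement of $\mathcal{G}^{\textbf{sup}}$, and a maximum over a larger set is at least as large.

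Second, Proposition \ref{prop: ssw > max(obs,opt)} gives
\begin{align*}
\mathbb{E}[R^{g^{\textbf{init}}}] \geq \max\left(\mathbb{E}[R^{g^{\textbf{obs}}}], \mathbb{E}[R^{g^{\textbf{opt}}}]\right).
\end{align*}
For completeness, I would recall why this holds. The observed regime is the member of $\mathcal{G}^{\textbf{init}}$ with $g_k^k(H_k, \overline{A}_k) = A_k$ for all $k$, which gives $T^g = K+1$. The $H$-optimal regime is the member with $g_1^1 = g^{\textbf{opt}}_1$ ignoring $A_1$, and $g_k^j = g_k^{\textbf{opt}}$ thereafter. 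The latter embedding needs a small check. If $g^{\textbf{opt}}_1(H_1) = A_1$ on some units, those units have not yet ``deviated,'' so $g_2^2$ must also be set to $g^{\textbf{opt}}_2$. The check is therefore that choosing every $g_k^j := g_k^{\textbf{opt}}$, for all $j \le k$, reproduces $g^{\textbf{opt}}$ pathwise regardless of which indicator is active. This holds because the indicators partition the sample space, so the sum collapses to $g^{\textbf{opt}}_k(H_k^g)$.

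Combining the two displays by transitivity yields the claimed inequality. The only point requiring care, and the nearest thing to an obstacle, is that the argmax regimes exist; since treatments are binary and $K$ is finite, the maxima are attained via the Bellman recursion, or one can phrase the argument with suprema. Beyond that, no further work is needed.
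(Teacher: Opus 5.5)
Your proposal is correct and follows the paper's argument: the corollary is obtained by chaining Proposition~\ref{prop: sup > init}, which rests on $\mathcal{G}^{\textbf{init}} \subseteq \mathcal{G}^{\textbf{sup}}$, with Proposition~\ref{prop: ssw > max(obs,opt)}, which rests on $g^{\textbf{obs}}, g^{\textbf{opt}} \in \mathcal{G}^{\textbf{init}}$. Your additional checks make explicit what the paper treats as immediate: that the first-deviation indicators partition the sample space, so setting every $g_k^j := g_k^{\textbf{opt}}$ reproduces $g^{\textbf{opt}}$, and that the assigned treatments $\overline{A}_{k-1}^{g+}$ are recursively functions of $(H_k^g, \overline{A}_{k-1}^g)$.
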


Although Proposition \ref{prop: sup > init} clarifies that the superoptimal regime $g^{\textbf{sup}}$ outperforms the initiation regime $g^{\textbf{init}}$, the regimes in the class $\mathcal{G}^{\textbf{sup}} \setminus \mathcal{G}^{\textbf{init}}$ depend on counterfactuals $A_k^g$, which often complicates identification, as we discuss in Section \ref{sec: init vs superopt}.

\section{Identification}
\label{sec: Identification}

Identifying, estimating, and implementing $g^{\textbf{init}}$ using data requires consideration of natural treatment values. Many currently used methods rely on sequentially randomized trial data where, unlike observational data, natural treatment values are not recorded. Here we will present sufficient conditions for identification of the regimes introduced in Section \ref{sec: Data structure} from different types of data structures. In particular, we will describe how the regimes can be identified from unconventional experiments that, in principle, can be conducted, aligning with an interventionist, ``single-world" causal inference paradigm  \citep{robins1986new,richardson2013single}. One such experiment assigns units to follow the natural treatment value regime until the unit is randomly selected to enter a conventional randomized trial; an experimental design tailored to identifying the initiation regimes that similarly follow natural treatment value regimes until initiation of a conventional $H$-optimal regime at a time $j$. Another experiment randomly assigns units to treatment, control, or a natural treatment value regime at each time point, and is particularly well suited to identifying the superoptimal regime. We describe these experimental designs and discuss identification properties in Section \ref{sec: data fusion} and Appendix \ref{app: larger trials}. 










In the remainder of the article, we will assume that interventions on treatments $A_k$ are well-defined, such that Assumption \ref{ass: consistency} holds.


We also invoke the usual positivity assumption:

\begin{assumption}[Positivity]
    For all $k = 1, \ldots, K$, $$P(A_k = 1 | H_k, \overline{A}_{k - 1}) \in (0,1)$$ with probability one.
    \label{ass: positivity}
\end{assumption}

At this stage, we do not impose any sequential exchangeability assumptions \citep{richardson2013single}, which hold in sequentially randomized trials, like SMARTs, by design, but can fail in observational data when there is unmeasured confounding. 

To simplify the notation, we define four functionals for $j = 1,\ldots,K$
\begin{align*}
    \gamma_j(\overline{a}_K, \overline{l}_j) &:= \mathbb{E}[R^{a_j,a_{j + 1}, \ldots, a_K} | A_j = 1-a_j, \overline{A}_{j-1} = \overline{a}_{j-1}, \overline{L}_j = \overline{l}_j]\\
    \omega_j(\overline{a}_K, \overline{l}_j) &:= \mathbb{E}[R^{a_j, \ldots, a_K }| \overline{A}_{j-1} = \overline{a}_{j-1}, \overline{L}_j = \overline{l}_j]\\
    \eta_j(\overline{a}_K, \overline{l}_j) &:= \mathbb{E}[R^{a_{j + 1}, \ldots, a_K}| A_j = a_j, \overline{A}_{j-1} = \overline{a}_{j-1}, \overline{L}_j = \overline{l}_j]\\
    \pi_j (\overline{a}_j, \overline{l}_j) &:= P(A_j = a_j | \overline{A}_{j-1} = \overline{a}_{j-1}, \overline{L}_j = \overline{l}_j).
\end{align*}

Then, the following lemma will be convenient when expressing identification results for $g^{\textbf{init}}$:
\begin{lemma}
    Under Assumptions \ref{ass: consistency} and \ref{ass: positivity}, for $j = 1, \ldots, K$,
    \begin{align*}
        \gamma_j(\overline{a}_K, \overline{l}_j) = \frac{\omega_j(\overline{a}_K, \overline{l}_j) - \eta_j(\overline{a}_K, \overline{l}_j)\pi_j(\overline{a}_j, \overline{l}_j)}{1-\pi_j(\overline{a}_j, \overline{l}_j)}.
    \end{align*}
    \label{lemma: lemma 1 analog}
\end{lemma}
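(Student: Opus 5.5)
The plan is to decompose $\omega_j$ by the law of total expectation over the binary variable $A_j$, conditional on the stratum $\{\overline{A}_{j-1} = \overline{a}_{j-1}, \overline{L}_j = \overline{l}_j\}$. I will then identify the two pieces as $\gamma_j$ and $\eta_j$, and solve the resulting linear equation for $\gamma_j$. The argument mirrors the point-treatment lemma of \citet{stensrud_optimal_2024}, carried out within each history stratum.

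Concretely, I will first write
\begin{align*}
\omega_j(\overline{a}_K, \overline{l}_j) = {}& \mathbb{E}[R^{a_j,\ldots,a_K} \mid A_j = a_j, \overline{A}_{j-1} = \overline{a}_{j-1}, \overline{L}_j = \overline{l}_j]\,\pi_j(\overline{a}_j, \overline{l}_j) \\
&+ \mathbb{E}[R^{a_j,\ldots,a_K} \mid A_j = 1-a_j, \overline{A}_{j-1} = \overline{a}_{j-1}, \overline{L}_j = \overline{l}_j]\,(1-\pi_j(\overline{a}_j, \overline{l}_j)).
\end{align*}
Both conditional expectations are well defined because Assumption \ref{ass: positivity} gives $\pi_j \in (0,1)$ on the relevant histories (here I use $H_j = \overline{L}_j$ together with past rewards, which are functions of the conditioning history). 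The second term is $\gamma_j(\overline{a}_K,\overline{l}_j)(1-\pi_j)$ by definition.

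For the first term, I will use Assumption \ref{ass: consistency} at time $j$: on the event $\{A_j = a_j\}$, the counterfactual under an intervention setting $A_j$ to $a_j$ and subsequently $(a_{j+1},\ldots,a_K)$ coincides with the counterfactual under setting only $(a_{j+1},\ldots,a_K)$. In other words, intervening on $A_j$ to the value it naturally takes is inert, so $R^{a_j,\ldots,a_K} = R^{a_{j+1},\ldots,a_K}$ on that event. This makes the first conditional expectation equal to $\eta_j(\overline{a}_K,\overline{l}_j)$. Substituting gives $\omega_j = \eta_j\pi_j + \gamma_j(1-\pi_j)$, and dividing by $1-\pi_j > 0$ yields the stated formula.

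The main obstacle is the consistency step. Assumption \ref{ass: consistency} is stated for a single time point in terms of factual versus counterfactual variables given $\overline{A}_k = \overline{a}_k$. What is needed is the nested version: $R^{a_j, a_{j+1},\ldots,a_K} = R^{a_{j+1},\ldots,a_K}$ whenever $A_j = a_j$, including the downstream variables $L_{k}, A_k, R_k$ for $k>j$. I would argue this by induction on $k = j, \ldots, K$, showing that all time-$k$ variables agree under the two interventions on $\{A_j = a_j\}$. The step at each $k$ applies the consistency property conditional on the shared history, which is the standard recursive consistency/``no effect of intervening on the natural value'' property implicit in the single-world framework \citep{richardson2013single}. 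Once that is granted, the remaining algebra is immediate.
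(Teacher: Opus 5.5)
Your proof is correct and is essentially the argument the paper relies on: the paper gives no separate proof, and the lemma is the stratum-wise version of Lemma 1 of \citet{stensrud_optimal_2024}. That argument is the law of total expectation over binary $A_j$ within $\{\overline{A}_{j-1}=\overline{a}_{j-1},\overline{L}_j=\overline{l}_j\}$, plus consistency to replace $\mathbb{E}[R^{a_j,\ldots,a_K}\mid A_j=a_j,\ldots]$ by $\eta_j$; your explicit handling of the nested consistency $R^{a_j,\ldots,a_K}=R^{a_{j+1},\ldots,a_K}$ on $\{A_j=a_j\}$ (not literally an instance of Assumption \ref{ass: consistency}, but standard in the single-world framework) is more careful than the paper. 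One small correction: $\pi_j\in(0,1)$ holds because $\pi_j$ is the conditional average over $\overline{R}_{j-1}$ of $P(A_j=a_j\mid H_j,\overline{A}_{j-1})\in(0,1)$, not because past rewards are functions of $(\overline{A}_{j-1},\overline{L}_j)$.
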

In the special case of a point-treatment setting, Lemma 1 is identical to a formula for the superoptimal regime studied in \citet{stensrud_optimal_2024}, and similar to formulas that are well-known in works on treatment effects on the treated \citep{robins2006comment, geneletti2011defining, shpitser2012effects, bareinboim2015bandits, dawid2021can}:

\begin{corollary}
    When $K = 1$, Lemma \ref{lemma: lemma 1 analog} reduces to
    \begin{align*}
        \mathbb{E}[R^{a_1} | A_1 = 1-a_1, L_1 = l_1] = \frac{\mathbb{E}[R^{a_1} | L_1 = l_1] - \mathbb{E}[R | A_1 = a_1, L_1 = l_1]P(A_1 = a_1 | L_1 = l_1)}{P(A_1 = 1-a_1 | L_1 = l_1)}.
    \end{align*}
    \label{cor: Lemma 1 implies lemma 1}
\end{corollary}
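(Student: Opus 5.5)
The plan is to obtain the corollary by specializing Lemma \ref{lemma: lemma 1 analog} to $K = 1$, $j = 1$, and then checking that each of the four functionals reduces to the corresponding quantity in the display. With $K=1$, the treatment history $\overline{a}_{j-1} = \overline{a}_0$ is empty, so all conditioning on $\overline{A}_{0}$ drops out. Likewise $\overline{l}_1 = l_1$, and $R = R_1$.

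Next I will identify each functional in this case.
\begin{itemize}
\item $\gamma_1(a_1, l_1) = \mathbb{E}[R^{a_1} \mid A_1 = 1-a_1, L_1 = l_1]$. This is the left-hand side.
\item $\omega_1(a_1,l_1) = \mathbb{E}[R^{a_1} \mid L_1 = l_1]$. This appears in the numerator.
\item $\pi_1(a_1,l_1) = P(A_1 = a_1 \mid L_1 = l_1)$. Hence $1 - \pi_1 = P(A_1 = 1-a_1 \mid L_1 = l_1)$, which gives the denominator.
\item $\eta_1(a_1,l_1) = \mathbb{E}[R^{a_2,\ldots,a_K} \mid A_1 = a_1, L_1 = l_1]$. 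Since $K=1$, the intervention $(a_2,\ldots,a_K)$ is empty, so $R^{a_2,\ldots,a_K}$ is the factual $R$. This gives $\eta_1 = \mathbb{E}[R \mid A_1 = a_1, L_1 = l_1]$.
\end{itemize}

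The only point needing care is this reading of the empty superscript as the factual outcome. As a cross-check, one can write $\mathbb{E}[R^{a_1} \mid A_1 = a_1, L_1 = l_1]$ and apply Assumption \ref{ass: consistency}, under which $R_1^{a_1} = R_1$ on $\{A_1 = a_1\}$; this gives the same value. Assumption \ref{ass: positivity} ensures $1-\pi_1 > 0$, so the ratio is well defined.

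Substituting these four expressions into Lemma \ref{lemma: lemma 1 analog} yields the displayed formula exactly.

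Should an independent verification be wanted, I would also note the direct argument, which is presumably how the lemma itself is proved. By the law of total expectation,
\[
\mathbb{E}[R^{a_1}\mid l_1] = \mathbb{E}[R^{a_1}\mid A_1=a_1,l_1]\,\pi_1 + \mathbb{E}[R^{a_1}\mid A_1=1-a_1,l_1]\,(1-\pi_1).
\]
Replacing the first term using consistency and then solving for the second conditional expectation reproduces the corollary.
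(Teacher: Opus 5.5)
Your proposal is correct and follows the paper's route: the corollary comes from setting $K=j=1$ in Lemma~\ref{lemma: lemma 1 analog}. Then $\gamma_1,\omega_1,\pi_1$ reduce to the displayed quantities, and $\eta_1$ becomes the factual mean $\mathbb{E}[R\mid A_1=a_1,L_1=l_1]$ because the intervention set is empty, as you note. Your law-of-total-expectation cross-check is the point-treatment argument of \citet{stensrud_optimal_2024} that the paper cites for this formula.
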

In Appendix \ref{app: Bellman eqs} we describe how $g^{\textbf{init}}_j$ is identified via Algorithm \ref{algo: superoptimal switching} in terms of $\gamma_j$ and $\eta_j$. 
Lemma \ref{lemma: lemma 1 analog} shows how classical designs, and observed data structures, can be insufficient to identify initiation regimes. In particular, conventional sequentially randomized trials only identify $\mathbb{E}[R^{a_1, \ldots, a_K} | H_k^{a_1, \ldots, a_{k-1}}]$ for $k = 1, \ldots, K$, whereas observational studies only identify $\mathbb{E}[R | \overline{A}_j, H_k]$ and $P(A_k = 1 | H_k, \overline{A}_j)$ for $j,k = 1, \ldots, K$ without additional identification assumptions. One remedy is to consider trials with staggered entries \citep{pollock1989survival,sun2021estimating, athey2022design,xiong_optimal_2024, wing2024designing}. These trials enroll individuals at different times, and subsequently randomize them to different treatments.

If treatments, covariates, and outcomes prior to trial enrollment are measured, staggered entry trials identify $\mathbb{E}[R^{a_{j+1}, \ldots, a_K} | H_k, \overline{A}_j = \overline{a}_j] = \eta_j(\overline{a}_K, \overline{l}_j)$ for some $j < k$ and $\overline{a}_j$. Hence, if staggered entry trials and available observational data include all possible sequential treatment combinations with positive probability such that Assumption \ref{ass: positivity} is verified, $g^{\textbf{init}}$ is identified. Furthermore, in the so-called nested trial design, where a subset of individuals in an observational cohort study are recruited to a randomized experiment \citep{dahabreh2019generalizing,dahabreh2019generalizing_biometrics}, $g^{\textbf{init}}$ can be identified without additional identification assumptions. Simpler trial designs that aim to identify $g^{\textbf{init}}$ can be deduced from these observations, which we describe in Section \ref{sec: data fusion}. In brief, as the $\pi_j$ are observed, Lemma \ref{lemma: lemma 1 analog} identifies $\gamma_j$ from observational studies when $\omega_j$ and $\eta_j$ are identified. Conditional on past natural treatment values and covariates, $\omega_j$ and $\eta_j$ are expectations of $R$ under longitudinal static regimes that can be identified with additional assumptions such as assumptions used for instrumental variable \citep{chen_estimating_2023} or proximal learning \citep{ying_proximal_2023} identification, or by staggered entry trials, which we will describe in Section \ref{sec: data fusion}.

\subsection{A note on initiation regimes versus superoptimal regimes}
\label{sec: init vs superopt}
As Proposition \ref{prop: sup > init} clarifies, the superoptimal regime $g^{\textbf{sup}}$ outperforms $g^{\textbf{init}}$. However, for $g \in\mathcal{G}^{\textbf{sup}} \setminus \mathcal{G}^{\textbf{init}}$, $g$ will depend on unobserved $A_k^g$ for some $k$. This implies that the superoptimal regime $g^{\textbf{sup}}$ is not necessarily identified under practically relevant data structures that identify  $g^{\textbf{init}}$; conceptually, identifying the distribution of natural treatment values after an intervention is not necessary for identification of $g^{\textbf{init}}$, analogously to the optimal regime $g^{\textbf{opt}}$. 

We explicitly describe relevant data structures in Section \ref{sec: data fusion} and Appendix \ref{app: larger trials}, respectively. In particular, we present an instrumental variable setting where the same set of assumptions can be used to identify $g^{\textbf{opt}}$ and $g^{\textbf{init}}$, but stronger assumptions are required for $g^{\textbf{sup}}$, see Appendices \ref{sec: IV identification for obs data} and \ref{app: sequential IV}.

\subsubsection{Practical motivation for $g^{\textbf{init}}$}
\label{sec: DGPs for initiation regimes}
   Suppose that early after treatment initiation, physicians make decisions based on information that is unavailable in the recorded history $H_k$. This information includes subtle symptoms, clinical appearance, frailty, patient preferences, adherence, early toxicity, informal reports from patients, or other features that are observed by the physician but not captured in the database \citep{hamerman1999toward, verghese2018computer,matheny2019artificial}. If these factors predict both treatment decisions and subsequent outcomes, then an $H$-optimal regime can perform poorly when it replaces observed treatment decisions during this early period.

   Later in follow-up, response to treatment, toxicity, adherence and biomarker trajectories are recorded, so $H_k$ captures more of the information the physician uses and the $H$-optimal regime can improve on observed care. An initiation regime that follows observed care early and switches to the $H$-optimal regime once sufficient additional information is added to $H_k$ therefore improves on both the observed and conventional $H$-optimal regimes.

We emphasize that the variables $(A_1,\ldots,A_K)$ can  correspond to different clinical decisions, such as initiating first-line treatment, switching treatment, adding rescue therapy, or starting maintenance therapy, similarly to the different treatment options at the first and second time points in the motivating example in Figure \ref{fig: Batorsky trial illustration}. 

Regimes with the structure described here appear in practice. \citet{poller1998multicentre} randomized patients to anticoagulant treatment dosing by physicians or by a computer algorithm. For $79\%$ of patients in the algorithmic arm, dosage was chosen by physicians during the first three weeks, until the computer arm had accumulated sufficient history to be trusted by the clinicians, so the treatment regime of patients in the computer arm resembled an initiation regime. The algorithmic arm better targeted International Normalized Ratios (INRs), the outcome of interest for the study, than the control arm with standard care. In childhood leukemia, a protocol-based induction chemotherapy phase, where physicians decide on supportive care and dose modifications using information that is largely unrecorded, is the standard way of assigning treatment initially. Minimal Residual Disease (MRD), which can only be measured at the end of induction, is the strongest recorded predictor of relapse \citep{borowitz2008clinical} and post-remission therapy is therefore assigned by classifying patients into MRD risk groups \citep{vora2013treatment,vora2014augmented}. This sequence of two treatment phases has the form of an initiation regime with $T^g$ at the end of induction, when MRD is measured. The estimated switch time $T^{g^{\textbf{init}}}$ can be used to check whether the usefulness of clinical expertise indeed decreases after MRD is measured.  Another motivation is \citet{guo2015measurement}, who compared algorithmic and physician treatments of major depression and found higher remission under the algorithmic regime, but did not consider regimes that combine the two.


\subsection{Study designs}
\label{sec: data fusion}



We consider two data structures corresponding to data from different types of study designs. We illustrate the data structures using trees with branches representing treatment assignments. In particular, each leaf corresponds to an arm of the trial; hence the total number of leaves represents the complexity of the trial. 

Many of these designs are non-standard but have nevertheless been implemented in practice. We focus on them here to clarify the interpretation of $g^{\textbf{init}}$ and its identification conditions, and also to inspire future trial designs.

    \subsubsection{Observational data + Conventional Sequentially Randomized Trial} 
    Suppose we have data from both a sequentially randomized experiment and an observational study drawn from the same superpopulation. This is an example of a ``data fusion" setting \citep{zhang2019near, batorsky2024integrating, joshi2024towards}, which we call Data Structure \ref{algo: obs + trial }, represented in Figure \ref{fig: obs + trial}. In particular, such data can be obtained from certain types of Patient Preference Trials (PPTs) \citep{rucker1989two,long2008causal,knox2019design}, which randomly assign patients at baseline to follow their preferences -- analogously to an observational study -- or to be assigned treatments through randomization.\footnote{If patients are randomized at each time point to follow their preference or enter a sequentially randomized trial, as in staggered entry trials, this corresponds to Data Structure \ref{algo: randomized trial obs} in Section \ref{sec: sequ entry rand trial}. If patients are randomized at each time point to either follow their preference or receive treatments this corresponds to Data Structure \ref{algo: three choice trial } in Appendix \ref{sec: sequ 3-arm trial}.} Although data following Data Structure \ref{algo: obs + trial } are sometimes readily available, the observational and sequentially randomized data need to satisfy some assumptions to be used in combination \citep{graham2026towards}. In particular, the observational and trial data could be conceived as draws from the same superpopulation, where covariates $L_k$ and rewards $R_k$ are recorded. If observational data and trial data originate from different populations, appropriate generalization methods need to be used to identify the value functions of $g^{\textbf{opt}}$ on the target population \citep{rothwell2005external, bareinboim2016causal,brantner2023methods,shi2023data}.
  \begin{algorithm}
    \spacingset{1}
    \SetAlgoRefName{1}
    \SetAlgorithmName{Data Structure}
\;
    
    \For{$k = 1,\ldots, K$}{
    
    \For{each individual in trial arm}{
    Record covariate value $L_k^g$\;
    Randomize individual to receive treatment $A_k^{g+} = 1$ with fixed and known probability $p(H_k^g) \in (0,1)$\;
    Record reward $R_k^g$\;
    }
    \For{each individual in observational arm}{
    Record $(L_k, A_k, R_k)$\;

    }
    }
    \caption{Observational data + Trial.}
    \label{algo: obs + trial }
    \end{algorithm}
    
    \begin{proposition}
         Under Assumptions \ref{ass: consistency} and \ref{ass: positivity}, Data Structure \ref{algo: obs + trial } identifies the $H$-optimal regime $g^{\textbf{opt}}$ and the observed regime $g^{\textbf{obs}}$, but not the optimal initiation regime $g^{\textbf{init}}$ or the superoptimal regime $g^{\textbf{sup}}$. 
        \label{prop: usual data fusion impossibility}
    \end{proposition}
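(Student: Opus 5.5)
The proof has a positive half and a negative half. For the positive half I use the trial arm for $g^{\textbf{opt}}$ and the observational arm for $g^{\textbf{obs}}$. For the negative half I build two laws that agree on everything Data Structure \ref{algo: obs + trial } reveals but give different $g^{\textbf{init}}$ and $g^{\textbf{sup}}$.

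\emph{Identification of $g^{\textbf{opt}}$ and $g^{\textbf{obs}}$.} In the trial arm, $A_k^{g+}$ is randomized with known probability $p(H_k^g)\in(0,1)$ given the measured past. So sequential exchangeability holds by design, and every branch has positive probability. Then $\mathbb{E}[R_k^{\overline{a}_k}\mid H_k^{\overline{a}_{k-1}}=h_k]$ and the laws of $H_{k+1}^{\overline{a}_k}$ given $H_k^{\overline{a}_{k-1}}$ are identified by the corresponding conditional means and laws in the trial arm. Plugging these into the Bellman recursion of Theorem \ref{thm: opt Bellman eqs} and recursing backwards from $K$ identifies $g^{\textbf{opt}}$ and its value. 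For $g^{\textbf{obs}}$, Assumption \ref{ass: consistency} gives $R^{g^{\textbf{obs}}}=R$, so $\mathbb{E}[R^{g^{\textbf{obs}}}]=\mathbb{E}[R]$ in the observational arm, drawn from the same superpopulation.

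\emph{Non-identification.} The observed data law is the pair (observational law of $(\overline L_K,\overline A_K,\overline R_K)$, trial law). It suffices to take $K=1$, since any $K\geq 1$ embeds it, for example by letting later treatments have no effect. By Corollary \ref{cor: Lemma 1 implies lemma 1}, $g^{\textbf{init}}$ with $K=1$ is $g^{\textbf{sup}}$, which depends on $\mathbb{E}[R^{a}\mid A=a',L]$ for $a\neq a'$, and $g^{\textbf{init}}=g^{\textbf{sup}}$ there. These are precisely the cross-world treatment-on-the-treated quantities. The trial gives $\mathbb{E}[R^a\mid L]$, and the observational data give $\mathbb{E}[R\mid A=a,L]$ and $\pi(a,L)$. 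Lemma \ref{lemma: lemma 1 analog} then identifies $\gamma$, so in the one-stage case the pair would in fact identify it. This is the main obstacle: my naive counterexample fails, and the claim must rest on the longitudinal structure.

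For $K\ge 2$, $\gamma_j$ needs $\eta_j=\mathbb{E}[R^{a_{j+1},\dots,a_K}\mid \overline A_j=\overline a_j,\overline L_j]$, a regime that follows natural treatment through $j$ and then intervenes. Neither arm provides this. The trial intervenes from time $1$, and the observational arm never intervenes. So I would take $K=2$ with no measured covariates. An unmeasured $U$ drives $A_1$, $A_2$ and $R$. I would build two models that agree on the joint law of $(A_1,A_2,R)$ and on all $\mathbb{E}[R^{a_1,a_2}]$, but differ in $\mathbb{E}[R^{a_2}\mid A_1=a_1]$.

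Concretely, let $U\in\{0,1\}$ and $A_1=U$. Let $A_2$ depend on an independent $V$, and let $R=f(U,V,A_2)$. I would tune $f$ so that $\mathbb{E}[R^{a_1,a_2}]$ and the observational law match across the two models, while $\mathbb{E}[R^{a_2}\mid A_1=1]$ switches which $a_2$ is better. This makes $g^{\textbf{init}}_2$ following natural $A_1$ differ between the models, which implies different optimal values. Nonidentifiability of $g^{\textbf{sup}}$ follows the same way, or from its dependence on unobserved $A_2^{g}$ after intervention on $A_1$. The careful moment-matching in this construction is the step I expect to be delicate.
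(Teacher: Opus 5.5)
Your proposal is correct and rests on the same obstruction as the paper's proof, but it takes a more rigorous route.

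\textbf{How your route differs from the paper's.} The paper argues informally. By consistency, the quantity needed at $k>1$ is $\mathbb{E}[R_k^{a_k}\mid \overline{L}_k,\overline{R}_{k-1},\overline{A}_{k-1},A_k=a_k']$. By Lemma 1 of \citet{stensrud_optimal_2024}, identifying it is equivalent to identifying $\mathbb{E}[R_k^{a_k}\mid \overline{L}_k,\overline{R}_{k-1},\overline{A}_{k-1}]$. Since the trial does not record natural values and the observational arm never intervenes, the paper concludes this is not identified ``in general''. The positive half is left to \eqref{eq: id Bellman opt 1}--\eqref{eq: id Bellman opt 2}.

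You instead aim for two laws that agree on both arms but differ in the target, which is what non-identification actually requires. Your $K=1$ check shows why this matters. There the trial records no natural values either, yet Corollary \ref{cor: Lemma 1 implies lemma 1} combines the arms to identify $g^{\textbf{init}}=g^{\textbf{sup}}$. So the proposition needs $K\geq 2$; the paper assumes this tacitly by working with $k>1$. You should drop your opening sentence that $K=1$ suffices.

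\textbf{Completing the counterexample.} The step you flag as delicate is routine. Take:
\begin{itemize}
\item $U,V$ i.i.d.\ Bernoulli$(1/2)$;
\item $A_1=U$ and $A_2=V$;
\item $R_1\equiv 0$ and $R_2=f(U,V,A_2)$.
\end{itemize}
The observational arm reveals only the values $f(u,v,v)$. The trial arm reveals, for each $(a_1,a_2)$, the uniform law on the multiset $\{f(u,v,a_2)\}_{u,v}$. This whole law must agree across the two models, not just $\mathbb{E}[R^{a_1,a_2}]$ (unless $R$ is binary).

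Exchanging the unrevealed values $f(1,0,1)$ and $f(0,0,1)$ leaves both laws unchanged. It does change $\mathbb{E}[R^{a_2}\mid A_1=1,A_2=0]$ at $a_2=1$, which equals $f(1,0,1)$. This $\gamma$-type quantity is what drives $g_2^{\textbf{init}}$, not merely $\mathbb{E}[R^{a_2}\mid A_1=1]$. With $f(1,0,0)=5$, take $(f(1,0,1),f(0,0,1))=(10,0)$ in one model and $(0,10)$ in the other. Then the decision at $(A_1,A_2)=(1,0)$ flips.

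\textbf{Checking the branch is used.} Because $R$ does not depend on $a_1$, the following regime attains the oracle value $\mathbb{E}[\max_{a_2}f(U,V,a_2)]$ in both models: follow $A_1$ at time $1$, then choose $\argmax_{a_2}f(U,V,a_2)$. So the branch $(1,0)$ is on-path, and regimes setting $g_2(1,0)=1$ there are optimal in the first model but strictly suboptimal in the second. Since $A_2^{a_1}=V$, $g^{\textbf{sup}}$ flips in the same way.

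\textbf{One inference to drop.} Your claim that different $g_2^{\textbf{init}}$ ``implies different optimal values'' is false. If also $f(0,0,0)=5$, the oracle value is the same in both models. You do not need this claim, since the proposition asserts only non-identification of the regime.
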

    The proof of Proposition \ref{prop: usual data fusion impossibility} is in Appendix \ref{app: proofs}. The identification formula for $g^{\textbf{opt}}$ based on the Bellman equations was described by \citet{murphy2003optimal}:
    \begin{align}
        g_K^{\textbf{opt}}(h_K) &= \argmax_{a_K \in \{0,1\}} \mathbb{E}[R_K | H_K = h_K, \overline{A}_{K-1}^{g^{\textbf{opt}}+} = \overline{g}^{\textbf{opt}}_{K-1}(h_{K-1}), A_K^{g^{\textbf{opt}}+} = a_K] \label{eq: id Bellman opt 1}\\
        g_k^{\textbf{opt}}(h_k) &= \argmax_{a_k \in \{0,1\}} \mathbb{E}[R_k + \mathcal{V}^{\underline{g}_{k + 1}^{\textbf{opt}}}_{k+1}(H_{k+1}) | H_k = h_k, \overline{A}_{k-1}^{g^{\textbf{opt}}+} = \overline{g}^{\textbf{opt}}_{k-1}(h_{k-1}), A_k^{g^{\textbf{opt}}+} = a_k], \label{eq: id Bellman opt 2}
    \end{align}
    where $\overline{A}_K^{g^{\textbf{opt}}+}$ is observed in the trial arm and hence identified and $\mathcal{V}^{\underline{g}_{k + 1}^{\textbf{opt}}}_k(H_{k+1})$ is identified recursively by the trial data alone.

        Data Structure \ref{algo: obs + trial } only requires one additional arm compared to the conventional sequentially randomized trial. This additional arm corresponds to the observed regime, see Figure \ref{fig: obs + trial}. Data Structure \ref{algo: obs + trial } is the design used to generate data in the running example by \citet{batorsky2024integrating}, see Section \ref{sec: chronic back pain ex intro}, and is not sufficient to identify the optimal initiation regime by Proposition \ref{prop: usual data fusion impossibility}. However, next we show that these additional regimes can be identified with some simple modifications to the trial design.

        Under additional assumptions about the data generating process, for example instrumental variable assumptions, we can identify $g^{\textbf{init}}$ and even $g^{\textbf{sup}}$ from observational data only, as we describe explicitly in Appendices \ref{sec: IV identification for obs data} and \ref{app: sequential IV}. More broadly, in Proposition \ref{prop: initiation regime id} of Section \ref{app: general estimation result}, we explain that given identification of a set of counterfactual quantities, then $g^{\textbf{init}}$ is identified. This motivates the use of proximal inference \citep{ying_proximal_2023} or frontdoor identification \citep{pearl1993mediating} in potential future work.
    
    \begin{figure}
        \centering
        \begin{tikzpicture}[scale = .9]
            \node (start) at (0,0) {Trial entry};
            \node (A1=1) at (2,-2) {};
            \node (3DL) at (-3,-3.5) {$\cdots$};
            \node (3DR) at (3,-3.5) {$\cdots$};
            \node (3DML) at (-1,-3.5) {$\cdots$};
            \node (3DMR) at (1,-3.5) {$\cdots$};

            \draw[-] (start) -- node[rotate = 45, above, scale = .85] {$A_1^{g+} = 0$} (-2,-2);
            \draw[-] (start) -- node[rotate = 315, above, scale = .85] {$A_1^{g+} = 1$} (2,-2);
            \draw[-] (-2,-2) -- node[rotate = 55, above, scale = .85]{$A_2^{g+} = 0$} (3DL);
            \draw[-] (-2,-2) -- node[rotate = 305, above, scale = .85] {$A_2^{g+} = 1$} (3DML);
            \draw[-] (2,-2) -- node[rotate = 55, above, scale = .85]{$A_2^{g+} = 0$} (3DMR);
            \draw[-] (2,-2) -- node[rotate = 305, above, scale = .85] {$A_2^{g+} = 1$} (3DR);

            \node (start obs) at (10,0) {Observation start};
            \node (A1) at (10,-2) {Record $A_1$};
            \node (A2) at (10, -3.5) {$\cdots$};

            \draw[-] (start obs) -- (A1);
            \draw[-] (A1) -- (A2);
        \end{tikzpicture}
        \caption{Tree describing Data Structure \ref{algo: obs + trial }.}
        \label{fig: obs + trial}
    \end{figure}
    
    \subsubsection{Sequential entry randomized trial} 
    \label{sec: sequ entry rand trial}
   Consider a second design, represented in Figure \ref{fig: randomized trial obs + recorded nat val} as  Data Structure \ref{algo: randomized trial obs}, that randomizes units to enter a sequentially randomized trial or follow their natural treatment value. The following proposition follows immediately from the properties of Data Structure \ref{algo: randomized trial obs} and the definition of $g^{\textbf{init}}$.
    \begin{proposition}
    Under Assumptions \ref{ass: consistency} and \ref{ass: positivity}, data generated by Data Structure \ref{algo: randomized trial obs} can be used to identify $g^{\textbf{obs}}$, $g^{\textbf{opt}}$, $g^{\textbf{init}}$, but not $g^{\textbf{sup}}$.
    \label{prop: randomized trial obs id}
    \end{proposition}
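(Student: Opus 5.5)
The plan has two parts. First we show that Data Structure \ref{algo: randomized trial obs} identifies $g^{\textbf{obs}}$, $g^{\textbf{opt}}$ and $g^{\textbf{init}}$. Then we exhibit two laws that agree on everything this design reveals but yield different $g^{\textbf{sup}}$.

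In Data Structure \ref{algo: randomized trial obs}, each unit follows its natural treatment value until a randomly chosen time $j$. From $j$ on it enters a sequentially randomized trial, and the pre-entry history $(H_j, \overline{A}_{j-1})$ is recorded. Three consequences follow.

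\begin{enumerate}[label=(\roman*)]
\item The units never selected ($j=K+1$) form an observational arm. By Assumption \ref{ass: consistency}, this arm identifies the joint law of $(\overline{L}_K,\overline{A}_K,\overline{R}_K)$, hence $\mathbb{E}[R^{g^{\textbf{obs}}}]$ and every $\pi_j$.
\item The units entering at $j=1$ form a conventional sequentially randomized trial. Through the Bellman recursion (\ref{eq: id Bellman opt 1})--(\ref{eq: id Bellman opt 2}), they identify $g^{\textbf{opt}}$.
\item The units entering at time $j+1$ followed their natural values through $j$, so $\overline{A}_j$ was generated exactly as in the observational data. They were then randomized given the measured past. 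Hence, for every $\overline a_K$ and every later history, the conditional mean
\[
\mathbb{E}[R^{a_{j+1},\ldots,a_K}\mid \overline{A}_j=\overline a_j,\overline{L}_j=\overline l_j]=\eta_j(\overline a_K,\overline l_j)
\]
is identified, together with its analogues conditioning on later counterfactual histories. The same argument at entry time $j$ identifies $\omega_j$. Positivity (Assumption \ref{ass: positivity}) ensures each conditioning event has positive probability.
\end{enumerate}

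With $\omega_j,\eta_j,\pi_j$ in hand, Lemma \ref{lemma: lemma 1 analog} gives $\gamma_j$. Since $g^{\textbf{init}}$ is computed by Algorithm \ref{algo: superoptimal switching} in terms of $\gamma_j$ and $\eta_j$, it is identified. The step needing most care is the identification of the history-conditional versions of $\gamma_j$ and $\eta_j$ that enter the backward recursion of that algorithm, not only the versions conditioning on $\overline{L}_j$. The plan for this step:

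\begin{itemize}
\item Apply Lemma \ref{lemma: lemma 1 analog} conditionally on $H_k$ for $k>j$.
\item Note that, in the arm entering at $j$, randomization from $j$ onward is a known function of the recorded past. A sequential g-formula inside that arm therefore identifies each conditional counterfactual mean under static continuations.
\item Verify that Algorithm \ref{algo: superoptimal switching} only maximizes over such static continuations at each step, so no other counterfactual quantity is needed.
\end{itemize}

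For the negative claim about $g^{\textbf{sup}}$, we construct two data-generating laws with $K=2$ that agree on the observed-data law of every arm of the design but yield different $g^{\textbf{sup}}$.
\begin{itemize}
\item The key point is that $g^{\textbf{sup}}_2$ may use $A_2^{a_1}$ for $a_1\neq A_1$, that is, the natural value of $A_2$ after an intervention that overrode the natural $A_1$. No arm of the design records that natural value: after entry, $A_2$ is randomized, and before entry, $A_1$ was not intervened on.
\item Concretely, take an unmeasured $U$ affecting $A_2$ and $R$. In both laws, leave the joint law of $(A_1,A_2^{a_1},R^{a_1,a_2})$ unchanged whenever $a_1=A_1$, as well as the marginal laws that the trial arms reveal.
\item In law 1, set $A_2^{1-A_1}$ equal to a copy of $U$; in law 2, set it equal to independent noise. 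The observed data coincide, but $\mathbb{E}[R^{a_1,a_2}\mid A_1=1-a_1, A_2^{a_1}]$ differs, so the superoptimal second-stage rule differs.
\end{itemize}
Checking that the trial arms' marginals are unaffected is the main technical point of this counterexample. It holds because $A_2^{a_1}$ under intervention is never observed and the trial marginals do not depend on the coupling.
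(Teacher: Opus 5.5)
Your proposal is correct. Its positive half follows the paper's main route. The paper says the result follows from the design together with Proposition \ref{prop: initiation regime id}, whose proof applies the Lemma \ref{lemma: lemma 1 analog} inversion at the initiation time. You make the design step explicit: the never-entering arm gives $\pi_j$ and $g^{\textbf{obs}}$, entry at time $1$ is a SMART for $g^{\textbf{opt}}$, and entry at $j$ and at $j+1$ gives $\omega_j$ and $\eta_j$.

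The paper's alternative FRMCISTG argument reads identified regimes off root-to-leaf paths, which is quicker. However, in Data Structure \ref{algo: randomized trial obs} as written, the natural value at the entry time is not recorded. The $\gamma_j$ terms are therefore not paths of any arm and genuinely need the inversion you use.

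Where you go beyond the paper is the negative claim. The paper only asserts that $g^{\textbf{sup}}$ is ``not in general'' identified. Your two laws, differing only in the never-recorded cross-world value $A_2^{1-A_1}$, actually prove it.

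Two steps need tightening.

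First, suppose post-deviation rules use the overridden value $A_j$, as $\mathcal{G}^{\textbf{init}}$ allows. Then the recursion needs $\mathbb{E}[\,\cdot\mid H_k^{\overline{a}_{k-1}},\overline{A}_{j-1},A_j=1-a_j]$ for $k>j$. Applying Lemma \ref{lemma: lemma 1 analog} ``conditionally on $H_k$'' then involves $P(A_j=a_j\mid H_k^{\overline{a}_{k-1}},H_j,\overline{A}_{j-1})$. This is a propensity given a counterfactual future history; it is neither $\pi_j$ nor recorded in any arm. Do the inversion on laws instead:
\begin{itemize}
\item Take the law of $H_{k+1}^{\overline{a}_k}$ given $(H_j,\overline{A}_{j-1})$ from the $j$-entry arm.
\item Subtract $\pi_j$ times the law given $A_j=a_j$ from the $(j+1)$-entry arm, and divide by $1-\pi_j$. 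This gives the law of $H_{k+1}^{\overline{a}_k}$ given $(H_j,\overline{A}_{j-1},A_j=1-a_j)$.
\item Conditioning on $H_k^{\overline{a}_{k-1}}$ then follows by Bayes, as in Appendix \ref{app: Dirac identification}.
\end{itemize}
If you follow Algorithm \ref{algo: superoptimal switching} literally, continuations after a deviation at $j$ condition only on $\overline{A}_{j-1}$. In that case the plain $\gamma_j$ with outcome $R_j+\mathcal{V}_{P,j+1}$ at the observed history suffices.

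Second, a changed conditional mean does not by itself change the argmax. Also, the stage-2 rule after deviation matters only if $g^{\textbf{sup}}_1$ deviates with positive probability. A concrete instance:
\begin{itemize}
\item no covariates, $U\sim\mathrm{Bern}(0.6)$, and $A_1\sim\mathrm{Bern}(0.5)$ independent of $U$;
\item $A_2=U$, $R_1^{a_1}=2a_1$, and $R_2^{a_1,a_2}=(2U-1)(1-2a_2)$;
\item $A_2^{1-A_1}=U$ in law 1, and an independent Bernoulli in law 2.
\end{itemize}
All three arms have identical laws, and $g^{\textbf{sup}}_1\equiv 1$ in both. At $(A_1=0,A_2^{g}=0)$, however, the optimal stage-2 action is $1$ in law 1 and $0$ in law 2. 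The optimal values are $3$ and $2.6$. Taking $P(U=1)\neq 1/2$ avoids a tie in law 2, under which law 1's rule would also be optimal there.
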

    A proof of Proposition \ref{prop: randomized trial obs id} can be found in Appendix \ref{app: proofs}.
    Under Data Structure \ref{algo: randomized trial obs}, $g^{\textbf{opt}}$ is identified by the Bellman equations \eqref{eq: id Bellman opt 1} and \eqref{eq: id Bellman opt 2}. The $j$th superoptimal initiation regime is identified by 
    \begin{align*}
        &g_K^{\textbf{sup-obs}_j}(h_K, \overline{a}'_{j}) = \argmax_{a_K \in \{0,1\}} \\
        &\mathbb{E}[R_K | H_K = h_K, \overline{A}_{K-1}^{g^{\textbf{sup-obs}_j}+} = \overline{g}^{\textbf{sup-obs}_j}_{K-1}(h_{K-1}, \overline{a}'_{j}), A_K^{g^{\textbf{sup-obs}_j}+} = a_K], \\
        \\
        &g^{\textbf{sup-obs}_j}_k(h_k, \overline{a}'_{j}) = \argmax_{a_k \in \{0,1\}}\\
        &\mathbb{E}[R_k + \mathcal{V}^{\underline{g}^{\textbf{sup-obs}_j}_{k + 1}}_{k + 1}(H_{k+1}, \overline{a}'_{j}) | H_k = h_k, \overline{A}_{k-1}^{g^{\textbf{sup-obs}_j}+} = \overline{g}^{\textbf{sup-obs}_j}_{k-1}(h_{k-1}, \overline{a}'_{j}), A_k^{g^{\textbf{sup-obs}_j}+} = a_k],\\
        &\text{ for $j \leq k < K$,}\\
        \\
        &g^{\textbf{sup-obs}_j}_k(h_k, \overline{a}'_k)= a'_k \text{ for $k < j$},
    \end{align*}
    where $A_{k}^{g^{\textbf{sup-obs}_j}+}$ is identified by the randomized data for $k \geq j$ and by the natural treatment value $A_k$ for $k < j$, and $\mathcal{V}^{\underline{g}^{\textbf{sup-obs}_j}_{k + 1}}_k(H_{k+1}, \overline{a}'_{j})$ is identified recursively.

     Then, the initiation regime is identified by
    \begin{align}
        g^{\textbf{init}} = g^{\textbf{sup-obs}_{T^{\ast}}}, \qquad
        T^{\ast} := \argmax_{T \in \mathcal{T}} \mathbb{E}\big[R^{g^{\textbf{sup-obs}_{T}}}\big] = T^{g^{\textbf{init}}},
        \label{eq: init id formula}
    \end{align}
    where $\mathcal{T}$ is the set of stopping times of Remark \ref{rem: adaptive switch time}. The maximum in \eqref{eq: init id formula} is computed by backward induction, see Algorithm \ref{algo: superoptimal switching} in Appendix \ref{app: Bellman eqs}.

    Data Structure \ref{algo: randomized trial obs} represents the data-generating mechanisms of staggered entry trials \citep{pollock1989survival, sun2021estimating, athey2022design, wing2024designing, xiong_optimal_2024}.

    \begin{algorithm}
    \spacingset{1}
    \SetAlgoRefName{2}
    \SetAlgorithmName{Data Structure}
    \;
    
        \For{$k = 1, \ldots, K$}{
        \For{each individual not in a trial}{
        Randomize individual to be in $k$th trial with known and fixed probability $p(H_k^g, k) \in (0,1)$\;
        \If{individual not in trial}{
        Record $(L_k, A_k, R_k)$\;
        }
        }
        \For{each individual in a trial}{
        Record $L_k$\;
    Randomize individual to receive treatment $A_k^{g+} = 1$ with fixed and known probability $p_k(H_k^g) \in (0,1)$\;
    Record $R_k$\;
        }
        }
        \caption{Sequential entry randomized trial.}
        \label{algo: randomized trial obs}
    \end{algorithm}

    To illustrate the complexity of the trial, suppose that $L_k = \emptyset$ for all $k$. Then, Data Structure \ref{algo: randomized trial obs} has $$1 + \sum_{k = 1}^K 2^{K - k + 1} = 2^{K + 1} - 1$$ arms. This is twice the order of magnitude of the usual sequentially randomized trial, which has $2^K$ arms, but is in the order of magnitude of the class of initiation regimes, $O(2^{K + 1})$, and is smaller than the magnitude of the class of all regimes that use the natural treatment values $O(2^{2K})$. 

     Data Structure \ref{algo: randomized trial obs} describes staggered entry trials where natural treatment values prior to treatment randomization are recorded, such as nested trial designs, also often referred to as trials within cohort studies (TwiCs, \citet{relton2010rethinking}). TwiCs are used in practice \citep{nickolls2024randomised}, where a subset of individuals in a cohort study are randomized to receive an experimental treatment or not. For example, \citet{relton2010rethinking} conducted a trial where they recruited 856 women aged 45-64 and recorded their outcomes; some eligible women in this cohort that experienced frequent or severe menopausal hot flashes were subsequently randomly selected to be offered treatment. If previous treatment and covariate history were recorded, this design matches Data Structure \ref{algo: randomized trial obs}.

    \begin{figure}
        \centering
        \begin{tikzpicture}
            \node (start) at (0,0) {Start};
            \node (entry) at (-2,-2) {Trial entry (E)};
            \node (obs) at (2,-2) {No trial entry (NE)};
            \node (A1) at (2,-3.5) {Record $A_1$};
            \node (3DL) at (-3.5,-4.5) {$\cdots$};
            \node (3DR) at (-0.5,-4.5) {$\cdots$};
            \node (3DMR) at (-1.5,-4.5) {$\cdots$};
            \node (3DML) at (-2.5,-4.5) {$\cdots$};
            \node (3DAL) at (1.5,-4.5) {E};
            \node (3D1) at (1.5,-5) {$\cdots$};
            \node (3DAR) at (2.5, -4.5) {NE};
            \node (3D2) at (2.5,-5) {$\cdots$};

            \draw[-] (start) --  (entry);
            \draw[-] (start) -- (obs);
            \draw[-] (entry) -- node[rotate = 55, above, scale = 0.85] {$A_1^{g+} = 0$} (-3,-3.5);
            \draw[-] (entry) -- node[rotate = 305, above, scale = 0.85] {$A_1^{g+} = 1$} (-1,-3.5);
            \draw[-] (obs) -- (A1);
            \draw[-] (-3,-3.5) -- (3DL);
            \draw[-] (-3,-3.5) -- (3DML);
            \draw[-] (-1,-3.5) -- (3DMR);
            \draw[-] (-1,-3.5) -- (3DR);
            \draw[-] (A1) -- (3DAL);
            \draw[-] (A1) -- (3DAR);
        \end{tikzpicture}
        \caption{Tree describing Data Structure \ref{algo: randomized trial obs}.}
        \label{fig: randomized trial obs + recorded nat val}
    \end{figure}

\begin{remark}
We also describe two further designs, Data Structures \ref{algo: recorded nat value trial} and \ref{algo: three choice trial } in Appendix \ref{app: larger trials}, that identify both $g^{\textbf{init}}$ and $g^{\textbf{sup}}$, but require measurements of natural treatment values at every time point. Another common strategy to identify the $H$-optimal regime using observational or experimental data is to impose a Markov Decision Process (MDP) structure \citep{bellman_mdp_1957, howard1960dynamic,sutton1999reinforcement,wang_provably_2021, shi_off_policy_2024}. We consider identification assumptions for point identification of $g^{\textbf{init}}$ from observational data in Appendix \ref{app: other id ass}. In particular, we can use less restrictive identification assumptions than the ones used for classical MDPs; despite their frequent use, classical MDPs strongly restrict the class of possible data-generating mechanisms, see Appendix \ref{app: other id ass} for a discussion. We also consider partial identification methods in Appendix \ref{sec: Bounds}, and identification using Data Structures \ref{algo: obs + trial }-\ref{algo: three choice trial } with more than two treatments in Appendix \ref{app: multiple treatments}.
\end{remark}

\section{On estimation of expected outcomes under initiation regimes}
\label{app: general estimation result}
Here we give conditions for convergence of estimators of value function under the estimated regime. First, if the propensity score and outcome regressions can be estimated at $o_P(n^{-1/4})$ rates or if the propensity score is known, then under additional regularity conditions we obtain a $\sqrt{n}$-asymptotically normal estimator, see Proposition \ref{prop: estimation convergence}. Furthermore, in Section \ref{app: IF estimation}, we outline a strategy to leverage influence-function-based estimators, \citep{van1988estimating,newey1990semiparametric,robins1992recovery,robins1994estimation,rotnitzky1998semiparametric,van_der_laan2015targeted, luedtke2016super, michael2024instrumental}, to estimate the optimal initiation regime and its value functions. Then, we describe how to modify existing influence-function-based estimators of regimes and value functions. In some settings, these type of estimators can be used to estimate optimal regimes when the value function is only partially identified (bounded), see also \citet{laurendeau2024improved} for a discussion on the difference between regime and value function identification. As a specific example, Appendix \ref{app: chen and zhang time-varing IV estimation} presents algorithms for regime estimation in time-varying instrumental variable settings, including settings where the value function is not point identified. For completeness, we also include a convergence result for the one-step estimator in Proposition \ref{prop: IF asymptotic normality}.

\subsection{A strategy for estimation}
\label{app: estimation propositon}
Before considering estimation, we formalize a useful identification condition for $g^{\textbf{init}}$:
\begin{proposition}
    When $\mathbb{E}[R_k^{a_k} | \overline{L}_k, \overline{R}_{k-1}, \overline{A}_{k-1}]$ and value functions $\mathbb{E}[R_k^{\overline{a}_k} | H_k^{\overline{a}_{k-1}}, \overline{A}_j = \overline{A}_j^{g+} = \overline{a}_j]$,  $j < k$ are identified for all $j,k$, $\overline{a}_k \in \{0,1\}^k$, then $g^{\textbf{init}}$ is identified.
    \label{prop: initiation regime id}
\end{proposition}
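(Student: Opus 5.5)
We prove Proposition \ref{prop: initiation regime id} by showing that every quantity entering the backward-induction characterization of $g^{\textbf{init}}$ (Algorithm \ref{algo: superoptimal switching}) is a functional of the assumed identified quantities and the observed-data law. The approach has two layers, mirroring \eqref{eq: init id formula}. First, for a fixed switch time $j$ (and then for a stopping time $T\in\mathcal{T}$), we identify the $j$th superoptimal initiation regime $g^{\textbf{sup-obs}_j}$ and its value function. Second, we identify the optimal stopping rule $T^{\ast}$ by backward induction over $k=K,\ldots,1$. The comparison at each $k$ is between continuing to follow the natural treatment value and deviating now.

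\textbf{Step 1 (post-initiation part).} Fix $j$ and a natural history $\overline{a}'_j$. For $k\geq j$, the Bellman recursion for $g^{\textbf{sup}_j}$ has the same form as Theorem \ref{thm: opt Bellman eqs}, with $H_k$ replaced by $(H_k,\overline{A}_j=\overline{a}'_j)$. Its ingredients are $\mathbb{E}[R_k^{\overline{a}_k}\mid H_k^{\overline{a}_{k-1}},\overline{A}_j=\overline{A}_j^{g+}=\overline{a}_j]$, which are identified by hypothesis, together with the counterfactual laws of $H_{k+1}$ needed to integrate the next-stage value function. Both are obtained from the identified conditional reward functionals in the usual g-formula style. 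Here we use that the value functions are identified for all $k$ and all $\overline{a}_k$, so one can condition sequentially.

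One subtlety needs checking. If $j$ is the deviation time, we have $A_j^{g+}=1-A_j$, not $A_j$. The needed conditional $\mathbb{E}[R^{\ldots,1-a_j,\ldots}\mid A_j=a_j,\ldots]$ is a $\gamma_j$-type quantity. We obtain it from Lemma \ref{lemma: lemma 1 analog}, with $\omega_j$ given by the first hypothesized quantity $\mathbb{E}[R_k^{a_k}\mid \overline{L}_k,\overline{R}_{k-1},\overline{A}_{k-1}]$ (and its iterates), $\eta_j$ given by the second hypothesized quantity, and $\pi_j$ observed under Assumption \ref{ass: positivity}. This yields identified $Q$-functions at each $k\geq j$ for both the continuation $a_k=A_k$ and the deviation $a_k=1-A_k$. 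Hence $g^{\textbf{sup-obs}_j}$ and $\mathcal{V}_{P,k}^{g^{\textbf{sup-obs}_j}}$ are identified.

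\textbf{Step 2 (switch time).} Define $W_K$ as the better of the identified values at $K$ of following $A_K$ and of deviating to the optimal continuation. Recursively, $W_k(H_k,\overline{A}_k)$ is the maximum of two terms. The first is the value of deviating at $k$, identified in Step 1 via $\gamma_k$. The second is $\mathbb{E}[R_k+W_{k+1}(H_{k+1},\overline{A}_{k+1})\mid H_k,\overline{A}_k]$, which is an observational-data functional because under continuation $A_k^{g+}=A_k$ and Assumption \ref{ass: consistency} equates counterfactual and factual histories.

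We then show that the argmax of this recursion coincides with $T^{g^{\textbf{init}}}$ in Remark \ref{rem: adaptive switch time}. This is a standard optimal-stopping argument over $\mathcal{T}$: induct on $k$ and use the structure of $\mathcal{G}^{\textbf{init}}$, in which $g_k$ is determined by the first deviation index. Combining this with \eqref{eq: init id formula} gives identification of $g^{\textbf{init}}$.

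\textbf{Main obstacle.} The hardest step is the bookkeeping in Step 1 at the deviation time. We must verify that conditioning on $\overline{A}_j$ with $A_j^{g+}\neq A_j$ reduces exactly to the $\gamma$/$\omega$/$\eta$ decomposition of Lemma \ref{lemma: lemma 1 analog}, applied stagewise rather than only to the total reward $R$. This requires checking that the hypotheses cover the needed nested conditional expectations, including the counterfactual distribution of $H_{k+1}$ given a deviation. Our first attempt would be to apply Lemma \ref{lemma: lemma 1 analog} to each $R_k$ separately and sum.
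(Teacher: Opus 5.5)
Your overall architecture is the paper's. The paper also argues by backward induction: pre-initiation stages are read off the observational law, the initiation stage uses Lemma \ref{lemma: lemma 1 analog}, and post-initiation stages use the hypothesized value functions. Your Step 2 makes the choice of switch time explicit as an optimal-stopping recursion, which the paper's three-case induction leaves implicit, and that part is fine.

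There is a genuine gap in Step 1, at the point you call the main obstacle, and it is more than bookkeeping. Your post-deviation rules condition on the natural value $A_j$ at the deviation time. The recursion for $k>j$ therefore needs stagewise quantities such as $\mathbb{E}[R_k^{\overline{a}_k}\mid H_k^{\overline{a}_{k-1}},\overline{A}_{j-1}=\overline{a}_{j-1},A_j=1-a_j]$, where the conditioning natural value at $j$ disagrees with the assigned one. The hypotheses supply such quantities only on prefixes where $\overline{A}_j=\overline{A}_j^{g+}$, and Lemma \ref{lemma: lemma 1 analog} does not bridge the difference. That lemma is the law of total expectation over $A_j$ given only the time-$j$ history. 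Redoing it conditional on $H_k^{\overline{a}_{k-1}}$ for $k>j$ requires $P(A_j=a_j\mid H_k^{\overline{a}_{k-1}},\overline{A}_{j-1})$, the law of a natural treatment given a post-intervention counterfactual history. That quantity is neither observed nor assumed; it is the kind of object the paper needs the Dirac-delta argument of Appendix \ref{app: Dirac identification} for in the $g^{\textbf{sup}}$ case. Your fallback of applying the lemma to each $R_k$ and summing only returns the value of a \emph{fixed} post-deviation regime given the time-$j$ history. It does not give the stage-$k$ $Q$-functions your recursion uses, so the claim that Step 1 ``yields identified $Q$-functions at each $k\geq j$'' does not follow. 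Separately, the g-formula appeal for the counterfactual law of $H_{k+1}$ is unavailable, since no sequential exchangeability is assumed; the paper simply treats the post-initiation value functions as supplied by the hypothesis.

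The paper sidesteps this as follows. After initiation, it conditions rules and value functions only on the prefix where natural and assigned treatments agree, $\overline{A}_j=\overline{A}_j^{g^{\textbf{init}}+}$, with $j$ the last time the observed regime is followed (compare the $\overline{A}_{\max\{j:\overline{a}_j=\overline{A}_j\}}$ conditioning in Algorithm \ref{algo: superoptimal switching}). These are exactly the quantities the hypothesis provides. It then applies the lemma once, at the initiation stage $k$, to the pseudo-outcome $R_k^{a_k}+\mathcal{V}_{P,k+1}$ conditional on $(\overline{L}_k,\overline{R}_{k-1},\overline{A}_{k-1})$, where the lemma is valid.

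You have two ways to repair Step 1:
\begin{itemize}
\item Adopt the paper's convention, so that post-deviation rules and value functions condition only on the agreeing prefix.
\item Keep post-deviation rules that use $A_j$, but drop the stagewise recursion after the deviation. Plug the fixed natural value into a candidate post-deviation rule $\tilde g$. Identify its value on the deviating subpopulation as $(\omega-\pi_j\eta)/(1-\pi_j)$, where $\omega$ and $\eta$ are the analogues of $\omega_j$ and $\eta_j$ for the dynamic regime $\tilde g$ on agreeing prefixes. Then maximize this identified functional directly over the rule class. Its maximizer is generally not the $H$-optimal continuation, because it maximizes $\omega-\pi_j\eta$ rather than $\omega$.
\end{itemize}
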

It is sufficient for $\mathbb{E}[R_k^{\overline{a}_k} | H_k^{\overline{a}_{k-1}}, \overline{A}_j = \overline{A}_j^{g+} = \overline{a}_j]$, to be identified for all $j<k$, $\overline{a}_k \in \{0,1\}^k$ for Proposition \ref{prop: initiation regime id} to hold, as $\mathbb{E}[R_k^{a_k} | \overline{L}_k, \overline{R}_{k-1}, \overline{A}_{k-1}]$ is the special case where $j = k-1$. Then we can write $\mathbb{E}[R^{\overline{a}_k} | H_k^{\overline{a}_{k-1}}]$ as a function of $H_k^{\overline{a}_{k-1}}$, $A_k^{\overline{a}_{k-1}}$ and $R^{\overline{a}_{k-1}}$ for all $k = 1, \ldots, K$. 
   We say that the value functions $\mathbb{E}[R^{\overline{a}_k} | H_k^{\overline{a}_{k-1}}]$ are \emph{sequentially identified} when we know functions $f_k$ such that $\mathbb{E}[Y_k^{\overline{a}_k} | H_k^{\overline{a}_{k-1}}] = f_k(H_k^{\overline{a}_{k-1}}, A_k^{\overline{a}_{k-1}}, Y_k^{\overline{a}_{k-1}})$ for all $k = 1, \ldots, K$ and outcomes $Y_k = f_{k,Y}(H_k, \overline{A}_k, \overline{R}_k)$ for some real-valued $f_{k,Y}$ realized after $A_K$. Such sequential identification will be useful for $Y_k = R_kI(\overline{A}_k = \overline{a}_k')$ in Appendix \ref{app: Dirac identification}, where we discuss identification of $g^{\textbf{sup}}$, which generalizes $g^{\textbf{init}}$.

We denote estimators with hats. Suppose that we estimate $\mathbb{E}[R^g]$ by averaging over the possible values of the history of covariates, rewards, and treatments and estimating expectations of counterfactual outcomes that are identified by assumption in Proposition \ref{prop: initiation regime id}. If the propensity score and the outcome regression can be estimated at $o_P(n^{-1/4})$ rates, or the propensity score is known, then $\hat{\mathbb{E}}[R^g]$ is an asymptotically normal consistent estimator:

\begin{proposition}
For $k = 1, \ldots, K$, let $j_k := \max\{i < k : \overline{A}_i^{g+} = \overline{A}_i^g\}$,
$\mu_k^g := \mathbb{E}[R_k^g \mid H_k^g, \overline{A}_{j_k}^g]$, and let $f_k^g$ be a
function of the observed data with $P f_k^g = \mathbb{E}[R_k^g]$, or $f_k^g = \mu_k^g + \mathbb{IF}_k$ as in Section
\ref{app: IF estimation}. Let $\hat{f}_k^g$ be an estimator of $f_k^g$ and
$\hat{\mathbb{E}}(R^g) := \mathbb{P}_n \sum_{k = 1}^K \hat{f}_k^g$. If for all $k$,
\begin{enumerate}[label=(\roman*)]
  \item $\|\hat{f}_k^g - f_k^g\|_{L^2(P)} = o_P(1)$, \label{assL2: 1}
  \item $P(\hat{f}_k^g - f_k^g) = o_P(n^{-1/2})$, \label{assL2: 6}
  \item $\hat{f}_k^g$ is $P$-Donsker (or estimated in a separate sample), \label{assL2: 7}
  \item $|R_k^{a_k}| \leq C$ and $|\hat{f}_k^g| \leq C$ with probability one for some $C > 0$, \label{assL2: 3}
\end{enumerate}
then $\sqrt{n}(\hat{\mathbb{E}}(R^g) - \mathbb{E}(R^g)) \to^d
N(0, \sigma_g^2)$ with $\sigma_g^2 = \mathrm{Var}(\sum_{k = 1}^K f_k^g)$.
Furthermore, if
\begin{enumerate}[label=(\roman*)]\setcounter{enumi}{4}
  \item $\hat{g}$ is $P$-Donsker (or estimated in a separate sample), \label{assL2: 8}
  \item $\|\hat{g}_k - g_k\|^2_{L^2(P)} = P(\hat{g}_k \neq g_k) = o_P(n^{-1/2})$ for all $k$, \label{assL2: 9}
\end{enumerate}
then $\sqrt{n}\,(\hat{\mathbb{E}}(R^{\hat{g}}) - \mathbb{E}(R^g)) \to^d N(0, \sigma_g^2)$.
\label{prop: estimation convergence}
\end{proposition}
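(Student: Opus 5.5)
The plan is to use the standard empirical-process decomposition for each $k$ and then sum over $k$. Write
\begin{align*}
\mathbb{P}_n \hat f_k^g - P f_k^g = (\mathbb{P}_n - P) f_k^g + (\mathbb{P}_n - P)(\hat f_k^g - f_k^g) + P(\hat f_k^g - f_k^g).
\end{align*}
The first term is a centered i.i.d. average. Its summands are bounded, since \ref{assL2: 3} gives $|R_k^{a_k}|\le C$ and hence $|\mu_k^g|\le C$; in the influence-function case, positivity (Assumption \ref{ass: positivity}) keeps $f_k^g$ bounded provided the propensity is bounded away from zero, and I would add this as an implicit requirement. Summing over $k$ and applying the classical CLT gives $\sqrt n\,(\mathbb{P}_n-P)\sum_k f_k^g \to^d N(0,\sigma_g^2)$ with $\sigma_g^2=\mathrm{Var}(\sum_k f_k^g)$.

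The third term is $o_P(n^{-1/2})$ directly by \ref{assL2: 6}. For the second term I use \ref{assL2: 7}. Under the Donsker property, \ref{assL2: 1} together with asymptotic equicontinuity of the empirical process (van der Vaart and Wellner, Lemma 19.24) gives $\sqrt n(\mathbb{P}_n-P)(\hat f_k^g-f_k^g)=o_P(1)$. Under sample splitting, I condition on the training sample and apply Chebyshev: the conditional variance is at most $\|\hat f_k^g-f_k^g\|^2_{L^2(P)}/n = o_P(1/n)$. Combining the three terms, summing over the finitely many $k$, and using Slutsky yields the first claim. Since $P f_k^g=\mathbb{E}[R_k^g]$ and $R^g=\sum_k R_k^g$, the centering is $\mathbb{E}(R^g)$.

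For the estimated regime, decompose
\begin{align*}
\hat{\mathbb{E}}(R^{\hat g})-\mathbb{E}(R^g)=\bigl[\hat{\mathbb{E}}(R^{\hat g})-\hat{\mathbb{E}}(R^{g})\bigr]+\bigl[\hat{\mathbb{E}}(R^{g})-\mathbb{E}(R^g)\bigr].
\end{align*}
The second bracket is handled by the first part. For the first bracket, note that $\hat f_k^{\hat g}$ and $\hat f_k^{g}$ depend on the regime only through indicators of the form $I(\hat g_j = a_j)$ versus $I(g_j=a_j)$ for $j\le k$, multiplied by bounded factors. The difference is therefore bounded in absolute value by $2C'\sum_{j\le k} I(\hat g_j\ne g_j)$. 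Its $P$-mean is then bounded by $2C'\sum_j P(\hat g_j\ne g_j)=o_P(n^{-1/2})$ by \ref{assL2: 9}.

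The empirical-process part of the first bracket is $(\mathbb{P}_n-P)$ applied to this difference. It is $o_P(n^{-1/2})$ by the same Donsker or sample-splitting argument as before. This uses \ref{assL2: 8}: products and sums of bounded Donsker classes are Donsker. It also uses the fact that the $L^2$ norm of the difference is at most a constant times $\sum_j P(\hat g_j\ne g_j)^{1/2}\to 0$.

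The main obstacle is this final step: making precise that the plug-in value depends on the regime only through the indicators $I(\hat g_j\neq g_j)$ with bounded coefficients. This is especially delicate in the influence-function version, where inverse-propensity weights like $I(A_j^{g+}=g_j)/\pi_j$ appear, and I would use positivity to bound them. I would first verify the bound for the plug-in $\mu_k^g$ form via the $g$-formula, then extend it to $\mu_k^g+\mathbb{IF}_k$.
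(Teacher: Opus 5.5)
Your proposal is correct and is essentially the paper's proof. It uses the same three-term decomposition of $\mathbb{P}_n\hat f_k^g - Pf_k^g$, handled by the CLT, the Donsker/sample-splitting argument and condition (ii). It then splits $\hat{\mathbb{E}}(R^{\hat g})-\mathbb{E}(R^g)$ the same way, controlling $\hat{\mathbb{E}}(R^{\hat g})-\hat{\mathbb{E}}(R^{g})$ through an indicator that $\hat g_j\neq g_j$ for some $j\le k$ together with (v) and (vi).

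The step you flag as the main obstacle is handled in the paper more directly, without any propensity or positivity bounds. On the event $D_k=0$, where $D_k:=I(\hat g_j\neq g_j \text{ for some } j\le k)$, consistency gives $H_k^{\hat g}=H_k^g$ and $R_k^{\hat g}=R_k^g$, hence $\hat f_k^{\hat g}=\hat f_k^{g}$. Condition (iv) alone then yields $|\hat f_k^{\hat g}-\hat f_k^{g}|\le 2C D_k$. Similarly, (i) and (iv) already force $|f_k^g|\le C$ almost surely, so the extra positivity requirement you add for the CLT is unnecessary.
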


The value functions $\mathbb{E}[R_k^g |H_k^g, \overline{A}_{j}^g = \overline{A}_j^{g+}]$ can be identified based on sequential IVs, for example using the strategy suggested by \citet{chen_estimating_2023}. \citet{chen_estimating_2023} also gave estimators of the value functions $\mathbb{E}[R_k^g |H_k, \overline{A}_{j} = \overline{A}_j^{g+}]$ in sequential IV settings. In particular, influence-function based estimators such as the one-step estimator can be used to estimate $\mathbb{E}[R_k^g |H_k^g, \overline{A}_{j}^g = \overline{A}_j^{g+}]$ at the desired rates, see Section \ref{app: IF estimation} for details on the required assumptions.

\begin{remark}[On estimation of regimes]
    In practice, optimal regimes are usually unknown and need to be estimated from data. This is different from estimating the value function under a known regime $g$, $\mathbb{E}[R^g]$, it requires an additional set of assumptions on convergence of $\hat{g}$. In particular, let 
    \begin{align*}
        \mathcal{C}_k(H_k^g, \overline{A}_j) &:= \mathbb{E}[R_k^{\overline{a}_{k-1}, 1} + \mathcal{V}_{P,k+1}^{\overline{a}_{k-1}, 1, \underline{g}_{k+1}}(H_{k+1}^{\overline{a}_{k-1}, 1}) | H_k^{\overline{a}_{k-1}}, \overline{A}_j = \overline{A}_j^{g+} = \overline{a}_j]\\
        &- \mathbb{E}[R_k^{\overline{a}_{k-1}, 0} + \mathcal{V}_{P,k+1}^{\overline{a}_{k-1}, 0, \underline{g}_{k+1}}(H_{k+1}^{\overline{a}_{k-1}, 0}) | H_k^{\overline{a}_{k-1}}, \overline{A}_j = \overline{A}_j^{g+} = \overline{a}_j].
    \end{align*}
    Then, assuming $||\hat{g}-g||_{L^2(P)}^2 = P(\hat{g} \neq g) = o_P(n^{-1/2})$ implies that there is no exceptional law, that is, $P(\mathcal{C}_k(H_k^g, \overline{A}_j) = 0) = 0$. Furthermore, it requires that the signs of $\mathcal{C}_k(H_k^g, \overline{A}_j)$ are consistently estimated, for example by consistently estimating the contrast functions directly. 

    
\end{remark}

\begin{remark}[On the variance of estimators for different regimes]
    The observed regime and $g^{\textbf{init}}$ can coincide at several time-points. Usually estimation of value functions under the observed regime is easier than estimation under, say, conventional static and dynamic regimes. This gives intuition why estimators for the value function under $g^{\textbf{init}}$  often has lower variance than those for the value of $g^{\textbf{opt}}$ and $g^{\textbf{sup}}$. This is particularly interesting if we only partially identify counterfactual expectations, as the observed regime is always point identified, see \citet{laurendeau2024improved} for more arguments in the $K = 1$ setting.
    \label{rem: Lower variance of init regime}
\end{remark}

\begin{remark}[Augmenting the analysis with data from a randomized trial]
    Proposition \ref{prop: estimation convergence} clarifies how we can use randomized data to supplement our estimation procedure for $g^{\textbf{init}}$, e.g. if we have data in the form of Data Structure \ref{algo: obs + trial }. Indeed, data from a randomized trial can be used to estimate $\mathbb{E}[R^{a_1} | H_1]$ and $\mathbb{E}[R^{g} | H_1]$ for a dynamic regime $g$ that does not depend on natural treatment values.
\end{remark}

\subsection{Generic strategy for deriving influence-function based estimators}
\label{app: IF estimation}

Influence-function based estimators of $\mathbb{E}[R_k^{\overline{a}_k} | H_k^{\overline{a}_{k-1}}, \overline{A}_j = \overline{a}_j]$ for $0 \leq j < k$ include the one-step estimator
\begin{align*}
    \hat{\mathbb{E}}[R_k^{\overline{a}_k} | H_k^{\overline{a}_{k-1}}, \overline{A}_j = \overline{a}_j] + \hat{\mathbb{IF}}(\mathbb{E}[R_k^{\overline{a}_k} | H_k^{\overline{a}_{k-1}}, \overline{A}_j = \overline{a}_j]),
\end{align*}
where we use hats to denote estimators and $\mathbb{IF}(\mathbb{E}[R_k^{\overline{a}_k} | H_k^{\overline{a}_{k-1}}, \overline{A}_j = \overline{a}_j])$ is the influence function of 
$\mathbb{E}[R_k^{\overline{a}_k} | H_k^{\overline{a}_{k-1}}, \overline{A}_j = \overline{a}_j]$.

If $j = k-1$, as for the initiation time-point for initiation regimes, $$\mathbb{E}[R_k^{\overline{a}_k} | H_k^{\overline{a}_{k-1}}, \overline{A}_j = \overline{a}_j] = \mathbb{E}[R_k^{a_k} | H_k, \overline{A}_{k-1} = \overline{a}_{k-1}].$$ 

Then, 
\begin{align*}
    &\mathbb{IF}(\mathbb{E}[R_k^{a_k} | H_k, \overline{A}_{k-1} = \overline{a}_{k-1}, A_k = a_k'])\\
    &= \begin{cases}
        \mathbb{IF}(\mathbb{E}[R_k | H_k, \overline{A}_k = \overline{a}_k]) & \text{if $a_k' = a_k$,}\\
        \mathbb{IF}(\frac{\mathbb{E}[R_k^{a_k} | H_k, \overline{A}_{k-1} = \overline{a}_{k-1}] - \mathbb{E}[R_k | H_k, \overline{A}_k = \overline{a}_k]P(A_k = a_k | H_k, \overline{A}_{k-1} = \overline{a}_{k-1}) }{P(A_k = a_k' | H_k, \overline{A}_{k-1} = \overline{a}_{k-1})}) & \text{if $a_k' \neq a_k$,}
    \end{cases}\\
    &= \begin{cases}
        \frac{I(\overline{A}_k = \overline{a}_k)}{P(\overline{A}_k = \overline{a}_k \mid H_k)}(R_k - \mathbb{E}[R_k^{a_k} | H_k, \overline{A}_{k} = \overline{a}_{k}]) & \text{if $a_k' = a_k$,}\\
          \Psi(\mathbb{IF}(\mathbb{E}[R_k^{a_k} | H_k, \overline{A}_{k-1} = \overline{a}_{k-1}]), \overline{A}_k, H_k) & \text{if $a_k' \neq a_k$,}
    \end{cases}\\
\end{align*}
where $\Psi(\mathbb{IF}(\mathbb{E}[R_k^{a_k} | H_k, \overline{A}_{k-1} = \overline{a}_{k-1}]), \overline{A}_k, H_k)$ is derived from the analogous formula in \citet{stensrud_optimal_2024}, see Appendix \ref{app: IF formula}.

Conceptually, under suitable regularity conditions, influence-function based estimators can be used to consistently estimate the value functions at a fast rate, which is required for convergence of the estimators as described in Section \ref{app: estimation propositon}. More specifically, consider the following result on estimation of $\mathbb{E}[R^g]$:
\begin{proposition}
For $k = 1, \ldots, K$, let $\hat{f}_k^g$ be the one-step estimator of Section \ref{app: IF estimation} at
stage $k$ built from an outcome regression $\hat{\mu}_k$ where $\mu_k = \omega_k(H_k)$ and an estimated
propensity score $\hat{\pi}_k$ for the true propensity $\pi_k$.

Suppose that for all $k$,
\begin{enumerate}[label=(\roman*)]
  \item $\|\hat{\mu}_k - \mu_k\|_{L^2(P)}\, \|\hat{\pi}_k - \pi_k\|_{L^2(P)} = o_P(n^{-1/2})$,
  \item $\|\hat{f}_k^g - f_k^g\|_{L^2(P)} = o_P(1)$,
  \item $\hat{f}_k^g$ is $P$-Donsker (or estimated in a separate sample),
  \item $\hat{\pi}_k \in (\epsilon, 1-\epsilon)$, $|R_k| \leq C$ and $|\hat{f}_k^g| \leq C$ with probability one.
\end{enumerate}
Then $\sqrt{n}\,(\mathbb{P}_n \sum_{k=1}^K \hat{f}_k^g - \mathbb{E}[R^g]) \to^dN(0, \sigma_g^2)$,
$\sigma_g^2 = \mathrm{Var}(\sum_{k=1}^K f_k^g)$. 

If in addition conditions \ref{assL2: 8} and \ref{assL2: 9} of Proposition
\ref{prop: estimation convergence} hold, then
$\sqrt{n}\,(\mathbb{P}_n \sum_{k=1}^K \hat{f}_k^{\hat{g}} - \mathbb{E}[R^g]) \to^dN(0, \sigma_g^2)$.
\label{prop: IF asymptotic normality}
\end{proposition}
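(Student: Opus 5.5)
The plan is to reduce Proposition \ref{prop: IF asymptotic normality} to Proposition \ref{prop: estimation convergence}. Both propositions share the Donsker, $L^2$-consistency and boundedness conditions, namely (ii)--(iv) here and \ref{assL2: 1}, \ref{assL2: 7}, \ref{assL2: 3} there. So the only condition I still need is the bias condition \ref{assL2: 6}: $P(\hat f_k^g - f_k^g) = o_P(n^{-1/2})$ for each $k$.

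Here $f_k^g = \mu_k^g + \mathbb{IF}_k$ is the uncentered efficient influence function at stage $k$. Therefore $P f_k^g = \mathbb{E}[R_k^g]$, and $P(\hat f_k^g - f_k^g) = P\hat f_k^g - \mathbb{E}[R_k^g]$ is exactly the remainder term of the one-step estimator.

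I would compute this remainder explicitly, splitting by case as in Section \ref{app: IF estimation}.

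\emph{Case $a_k' = a_k$.} The functional is a regression evaluated under the observed treatment. The correction is $\frac{I(\overline A_k=\overline a_k)}{\hat\pi}(R_k - \hat\mu_k)$. Iterated expectations give the standard doubly robust remainder
\begin{align*}
P\Big[\Big(\frac{\pi_k}{\hat\pi_k}-1\Big)(\mu_k - \hat\mu_k)\Big].
\end{align*}
Using $\hat\pi_k \in (\epsilon,1-\epsilon)$ from (iv) and Cauchy--Schwarz, this is bounded by $\epsilon^{-1}\|\hat\pi_k-\pi_k\|_{L^2(P)}\|\hat\mu_k-\mu_k\|_{L^2(P)}$, which is $o_P(n^{-1/2})$ by (i).

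\emph{Case $a_k' \neq a_k$.} The target is $\gamma_k$ from Lemma \ref{lemma: lemma 1 analog}, namely $(\omega_k - \eta_k\pi_k)/(1-\pi_k)$, a smooth map of $(\omega_k,\eta_k,\pi_k)$. The corresponding $\Psi$ is obtained by the delta method applied to this map. I would write
\begin{align*}
\hat\gamma_k - \gamma_k = \frac{(\hat\omega_k-\omega_k) - (\hat\eta_k\hat\pi_k - \eta_k\pi_k)}{1-\hat\pi_k} + \frac{(\omega_k-\eta_k\pi_k)(\hat\pi_k-\pi_k)}{(1-\hat\pi_k)(1-\pi_k)}.
\end{align*}
I would then check that the linear parts are cancelled by the correction terms in $\Psi$. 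What remains should be products such as $(\hat\eta_k-\eta_k)(\hat\pi_k-\pi_k)$, $(\hat\pi_k-\pi_k)^2$ times bounded factors, and the doubly robust remainder of the $\omega_k$ one-step estimator.

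Sequential identification has a recursive structure: value functions at stage $k$ are built from stage-$(k+1)$ pseudo-outcomes. The remainders therefore telescope into a sum over stages, with each summand of product form, in the manner familiar from longitudinal AIPW. Boundedness (iv) keeps every multiplier bounded, and (i) then makes each product $o_P(n^{-1/2})$.

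This bias calculation is the main obstacle. I expect two difficulties. First, the non-observed-treatment case has a $(\hat\pi_k-\pi_k)^2$ term; I would interpret (i) as also controlling it, or note that it is dominated whenever $\|\hat\pi_k-\pi_k\| = o_P(n^{-1/4})$. Second, the cross-stage terms must be handled.

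With \ref{assL2: 6} in hand, the first conclusion follows from Proposition \ref{prop: estimation convergence}. The argument is the decomposition
\begin{align*}
\mathbb{P}_n\hat f - Pf = (\mathbb{P}_n-P)f + (\mathbb{P}_n-P)(\hat f-f) + P(\hat f-f),
\end{align*}
where the middle term is $o_P(n^{-1/2})$ by Donsker plus $L^2$ consistency (or by sample splitting), the last term is $o_P(n^{-1/2})$ by \ref{assL2: 6}, and the CLT applies to the first term. The variance is $\mathrm{Var}(\sum_k f_k^g)$.

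For the estimated regime $\hat g$, I would invoke the second part of Proposition \ref{prop: estimation convergence} directly under \ref{assL2: 8}--\ref{assL2: 9}. The difference $P(f^{\hat g}-f^g)$ is bounded by $C\sum_k P(\hat g_k\neq g_k) = o_P(n^{-1/2})$, using the boundedness of $f$.
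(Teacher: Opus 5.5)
There is a genuine gap, and you point to it yourself. Your delta-method expansion of the ratio $\gamma_k = (\omega_k - \eta_k\pi_k)/(1-\pi_k)$ leaves remainder terms of the form $(\hat\pi_k-\pi_k)^2$ and $(\hat\eta_k-\eta_k)(\hat\pi_k-\pi_k)$. Hypothesis (i) controls neither. It only bounds the product of the $\omega_k$-regression error and the propensity error, and the statement contains no rate condition on any $\hat\eta_k$. For example, with a parametric-rate $\hat\mu_k$ and $\|\hat\pi_k-\pi_k\|_{L^2(P)}\asymp n^{-1/5}$, (i) holds but $\|\hat\pi_k-\pi_k\|^2_{L^2(P)}$ is not $o_P(n^{-1/2})$. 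So ``interpreting (i) as also controlling it'' amounts to adding hypotheses. As a result, the bias condition \ref{assL2: 6} is not established under the stated assumptions, and your reduction to Proposition \ref{prop: estimation convergence} does not close.

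The missing idea is that you never need $\gamma_k$ pointwise. You only need its contribution to the marginal $\mathbb{E}[R_k^g]$, namely $\psi_k = \mathbb{E}[S_k\gamma_k(H_k)I(A_k=a_k')]$ with $S_k = I(\overline{A}_{k-1}=\overline{a}_{k-1})$. Two facts make this functional simple:
\begin{itemize}
\item $\mathbb{E}[I(A_k=a_k')\mid H_k,\overline{A}_{k-1}=\overline{a}_{k-1}] = 1-\pi_k$, so the denominator in Lemma \ref{lemma: lemma 1 analog} cancels.
\item $\mathbb{E}[S_k\eta_k\pi_k] = \mathbb{E}[S_kR_kI(A_k=a_k)]$.
\end{itemize}
Together these give
\[
\psi_k = \mathbb{E}\big[S_k\{\omega_k(H_k) - R_kI(A_k=a_k)\}\big].
\]
This is linear in the single nuisance $\omega_k$, and the subtracted term is fully observed, so it needs no regression. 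The statement hints at this: it says the estimator is ``built from an outcome regression $\hat\mu_k$ where $\mu_k=\omega_k(H_k)$'' and $\hat\pi_k$ only.

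The paper's Lemma \ref{lemma: one-step convergence result} accordingly uses $\tilde\psi_k = \mathbb{P}_n[S_k\{\hat\omega_k+\hat\varphi_k-R_kI(A_k=a_k)\}]$. Its bias is $P\{S_k(\hat\omega_k+\hat\varphi_k-\omega_k)\}$, which is exactly the doubly robust product that (i) controls. In Data Structure \ref{algo: randomized trial obs}, with known entry and randomization probabilities, this bias is identically zero. No $(\hat\pi_k-\pi_k)^2$ term and no $\hat\eta_k$ ever appears.

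Once you use this representation, the rest of your argument goes through as in the paper. That includes the decomposition $(\mathbb{P}_n-P)f+(\mathbb{P}_n-P)(\hat f-f)+P(\hat f-f)$, the Donsker and CLT steps, and your handling of $\hat g$ via Proposition \ref{prop: estimation convergence}.
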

The value functions $\mathbb{E}[R_k^g |H_k^g, \overline{A}_{j}^g = \overline{A}_j^{g+}]$ and the propensities $P(A_k^g  = 1 |H_k^g, \overline{A}_{j}^g = \overline{A}_j^{g+})$ that we need to estimate do not appear in conventional estimators of optimal regimes. However, analogous expressions where $\overline{A}_j^g$ is absent from the conditioning often appear in these estimators. Thus, if we regard the natural treatment values $\overline{A}_j^g$ as additional time-varying covariates, then the value functions we consider correspond to the usual value functions and propensities estimated in Q-Learning algorithms \citep{murphy2003optimal, chakraborty2013statistical, chen_estimating_2023}. The estimator given in Proposition \ref{prop: IF asymptotic normality} then corresponds to the time-varying one-step estimator.

We give the full expression of the influence function estimator in Appendix \ref{app: IF formula}.

\section{Example: Optimal treatment of chronic back pain}
\label{sec: Sim}
Consider the chronic back pain example described in Section \ref{sec: chronic back pain ex intro} and illustrated in Figure \ref{fig: Batorsky trial illustration}. We use the data generating mechanism from \citet{batorsky2024integrating}.  The data sample consists of $1'630$ units, $630$ from the synthetic randomized trial and $1'000$ from the synthetic observational study by \citet{batorsky2024integrating}, see Appendix \ref{app: Data generating mechs} for details. 

To combine the randomized and observational data, we use the data structures described in Section \ref{sec: data fusion} and correctly specified estimators for $g^{\textbf{opt}}$, $g^{\textbf{obs}}$ and $g^{\textbf{init}}$. Then, we estimate $\mathbb{E}[R^{\hat{g}^{\textbf{opt}}}],$ $\mathbb{E}[R^{\hat{g}^{\textbf{obs}}}]$, and $\mathbb{E}[R^{\hat{g}^{\textbf{init}}}]$ on a new test set of $20'000$ units. 

In this setting, the $H$-optimal regime performs only slightly better than the observed regime (Table \ref{tab: trial sim results}). The optimal initiation regime slightly outperforms the $H$-optimal regime and the observed regime. 

\begin{table}
    \centering
    \spacingset{1}
    \begin{tabular}{c | c | c | c}
         \textbf{Regime} & \textbf{DS \ref{algo: randomized trial obs}} & \textbf{DS \ref{algo: recorded nat value trial}} & \textbf{DS \ref{algo: three choice trial }}\\
         \hline
         $\hat{g}^{\textbf{obs}}$& 9.797 (9.747, 9.847) & 9.789 (9.739, 9.840) & 9.778 (9.728, 9.828)\\
         $\hat{g}^{\textbf{opt}}$& 9.875 (9.829, 9.921) & 9.857 (9.811, 9.903) & 9.867 (9.820, 9.914)\\
         $\hat{g}^{\textbf{init}}$& 10.099 (10.053, 10.144) & 10.194 (10.149, 10.239) & 10.171 (10.125, 10.217)\\
    \end{tabular}
    \caption{Expected values and $95\%$ confidence intervals of the regimes obtained by Data Structures (DS) \ref{algo: randomized trial obs}, \ref{algo: recorded nat value trial}, and \ref{algo: three choice trial } on the test set with $n_{\text{eval}} = 20'000$ samples in the chronic back pain example from \citet{batorsky2024integrating}. Higher values are better.}
    \label{tab: trial sim results}
\end{table}

With a slight modification of the original data mechanism, where only the interaction effects of $U$ and the treatments on the outcomes of \citet{batorsky2024integrating} are changed, the optimal initiation regime can considerably improve outcomes of the observed and $H$-optimal regimes, see Table \ref{tab: modified trial sim results} and also Appendix \ref{app: Data generating mechs} for more details. This illustrates that both doctors and the $H$-optimal regimes can be suboptimal, compared to the optimal initiation regime.

\begin{table}
    \centering
    \spacingset{1}
    \begin{tabular}{c | c | c | c}
         \textbf{Regime} & \textbf{DS \ref{algo: randomized trial obs}} & \textbf{DS \ref{algo: recorded nat value trial}} & \textbf{DS \ref{algo: three choice trial }}\\
         \hline
         $\hat{g}^{\textbf{obs}}$ & 8.636 (8.590, 8.682) & 8.613 (8.566, 8.660) & 8.613 (8.566, 8.659)\\
         $\hat{g}^{\textbf{opt}}$ & 10.553 (10.502, 10.603) & 10.563 (10.513, 10.612) & 10.472 (10.421, 10.523)\\
         $\hat{g}^{\textbf{init}}$ & 11.229 (11.176, 11.283) & 11.275 (11.221, 11.328) & 11.177 (11.122, 11.232)\\
    \end{tabular}
    \caption{Expected values and $95\%$ confidence intervals of the regimes obtained by Data Structures (DS) \ref{algo: randomized trial obs}, \ref{algo: recorded nat value trial}, and \ref{algo: three choice trial } on the test set with $n_{\text{eval}} = 20'000$ samples in the modified example of \citet{batorsky2024integrating}.  Higher values are better.}
    \label{tab: modified trial sim results}
\end{table}

We provide simulations based on the \citet{batorsky2024integrating} data-generating mechanism for $K = 3$ and $K = 4$ in Appendix \ref{app: Data generating mechs}. Furthermore, to illustrate how our results are relevant also when only observational data are available, we give a simulation in a sequential instrumental variable setting in Appendix \ref{sec: sequential IV data analysis}.


\section{Future directions}
We have considered inference on initiation regimes in an offline setting. There is substantial interest in optimal regimes in online settings too, in particular in the reinforcement learning literature \citep{silver2016mastering,zhang2019near}. For example, previous data on expert human decision makers can be used to evaluate possible actions, as implemented by Google DeepMind's AlphaGo \citep{silver2016mastering}. Alternatively, human decision-making can be directly included as a covariate during the learning and decision-making process, as used for autonomous driving agents \citep{wu2021human, wu2022prioritized, wu2023toward}. Optimally combining human and AI knowledge to achieve the best expected outcomes is a challenge in many areas where AI is now used to make decisions traditionally made by humans \citep{verghese2018computer, matheny2019artificial}.

\if1\blind
{
\subsection*{Acknowledgments}
The authors were supported by the Swiss National Science Foundation, grant $200021\_207436$.

\subsection*{Tool and computational resource disclosure} 
The authors used large language models (Claude Fable 5.1 and Opus 5 and ChatGPT 5.6 Sol) to read the statements and proofs critically and make edits in response, to suggest papers and arguments relevant to the use of initiation regimes in practice, and to make changes and suggestions for improved replication and performance of the code, which was initially written without the use of large language models. All suggestions, text and references produced with these tools were verified and edited by the authors.

\subsection*{Disclosure statement} The authors have no competing interests to declare.
} \fi

\clearpage
\appendix
\setcounter{figure}{0}\renewcommand{\thefigure}{S\arabic{figure}}
\setcounter{table}{0}\renewcommand{\thetable}{S\arabic{table}}
\setcounter{algocf}{0}\renewcommand{\thealgocf}{S\arabic{algocf}}
\setcounter{equation}{0}\renewcommand{\theequation}{S\arabic{equation}}


\section*{Notation Table}
\label{app: notation table}
See Table \ref{tab:notation}.

\begin{table}
    \spacingset{1}
    \centering
    \begin{tabular}{c | c}
         $K$ & Number of time (decision) points, \\
         $A_k$ & (Observed) treatment at time $k$, \\
         $L_k$ & (Observed) covariates at time $k$, \\
         $R_k$ & (Observed) reward at time $k$, \\
         $R$ & Utility function, taken to be $\sum_{k = 1}^K R_k$,\\
         $U_k$ & Unobserved confounders at time $k$, \\
         $Z_k$ & (Observed) instrumental variable at time $k$,\\
         $g$ &  (Arbitrary) regime, \\
         $g_k$ & Regime decision at time $k$, \\
         $A_k^g$ & Natural treatment value at time $k$ under regime $g$, \\
         $A_k^{g+}$ & Value of regime $g$ at time $k$, equal to $g_k$, \\
         
         $H_k^g$ & History of previous rewards and covariates\\ 
         & and current covariates  
          under regime $g$, equal to $(\overline{R}_{k-1}^g, \overline{L}_{k}^g)$,\\
         
         $\mathcal{H}_k$ & Domain of $H_k^g$,\\
         
         $\mathcal{G}$ & Set of regimes that depend only on history $H_k^g$ at each time $k$, \\
         &equal to $\{g = (g_1, \ldots, g_K) : g_k : \mathcal{H}_k \to \{0,1\}\}$,\\

         $\mathcal{G}_j$ & Set of regimes that depend only on history $H_k^g$,\\
         & conditional on an observed past $H_j = h_j, \overline{A}_{j-1} = \overline{a}_{j-1}'$,\\

         $\mathcal{G}_j^{\textbf{sup}}$ & Set of regimes that depend only on history $H_k^g$ and $A_j$,\\
         & conditional on an observed past $H_j = h_j, \overline{A}_{j-1} = \overline{a}_{j-1}'$,\\

         $\mathcal{G}^{\textbf{init}}$ & Set of initiation regimes,\\

         $\mathcal{G}^{\textbf{sup}}$ & Set of regimes that depend on history $H_k^g$ \\
         & and natural treatment values,\\
         $g^{\textbf{opt}_j}$ & $H$-optimal regimes conditional on an observed past,\\
         & $H_j = h_j, \overline{A}_{j-1} = \overline{a}_{j-1}'$,\\
         $g^{\textbf{sup}_j}$ & $H$-optimal regimes conditional on an observed past,\\
         & $H_j = h_j, \overline{A}_{j} = \overline{a}_{j}'$,\\
         $g^{\textbf{obs}_j}$ & $j$th initiation regime,\\
         $g^{\textbf{sup-obs}_j}$ & $j$th superoptimal initiation regime,\\
         $g^{\textbf{opt}}$ & $H$-optimal regime, equal to $\argmax_{g \in \mathcal{G}} \mathbb{E}[R^g]$,\\
         $g^{\textbf{obs}}$ & Observed regime,\\
         $g^{\textbf{osh}}$ & Optimal shifting regime,\\
         $g^{\textbf{init}}$ & Optimal initiation regime,\\
         $g^{\textbf{sup}}$ & Superoptimal regime,\\

         $\mathcal{V}_{P,k}^g$ & Value function of $g$ at time $k$,\\
         $Q_k(h_k, a_k)$ & Counterfactual value of intervention $A_k = a_k$ given history $H_k^g = h_k$,\\
         $Q_k^{\textbf{init}}(h_k, \overline{a}_k,a_k')$ & Counterfactual value of intervention $A_k = a_k'$, given \\
         &history $H_k^g = h_k$ and observed regime $\overline{A}_k = \overline{a}_k$,\\
         $\mathcal{C}_k$ & Contrast function,\\
         $\mathbb{P}_n$ & Empirical distribution operator,\\
         $P$ & Factual distribution operator,\\
         $(\mathbf{L}_k, \mathbf{U}_k)$ & Lower and upper bounds of value function at time $k$, respectively,\\
         $w_k$ & Weight function at time $k$ of decision criteria from \citet{cui2021individualized},\\
         $\mathbb{IF}$ & Influence function operator\\
    \end{tabular}
    \caption{Table of notation.}
    \label{tab:notation}
\end{table}

\section{A brief note on related literature}
\label{app: related_literature}
Most of the existing results in the literature on optimal regimes, particularly in longitudinal settings, rely on assumptions of no unmeasured confounding \citep{murphy2003optimal,robins2004optimal, chakraborty2014dynamic,schulte2014q_and_a,clifton2020qlearning}, which can be ensured by design in SMART trials. However, the recent growth of work on data fusion and data integration reflects that additional observational data often are available, which can be used in combination with trial data to improve inference \citep{zhang2019near,batorsky2024integrating}. Combining trial and observational data can have useful implications, such as increased statistical power \citep{batorsky2024integrating}, in particular as trials often lack power to detect heterogeneous causal effects \citep{rothwell2005external,bareinboim2016causal,frieden2017evidence,shi2023data,brantner2023methods}, and, as we will discuss, to identify regimes with improved guarantees. The new guarantees are also relevant to the increasing body of work on learning optimal regimes from observational data, where there might be unmeasured confounding \citep{kallus_confounding_2018, kallus2020confounding}. Some of these results rely on theory for Markov Decision Processes (MDPs), and some consider more general settings \citep{kallus2020confounding, tennenholtz2020off, wang_provably_2021, shi_off_policy_2024}. In particular, some results have used instrumental variables (IVs) or other proxies to identify causal effects in the presence of unmeasured confounding \citep{tennenholtz2020off, chen_estimating_2023}. Like the setting without unmeasured confounding, these works have considered regimes that are function of (time-varying) measured covariates but, to our knowledge, not natural treatment values.

\citet{wang2022blessing} also considered extensions of the superoptimal regime for longitudinal data, in particular longitudinal observational data, using proximal inference for identification. While we also study the longitudinal superoptimal regime in observational settings, we look at the related instrumental variables framework. Moreover, we also consider potential randomized study designs for identification and target the initiation regime, which we believe is more practically useful in practice, as it requires weaker identification assumptions while leveraging the natural treatment value.

Other authors have considered using natural treatment values for longitudinal regimes, see \citet{cao2024hr, maiti1}. Unlike \citet{cao2024hr}, our approach is fully nonparametric and we allow for arbitrary changes in the covariate distribution in response to the treatments, and we use natural treatment values not as a ``warm-start" to identify the optimal regime but as a way to get an improved initiation or superoptimal regime. Unlike \citet{maiti1}, we also focus on identification, in particular in randomized study designs, and estimation of the superoptimal regime and target the initiation regime.

\section{Baseline regimes}
\label{app: other baseline regimes}
The ``never-treat" regime $g^0$ \citep{luedtke2016optimal, cui2021individualized, qiu2022individualized} and the arbitrary baseline regime $g^b$, where $b$ stands for "baseline", from \citet{kallus_confounding_2018} are functions of the history $H_k$ only.\footnote{The "never-treat" regime is a trivial function of the history that assigns $g_k(H_k) = 0$ regardless of the history $H_k$.}

Hence, by the definition of $g^{\textbf{opt}}$,
\begin{align*}
    \max(\mathbb{E}[R^{g^0}], \mathbb{E}[R^{g^b}]) \leq \mathbb{E}[R^{g^{\textbf{opt}}}].
\end{align*}

Then, as $\mathbb{E}[R^{g^{\textbf{init}}}] \geq \mathbb{E}[R^{g^{\textbf{opt}}}]$, $g^{\textbf{init}}$ can do no-worse in expectation than $g^0$ and $g^b$, and hence is still preferable to other baseline regimes.

\section{Optimal shifting regime}
\label{app: osh regime}
\begin{definition}[Optimal shifting regime]
The optimal shifting regime is 
    \begin{align*}
        g^{\textbf{osh}} := \argmax_{ j = 1, \ldots, K + 1} \mathbb{E}[R^{g^{\textbf{obs}_j}}].
    \end{align*}
\end{definition}
A naive procedure to find $g^{\textbf{osh}}$ computes the value function of each $g^{\textbf{obs}_j}$ in $O(K-j)$ time and then selects the maximum $j$ in $O(K^2)$ total time, see Algorithm \ref{algo: naive optimal switching}.
\begin{algorithm}
    \KwData{Value functions $\mathbb{E}[R_k^{\overline{a}_k} | H_k^{\overline{a}_{k-1}}, \overline{A}_{\max\{j = 1, \ldots, k: \overline{a}_j = \overline{A}_j\}}]$.}
    \KwResult{Optimal shifting regime $g^{\textbf{osh}}$.}
    Compute $g^{\textbf{obs}_1}$ and $V_1^{g^{\textbf{obs}_1}}$ using the Bellman equations\;
    $g \leftarrow g^{\textbf{obs}_1}$\;
    \For{$j = 2,\ldots,K+1$}{
    Compute $g^{\textbf{obs}_j}$ and $V_1^{g^{\textbf{obs}_j}}$ using the Bellman equations\;
    \If{$V_1^{g^{\textbf{obs}_j}} > V_1^{g}$}{
    $g \leftarrow g^{\textbf{obs}_j}$\;
    }
    }
    Return $g$\;
    \caption{Naive algorithm to identify $g^{\textbf{osh}}$.}
    \label{algo: naive optimal switching}
\end{algorithm}

However, we can optimize this procedure to run in linear time using backward induction \citep{bellman1956dynamic}. In particular, it is optimal to deviate from the observed regime at time step $j$ if
\begin{align*}
    &\mathbb{E}[R_j^{1-A_j} + V_{j + 1}^{g^{\textbf{opt}_j}}(H_{j + 1}, \overline{L}_j, \overline{A}_{j-1}) | \overline{L}_j, \overline{A}_{j-1}]\\
    &\geq \mathbb{E}[R_j^{A_j} + V_{j + 1}^{g^{\textbf{osh}}}(H_{j + 1},  \overline{L}_j, \overline{A}_{j-1}) | \overline{L}_j, \overline{A}_{j-1}].
\end{align*}

As $g^{\textbf{osh}}$ optimizes over a class of regimes containing the optimal and the observed regime, the optimal shifting regime $g^{\textbf{osh}}$ outperforms the $H$-optimal and the observed regime,
    \begin{align*}
        \mathbb{E}[R^{g^{\textbf{osh}}}] \geq \max\left(\mathbb{E}[R^{g^{\textbf{obs}}}], \mathbb{E}[R^{g^{\textbf{opt}}}]\right).
    \end{align*}

\section{Bellman equations and regime set sizes}
\label{app: Bellman eqs}

Algorithm \ref{algo: superoptimal switching} derives $g^{\textbf{init}}$ from its Bellman equations. 
\begin{algorithm}
\spacingset{1}
    \KwData{Value functions $\mathbb{E}[R_k^{a_k}| H_k^{\overline{a}_{k-1}}, \color{red}\overline{A}_k \color{black}]$ and $\mathbb{E}[R_k^{\overline{a}_{k-1}, a_k} | H_k^{\overline{a}_{k-1}}, \color{red}\overline{A}_{\max\{j = 1, \ldots, k: \overline{a}_j = \overline{A}_j\} } \color{black}]$.}
    \KwResult{Optimal initiation regime $g^{\textbf{init}}$.}
    $g_K^{\textbf{init}}(h_K, \overline{a}_K') \leftarrow \argmax_{a_K \in \{0,1\}} I[\overline{a}_{K-1} = \overline{a}_{K - 1}'] \mathbb{E}[R_K^{a_K} | H_K^{\overline{a}_{K-1}} = h_K, \color{red}\overline{A}_K = \overline{a}_K'\color{black}]$
    \nonl$+ I[\overline{a}_{K-1} \neq \overline{a}_{K - 1}']\mathbb{E}[R_K^{\overline{a}_{K-1}, a_K} | H_K^{\overline{a}_{K-1}} = h_K, \color{red}\overline{A}_{\max\{j = 1, \ldots, K: \overline{a}_j = \overline{A}_j = \overline{a}_j'\} } \color{black}]$\;
    \For{$k = K-1,\ldots,1$}{
    $g_k^{\textbf{init}}(h_k, \overline{a}_k') \leftarrow \argmax_{a_k \in \{0,1\}}I[\overline{a}_{k-1} = \overline{a}_{k - 1}']$\\
    \nonl$\cdot \mathbb{E}[R_k^{a_k} + \mathcal{V}_{P,k+1}^{\overline{a}_{k}}(H_{k+1}^{\overline{a}_{k}})| H_k^{\overline{a}_{k-1}} = h_k, \color{red}\overline{A}_k = \overline{a}_k'\color{black}]$\\
    \nonl$ + I[\overline{a}_{k-1} \neq \overline{a}_{k - 1}']$\\
    \nonl$\cdot\mathbb{E}[R_k^{\overline{a}_{k-1}, a_k} + \mathcal{V}_{P,k + 1}^{\overline{a}_{k}}(H_{k+1}^{\overline{a}_{k}}) | H_k^{\overline{a}_{k-1}} = h_k, \color{red}\overline{A}_{\max\{j = 1, \ldots, k: \overline{a}_j = \overline{A}_j = \overline{a}_j'\} } \color{black}]$\;
    }
    Return $g^{\textbf{init}}$\;
    \caption{Optimal initiation algorithm based on Bellman equations.}
    \label{algo: superoptimal switching}
\end{algorithm}

A naive method for computing $g^{\textbf{init}}$ would compute the value for all possible regimes in $\mathcal{G}^{\textbf{init}}$ and take the maximum. As there are at least two treatment choices at each time $k$, there would be $ \Omega(2^K)$ evaluations required. When using backward induction with the Bellman equations, we compute a constant amount (that depends on the size of the space of covariates and rewards) of expectations at each time step $k$; one evaluation per current history $H_k^g$, action $a_k$, and next possible history $H_{k+1}$. Hence, the number of value function evaluations scales linearly in $K = o(2^K)$.

\subsection{Regime set sizes}
\label{app: regime set size}
We want to understand the sizes of the classes of regimes $\mathcal{G}$, $\mathcal{G}^{\textbf{init}}$, and $\mathcal{G}^{\textbf{sup}}$. 
To get a sense of the possible number of regimes we consider, suppose that the set of covariates $L$ is empty and consider a pre-specified trajectory of rewards and treatments such that we do not consider prior rewards $R_{k-1}^{\overline{a}_{k-1}}$ or treatments $\overline{A}_{k-1}^{\overline{a}_{k-1}}$ at each time $k$, within each class, the regimes whose decision rule at each time $k$ is one of $0$, $1$, $A_k$ or $1 - A_k$. Then the number of such regimes in each class is finite.

Specifically, at each time-point $j$, given the past there are $2^{K -j + 1} + 2^{K - j}$ regimes if we do not follow the observed regime; either $g_k = 0$ or $g_k = 1$ for any value $A_k$ at all times $k > j$ ($2^{K-j + 1}$ possibilities) or do the opposite of the observed regime, $1-A_j$, and $g_k = 0$ or $g_k = 1$ at each of the $K -j$ time-points in the future ($2^{K-j}$ possibilities). Then, one must count the observed regime (one possibility).

Thus, the optimal initiation regime $g^{\textbf{init}}$ optimizes over a class of regimes of size $$|\mathcal{G}^{\textbf{init}}| := \sum_{j = 1}^K (2^{K -j + 1} + 2^{K-j}) +1 = 2^{K + 1} + 2^K - 2 \in [2^K, 2^{2K}]. $$ 

In contrast, the $H$-optimal regime $g^{\textbf{opt}}$ optimizes over a class of $2^K$ regimes, and the superoptimal regime optimizes over a class of $2^{2K} = 4^K$ regimes. This confirms that the computational complexity of a method that naively compares all possible regimes in the space requires an exponentially increasing number of evaluations. Conceptually, backward induction circumvents this burden by ignoring regimes that are known to be suboptimal \citep{bertsekas1996neuro}.

\section{Additional trial designs}
\label{app: larger trials}
We describe two further designs that, unlike Data Structures \ref{algo: obs + trial } and \ref{algo: randomized trial obs} of Section \ref{sec: data fusion}, also identify $g^{\textbf{sup}}$.

   \subsection{Trial with recordings of natural treatment values}
    We introduce a third design, a sequentially randomized trial where we record the natural treatment values at every time-point (Data Structure \ref{algo: recorded nat value trial}). This trial was previously suggested by \citet{Forney_Bareinboim_2019} in single-time-point settings, but can be extended to multiple time-points. The following proposition follows immediately from the description of Data Structure \ref{algo: recorded nat value trial} and the definition of $g^{\textbf{init}}$.
    \begin{proposition}
    Under Assumptions \ref{ass: consistency} and \ref{ass: positivity}, Data Structure \ref{algo: recorded nat value trial}, represented in Figure \ref{fig: recorded nat val}, identifies $g^{\textbf{obs}}$, $g^{\textbf{opt}}$, $g^{\textbf{init}}$ and $g^{\textbf{sup}}$.
    \label{prop: recorded nat value trial id}
    \end{proposition}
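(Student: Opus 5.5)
The plan is to show that Data Structure \ref{algo: recorded nat value trial} identifies every counterfactual quantity that appears in the Bellman recursions for the four regimes. The design is a sequentially randomized trial in which, at each time $k$, we record the natural treatment value $A_k^{g}$ before randomizing the assigned value $A_k^{g+}$.

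\emph{Step 1: the basic identity.} Randomization of $A_k^{g+}$ with known probability $p_k \in (0,1)$, conditional on $(H_k^g,\overline{A}_k^{g},\overline{A}_{k-1}^{g+})$, gives a sequential exchangeability statement that includes the natural values in the conditioning set. Combining this with Assumption \ref{ass: consistency} and a standard g-formula argument, I would show that for every $\overline{a}_K$ the trial law identifies
\[
\mathbb{E}\big[R_k^{\overline{a}_k}\,\big|\,H_k^{\overline{a}_{k-1}},\overline{A}_k^{\overline{a}_{k-1}}\big],
\qquad
P\big(A_k^{\overline{a}_{k-1}}=1\,\big|\,H_k^{\overline{a}_{k-1}},\overline{A}_{k-1}^{\overline{a}_{k-2}}\big).
\]
These are estimable by conditioning on $\overline{A}_{k-1}^{g+}=\overline{a}_{k-1}$ in the trial arm. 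Positivity for the conditioning on assigned values holds by design. I will also need positivity for the natural values, and I would take it from Assumption \ref{ass: positivity} applied in the trial.

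\emph{Step 2: the four regimes.}
\begin{itemize}
\item $g^{\textbf{sup}}$. Treat $\overline{A}_k^g$ as additional time-varying covariates, so that $\mathcal{G}^{\textbf{sup}}$ is an ordinary history-dependent class with augmented history $(H_k^g,\overline{A}_k^g)$. The Bellman equations of Theorem \ref{thm: opt Bellman eqs} then apply with this augmented history. Each backward-induction step needs only the identified conditional expectations from Step 1, integrated over the identified conditional laws of the next augmented history. So $g^{\textbf{sup}}$ and its value are identified.
\item $g^{\textbf{opt}}$. This follows in the same way, by marginalizing out the natural values, or directly from \eqref{eq: id Bellman opt 1}--\eqref{eq: id Bellman opt 2}.
\item $g^{\textbf{obs}}$. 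Its value is $\mathbb{E}[R^{\overline{A}^g}]$ along the path on which assignments equal natural values. It can be computed through the Step 1 quantities by setting $\overline{a}_k$ equal to the recorded natural values. Alternatively, it equals the observed-data value through the recorded natural values.
\item $g^{\textbf{init}}$. Since $\mathcal{G}^{\textbf{init}}\subseteq\mathcal{G}^{\textbf{sup}}$, its Bellman recursion (Algorithm \ref{algo: superoptimal switching}) uses only conditional expectations of the form $\mathbb{E}[R_k^{\overline{a}_k}\mid H_k^{\overline{a}_{k-1}},\overline{A}_j]$ with $j\le k$. These are marginalizations of the Step 1 quantities over the identified law of $(A_{j+1}^{g},\dots,A_k^{g})$. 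Proposition \ref{prop: initiation regime id} then yields identification.
\end{itemize}

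\emph{Main obstacle.} The delicate step is Step 1. I need to state the exchangeability carefully: $R_k^{\overline{a}_k}$, together with future natural values and covariates, must be independent of $A_k^{g+}$ given the recorded past, including the current natural value $A_k^{\overline{a}_{k-1}}$. I also need consistency of natural values under assignment, that is, $A_k^{g}=A_k^{\overline{a}_{k-1}}$ on $\{\overline{A}_{k-1}^{g+}=\overline{a}_{k-1}\}$. I would justify both from the design, since randomization happens after the natural value is recorded, and from Assumption \ref{ass: consistency}. The g-formula argument would then proceed by induction on $k$.
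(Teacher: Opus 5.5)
Your proposal is correct and follows the paper's route. The design makes the natural values recorded pre-randomization covariates, so the Bellman recursions conditional on $(H_k,\overline{A}_k^g)$ (the paper's \eqref{eq: id Bellman sup 1}--\eqref{eq: id Bellman sup 2}) are identified, and $g^{\textbf{init}}$ then follows from Proposition \ref{prop: initiation regime id}; two small cautions: because $p(H_k^g,\overline{A}_k^g)$ may depend on the natural values, obtain $g^{\textbf{opt}}$ by your marginalization route rather than by reading \eqref{eq: id Bellman opt 1}--\eqref{eq: id Bellman opt 2} off the trial without adjustment, and you do not need positivity of the natural values (which Assumption \ref{ass: positivity} would not supply off the factual path anyway), since the rules only need to be defined on their support.
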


        \begin{figure}
        \centering
        \begin{tikzpicture}
            \node (start) at (0,0) {Record $A_1^g$};
            \node (A1=0) at (-2,-2) {Record $A_2^g$};
            \node (A1=1) at (2,-2) {Record $A_2^{g}$};
            \node (3DL) at (-3,-4) {$\cdots$};
            \node (3DR) at (3,-4) {$\cdots$};
            \node (3DML) at (-1,-4) {$\cdots$};
            \node (3DMR) at (1,-4) {$\cdots$};

            \draw[-] (start) -- node[rotate = 45, above, scale = .85] {$A_1^{g+} = 0$} (A1=0);
            \draw[-] (start) -- node[rotate = 315, above, scale = .85] {$A_1^{g+} = 1$} (A1=1);
            \draw[-] (A1=0) -- node[rotate = 60, above, scale = .85] {$A_2^{g+} = 0$} (3DL);
            \draw[-] (A1=0) -- node[rotate = 300, above, scale = .85] {$A_2^{g+} = 1$} (3DML);
            \draw[-] (A1=1) -- node[rotate = 60, above, scale = .85] {$A_2^{g+} = 0$} (3DMR);
            \draw[-] (A1=1) --  node[rotate = 300, above, scale = .85] {$A_2^{g+} = 1$} (3DR);
        \end{tikzpicture}
        \caption{Tree describing Data Structure \ref{algo: recorded nat value trial}.}
        \label{fig: recorded nat val}
    \end{figure}

     The $H$-optimal regime is identified using Equations \eqref{eq: id Bellman opt 1} and \eqref{eq: id Bellman opt 2}, and the initiation regime is identified using Equation \eqref{eq: init id formula} as in Section \ref{sec: sequ entry rand trial}. The superoptimal regime is identified by
     \begin{align}
         &g_K^{\textbf{sup}}(h_K, \overline{a}_K') = \argmax_{a_K \in \{0,1\}} \nonumber \\
         &\mathbb{E}[R_K | H_K = h_K, \overline{A}_{K}^{g^{\textbf{sup}}} = \overline{a}_K', \overline{A}_{K-1}^{g^{\textbf{sup}}+} = \overline{g}^{\textbf{sup}}_{K-1}(h_{K-1}, \overline{a}_{K-1}'), A_K^{g^{\textbf{sup}}+} = a_K] \label{eq: id Bellman sup 1}\\
         \nonumber \\
         &g_k^{\textbf{sup}}(h_k, \overline{a}_{k-1}') = \argmax_{a_k \in \{0,1\}} \nonumber \\
         & \mathbb{E}[R_k + \mathcal{V}^{\underline{g}_{k + 1}^{\textbf{sup}}}_{k+1}(H_{k+1}^{\underline{g}_{k + 1}^{\textbf{sup}}}, \underline{A}_{k+1}^{\underline{g}_{k + 1}^{\textbf{sup}}}) | H_k = h_k, \overline{A}_{k}^{g^{\textbf{sup}}} = \overline{a}_k', \overline{A}_{k-1}^{g^{\textbf{sup}}+} = \overline{g}^{\textbf{sup}}_{k-1}(h_{k-1}, \overline{a}_{k-1}'), A_k^{g^{\textbf{sup}}+} = a_k], \label{eq: id Bellman sup 2}
     \end{align}
     where $\overline{A}_K^{g^{\textbf{sup}}+}$ is observed in the trial arm and hence identified and $\mathcal{V}^{\underline{g}_{k + 1}^{\textbf{sup}}}_{k+1}(H_{k+1}^{\underline{g}_{k + 1}^{\textbf{sup}}}, \underline{A}_{k+1}'^{\underline{g}_{k + 1}^{\textbf{sup}}})$ is identified recursively using data following the regime $g^{\textbf{sup}}$.

    \begin{algorithm}
    \spacingset{1}
    \SetAlgoRefName{3}
    \SetAlgorithmName{Data Structure}
    \;
    
    \For{$k = 1,\ldots, K$}{
    \For{each individual in trial arm}{
    Record covariate value $L_k^g$\;
    Record natural treatment value $A_k^g$\;
    Randomize individual to receive treatment $A_k^{g+} = 1$ with fixed and known probability $p(H_k^g, \overline{A}_k^g) \in (0,1)$\;
    Record reward $R_k^g$\;
    }
    }
        \caption{Recorded natural treatment value trial.}
        \label{algo: recorded nat value trial}
    \end{algorithm}
    Data Structure \ref{algo: recorded nat value trial} has $2^K$ arms, the same number as the conventional sequentially randomized trial. 

   \subsection{Sequential three arm trial} 
    \label{sec: sequ 3-arm trial}
    We introduce a fourth design that randomizes units to three options at every time-point. The natural treatment value is recorded and units are then randomized to treatment ($A_k^{g+} = 1$), no-treatment ($A_k^{g+} = 0$), or natural treatment ($A_k^{g+} = A_k^g$). The following proposition follows immediately from the design of Data Structure \ref{algo: three choice trial } and the definition of $g^{\textbf{init}}$.
    \begin{proposition}
    Under Assumptions \ref{ass: consistency} and \ref{ass: positivity}, and using observed data generated as in Data Structure \ref{algo: three choice trial }, represented in Figure \ref{fig: three choice trial}, we can identify $g^{\textbf{obs}}$, $g^{\textbf{opt}}$, $g^{\textbf{init}}$ and $g^{\textbf{sup}}$.
    \label{prop: three choice trial id}
    \end{proposition}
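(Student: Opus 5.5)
The plan is to reduce Proposition \ref{prop: three choice trial id} to Proposition \ref{prop: recorded nat value trial id}. We would show that the arms of Data Structure \ref{algo: three choice trial } in which $A_k^{g+}\in\{0,1\}$ is randomized reproduce exactly the law generated by Data Structure \ref{algo: recorded nat value trial}. The key facts are that in Data Structure \ref{algo: three choice trial } the natural treatment value $A_k^g$ is recorded at every time before randomization, and that the randomization probabilities are known and may depend on $(H_k^g,\overline{A}_k^g)$. Conditioning on the event that a unit is assigned to a fixed treatment arm (treatment or no-treatment) at every time $k$, we therefore observe $(\overline{L}_K^g,\overline{A}_K^g,\overline{A}_K^{g+},\overline{R}_K^g)$ under a sequentially randomized assignment that is independent of all counterfactuals given the recorded past. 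Positivity of the randomization probabilities gives sequential exchangeability and positivity for every static treatment sequence $\overline{a}_K$, conditional on $(H_k,\overline{A}_k^g)$.

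With this, the conditional expectations appearing in the superoptimal Bellman equations \eqref{eq: id Bellman sup 1}--\eqref{eq: id Bellman sup 2} are identified by backward induction, exactly as in the proof of Proposition \ref{prop: recorded nat value trial id}. At stage $K$, we need
\[
\mathbb{E}[R_K\mid H_K=h_K,\overline{A}_K^g=\overline{a}_K',\overline{A}_K^{g+}=(\overline{g}_{K-1}(\cdot),a_K)],
\]
which is a regression in the fixed-assignment arms. At earlier stages, we average the identified stage-$(k+1)$ value over the identified conditional law of $(H_{k+1},A_{k+1}^g)$ given the past and the assigned treatments. This identifies $g^{\textbf{sup}}$.

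Since $\mathcal{G}^{\textbf{init}}\subseteq\mathcal{G}^{\textbf{sup}}$ and $g^{\textbf{opt}}$ only uses coarser conditioning (marginalizing over $\overline{A}_K^g$), the same functionals identify the value functions required by Proposition \ref{prop: initiation regime id}. These are
\[
\mathbb{E}[R_k^{\overline{a}_k}\mid H_k^{\overline{a}_{k-1}},\overline{A}_j=\overline{A}_j^{g+}=\overline{a}_j].
\]
They are obtained from the fixed arms by conditioning on the recorded natural values agreeing with the assigned values through time $j$. Consistency (Assumption \ref{ass: consistency}) ensures that, when $A_i^{g+}=A_i^g$ for $i\le j$, these coincide with the factual $\overline{A}_j$. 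This yields $g^{\textbf{init}}$ via \eqref{eq: init id formula} and Algorithm \ref{algo: superoptimal switching}, and yields $g^{\textbf{opt}}$ via \eqref{eq: id Bellman opt 1}--\eqref{eq: id Bellman opt 2}.

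Finally, $g^{\textbf{obs}}$ is identified from units randomized to the natural-treatment arm at every time. In that arm, $A_k^{g+}=A_k^g$, so by consistency these units follow the observed regime, and their law equals the observational law.

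The main obstacle is justifying that the natural treatment value at time $k$ is unaffected by which arm is chosen at time $k$, because it is recorded before randomization. We also need that, given the past recorded data, the value is the same counterfactual $A_k^{\overline{a}_{k-1}}$ whether earlier assignments came from fixed or natural arms. This needs a careful statement of the single-world consistency convention: the natural-arm choice $A_i^{g+}=A_i^g=a_i$ induces the same downstream counterfactuals as the fixed assignment $a_i$. We would handle this by invoking Assumption \ref{ass: consistency} pathwise on the assigned sequence $\overline{A}_{k-1}^{g+}$.
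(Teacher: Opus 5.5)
Your reduction to Proposition~\ref{prop: recorded nat value trial id} assumes that Data Structure 4 records $A_k^g$ before randomization in every arm, with assignment allowed to depend on $\overline{A}_k^g$. Under that assumption its fixed arms would reproduce Data Structure 3, but that is not the design in the statement. The algorithm records only $L_k^g$ and $R_k^g$ and randomizes with probabilities depending on $(H_k^g,k)$. The tree in Figure~\ref{fig: three choice trial} has no ``Record $A_k^g$'' nodes, unlike Figure~\ref{fig: recorded nat val}. The paper's proof of Proposition~\ref{prop: multiple treatments} explicitly uses that in this design natural values ``are not recorded when intervening''. Under your reading, Data Structure 4 would contain Data Structure 3 and would identify $g^{\textbf{sup}}$ for every finite $|\mathcal{A}_k|$, contradicting Proposition~\ref{prop: multiple treatments}. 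The prose sentence ``The natural treatment value is recorded'' is loosely worded, but the formal design and the paper's later use of it are unambiguous. Consequently your central step fails. In the fixed arms $\overline{A}_K^g$ is unobserved, so the regressions in \eqref{eq: id Bellman sup 1}--\eqref{eq: id Bellman sup 2} cannot be read off. Your $g^{\textbf{init}}$ step, which conditions on recorded natural values agreeing with assigned ones inside fixed arms, fails for the same reason.

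The missing idea is that natural-value information comes only from the natural-treatment arm and must be combined with the fixed arms through binarity.

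For $g^{\textbf{init}}$ this is the paper's route. Paths that take the natural arm at times $1,\dots,j$ and fixed arms afterwards directly identify $\mathbb{E}[R_k^{\overline{a}_k}\mid H_k^{\overline{a}_{k-1}},\overline{A}_j=\overline{A}_j^{g+}=\overline{a}_j]$, because on the natural arm the assigned value is the natural one and is observed. Proposition~\ref{prop: initiation regime id}, which treats the mismatched value at the initiation time through Lemma~\ref{lemma: lemma 1 analog}, then yields $g^{\textbf{init}}$.

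For $g^{\textbf{sup}}$ you need, at each $k$, the fact that a natural-arm unit with $A_k^{\overline{a}_{k-1}}=a_k$ continues in the $\overline{a}$-world. So the natural arm identifies expectations multiplied by $I(A_k^{\overline{a}_{k-1}}=a_k)$, while the fixed arm $a_k$ identifies the unmultiplied ones. Since $I(A_k^{\overline{a}_{k-1}}=1-a_k)=1-I(A_k^{\overline{a}_{k-1}}=a_k)$, you can iterate over $k$ by inclusion--exclusion over the set of times at which the natural arm is used. This recovers the joint law of each static-regime world together with its natural treatment values, and hence the value of every $g\in\mathcal{G}^{\textbf{sup}}$. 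This is the mechanism of Lemma~\ref{lemma: lemma 1 analog} and Appendix~\ref{app: Dirac identification}. It is where binary treatments are indispensable, and exactly why the same design fails when $|\mathcal{A}_k|>2$.

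Your conclusions for $g^{\textbf{obs}}$ (all-natural paths) and $g^{\textbf{opt}}$ (all-fixed paths, which form a conventional sequentially randomized trial needing no natural values) go through.
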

    The $H$-optimal regime is identified using Equations \eqref{eq: id Bellman opt 1} and \eqref{eq: id Bellman opt 2}, the optimal initiation regime is identified using Equation \eqref{eq: init id formula} and the superoptimal regime is identified using Equations \eqref{eq: id Bellman sup 1} and \eqref{eq: id Bellman sup 2}.
    
    \begin{algorithm}
    \spacingset{1}
    \SetAlgoRefName{4}
    \SetAlgorithmName{Data Structure}
    \;
    
    \For{$k = 1,\ldots, K$}{
    \For{each individual in trial arm}{
    Record covariate value $L_k^g$\;
    Randomize individual to receive treatment $A_k^{g+} \in \{0,1, A_k^g \}$ with fixed and known multinomial distribution that can depend on $H_k^g$ and $k$\;
    Record reward $R_k^g$\;
    }
    }
    \caption{Sequential three arm trial.}
    \label{algo: three choice trial }
    \end{algorithm}

    \begin{figure}
        \centering
        \begin{tikzpicture}[scale = .9]
            \node (start) at (0,0) {Trial entry};
            
            \node (3DL) at (-7,-5.5) {$\cdots$};
            \node (3DMML) at (-5,-5.5) {$\cdots$};
            \node (3DML) at (-3,-5.5) {$\cdots$};

            \node (3DLM) at (-2,-5.5) {$\cdots$};
            \node (3DMM) at (0, -5.5) {$\cdots$};
            \node (3DRM) at (2, -5.5) {$\cdots$};
            
            \node (3DMR) at (3,-5.5) {$\cdots$};
            \node (3DMMR) at (5,-5.5) {$\cdots$};
            \node (3DR) at (7,-5.5) {$\cdots$};

            \draw[-] (start) -- node[rotate = 30, above, scale = .85] {$A_1^{g+} = 0$} (-5,-3);
            \draw[-] (start) -- node[scale = .85] {$A_1^{g+} = A_1^g$} (0,-3);
            \draw[-] (start) -- node[rotate = 330, above, scale = .85] {$A_1^{g+} = 1$} (5,-3);
            
            \draw[-] (-5,-3) -- node[rotate = 50, above, scale = .85]{$A_2^{g+} = 0$} (3DL);
            \draw[-] (-5,-3) -- node[scale = .85]{$A_2^{g+} = A_2^{g}$} (3DMML);
            \draw[-] (-5,-3) -- node[rotate = 310, above, scale = .85] {$A_2^{g+} = 1$} (3DML);

            \draw[-] (0,-3) -- node[rotate = 50, above, scale = .85]{$A_2^{g+} = 0$} (3DLM);
            \draw[-] (0,-3) -- node[scale = .85]{$A_2^{g+} = A_2^{g}$} (3DMM);
            \draw[-] (0,-3) -- node[rotate = 310, above, scale = .85] {$A_2^{g+} = 1$} (3DRM);

            \draw[-] (5,-3) -- node[rotate = 50, above, scale = .85]{$A_2^{g+} = 0$} (3DMR);
            \draw[-] (5,-3) -- node[scale = .85]{$A_2^{g+} = A_2^{g}$} (3DMMR);
            \draw[-] (5,-3) -- node[rotate = 310, above, scale = .85] {$A_2^{g+} = 1$} (3DR);
        \end{tikzpicture}
        \caption{Tree describing Data Structure \ref{algo: three choice trial }.}
        \label{fig: three choice trial}
    \end{figure}
Data Structure \ref{algo: three choice trial } has $3^K$ trial arms, significantly more than the $2^K$ of the conventional sequentially randomized trial.

\section{Identification assumptions in experimental and observational data}
\label{app: other id ass}

Existing methods to identify the $H$-optimal regime often focus on identification of the value functions $\mathbb{E}[R_k^{\overline{a}_{k - 1}, a_k} | H_k^{\overline{a}_{k-1}}]$, or the contrast or ``blip" functions $\mathbb{E}[R_k^{\overline{a}_{k - 1}, 1} | H_k^{\overline{a}_{k-1}}] - \mathbb{E}[R_k^{\overline{a}_{k - 1}, 0} | H_k^{\overline{a}_{k-1}}]$. Sequentially randomized trials allow for identification of the $H$-optimal regime by design. 

However, sequentially randomized trial data can be unavailable for several reasons, e.g., because trials are costly, time-consuming and unethical. Additionally, it is often infeasible to conduct trials on populations large enough to detect heterogeneous treatment effects \citep{brookes2004subgroup, schmidt2014exploring, dahabreh2016using}. Thus, a substantial body of work concerns identification and estimation of $H$-optimal regimes from observational data \citep{murphy2003optimal,robins2004optimal,moodie2007demystifying,moodie2012q,chakraborty2013statistical,schulte2014q_and_a,chakraborty2014dynamic,luedtke2016super,tsiatis2019dynamic, zhang2019near,clifton2020qlearning,wang_provably_2021,chen_estimating_2023}. However, unmeasured confounding might complicate causal inference from observational data. Furthermore, as shown in Section \ref{sec: data fusion}, observations of the natural treatment regime are not sufficient for identification of $g^{\textbf{init}}$, $g^{\textbf{sup}}$, or $g^{\textbf{opt}}$ if the observational data do not verify Assumption \ref{ass: sequential exch}; additional assumptions are needed. Here, we consider some identification strategies for initiation regimes under different assumptions and data structures. 

\subsection{Unconfoundedness assumption}
Most existing results on $H$-optimal regimes require \emph{sequential exchangeability}, often called unconfoundedness:
\begin{assumption}[Sequential exchangeability]
    For $j = 1, \ldots, K$,
    \begin{align*}
        (R_j^{a_1, \ldots, a_j}, (R_{k}^{a_1, \ldots, a_k}, A_{k}^{a_1, \ldots a_{k-1}}, L_{k}^{a_1, \ldots, a_{k-1}})_{k = j + 1, \ldots, K}) \independent A_j^{a_1, \ldots, a_{j-1}} | H_j^{a_1, \ldots, a_{j-1}}.
    \end{align*}
    \label{ass: sequential exch}
\end{assumption}
Under Assumption \ref{ass: sequential exch}, which can be enforced by design in sequentially randomized trials, the following value functions are identified:
\begin{align*}
    \mathbb{E}[R_k^{\overline{a}_{k - 1}, a_k} | H_k^{\overline{a}_{k-1}} = (\overline{r}_{k-1}, \overline{l}_k)] = \mathbb{E}[R_k | \overline{L}_k = \overline{l}_k, \overline{R}_{k-1} = \overline{r}_{k-1}, \overline{A}_k = (\overline{a}_{k-1}, a_k)] \text{ for all $k$.}
\end{align*}
While unconfoundedness can be enforced by design in sequentially randomized trials, it might fail in studies of observational data.

\begin{remark}
Assumption \ref{ass: sequential exch} is often implicit in the reinforcement learning literature when, e.g., applying Q-Learning or A-Learning algorithms to derive the $H$-optimal regime \citep{murphy2003optimal}. However, Assumption \ref{ass: sequential exch} can fail in settings where reinforcement learning methods are used, e.g., due to unmeasured confounding. In particular, Q-Learning algorithms sometimes use data from human decision makers to improve their performance; for example, reinforcement-learning-based driving algorithms that use human driving instincts as inputs can outperform fully autonomous driving systems to improve driving performance \citep{wu2021human,wu2022prioritized, wu2023toward}, suggesting that there are unmeasured confounders in the observational data used for training \citep{stensrud_optimal_2024}. Indeed, under Assumption \ref{ass: sequential exch}, $\mathbb{E}[R_k^{\overline{a}_k} | H_k^{\overline{a}_{k-1}}, \overline{A}_k^{\overline{a}_{k-1}}] = \mathbb{E}[R_k^{\overline{a}_k} | H_k^{\overline{a}_{k-1}}]$, hence the natural treatment values $\overline{A}_k^g$ are not informative to the value function, $g^{\textbf{opt}} = g^{\textbf{init}} = g^{\textbf{sup}}$, which implies that the natural treatment values, i.e. the expert opinions, do not improve decision-making.    
\end{remark} 

\begin{remark}
    Sequentially randomized trials have one less arm than Data Structure \ref{algo: obs + trial }, and hence do not identify $g^{\textbf{init}}$ and $g^{\textbf{sup}}$. If Assumption \ref{ass: sequential exch} holds in observational data, $g^{\textbf{init}}$ and $g^{\textbf{sup}}$ are identified, but $g^{\textbf{opt}} = g^{\textbf{init}} = g^{\textbf{sup}}$.
\end{remark}

\subsection{Markov assumptions}
Markov-type assumptions are commonly used assumptions in the reinforcement learning literature \citep{monahan_state_1982,sutton1999reinforcement, kober_reinforcement_2013,kallus2020confounding,wang_provably_2021,shi_off_policy_2024}. These assumptions reduce time-varying decision making to a sequence of point treatment decisions that can be analyzed separately, see for example \citep{bellman_mdp_1957, howard1960dynamic,sutton1999reinforcement,wang_provably_2021, shi_off_policy_2024}.
\begin{assumption}[Markov]
For $k = 2, \ldots, K$
    \begin{align*}
        &(R_k^{\overline{a}_k}, A_k^{\overline{a}_{k-1}}, L_k^{\overline{a}_{k-1}}, U_k^{\overline{a}_{k-1}}) \independent (R_j^{\overline{a}_j}, A_j^{\overline{a}_{j-1}}, L_j^{\overline{a}_{j-1}}, U_j^{\overline{a}_{j-1}})_{j = 1}^{k-1}, \text{ and }\\
        &(R_k^{\overline{a}_k}, A_k^{\overline{a}_{k-1}}, L_k^{\overline{a}_{k-1}}, U_k^{\overline{a}_{k-1}}) = (R_k^{a_k}, A_k, L_k, U_k).
    \end{align*}
    \label{ass: Markov general}
\end{assumption}
For the example of Section \ref{sec: chronic back pain ex intro}, Assumption \ref{ass: Markov general} implies that the treatment chosen at the second time-point only depends on the medical characteristics used at time $k = 2$, that is, the opioid usage and depression statuses, and response to treatment at baseline. 

Under Assumption \ref{ass: Markov general}, the data-generating mechanism is reduced to $K$ point treatment data-generating mechanisms $(L_i, A_i, R_i)_{i = 1}^K$.

Under Assumption \ref{ass: Markov general}, 
\begin{align*}
    \mathbb{E}[R_k^{\overline{a}_k}|H_k^{\overline{a}_{k-1}}] = \mathbb{E}[R_k^{a_k} | L_k]
\end{align*}
hence reducing the problem to $K$ point treatment settings.

When Assumption \ref{ass: Markov general} is reasonable, our exposition clarifies that existing results on point or partial identification \citep{stensrud_optimal_2024, laurendeau2024improved} can be directly applied to improve $H$-optimal regimes. To be explicit, let $\Tilde{g}_k^{\textbf{sup}}$ be the $(L_k,A_k)$-optimal regime, that is the superoptimal regime in the point treatment setting, of the data generating mechanism given by $(L_k,U_k,A_k,R_k^0, R_k^1)$,
    \begin{align*}
        \Tilde{g}_k^{\textbf{sup}}(L,A) := \argmax_{a \in \{0,1\}} \mathbb{E}[R_k^a | L_k,A_k].
    \end{align*}
Then the following proposition holds:
\begin{proposition}
    Suppose Assumption \ref{ass: Markov general} holds and that  $\mathbb{E}[R_k^{\overline{a}_k} | H_k^{\overline{a}_{k-1}}]$ is identified for all $\overline{a}_k$. Then, the regime $\underline{g}^{\textbf{sup}}$ defined by
    \begin{align*}
        \underline{g}^{\textbf{sup}}_k(H_k^g, \overline{A}_k^{g^{\textbf{sup}}}) := \Tilde{g}_k^{\textbf{sup}}( L_k^{\Tilde{g}^{\textbf{sup}}}, A_k^{\Tilde{g}^{\textbf{sup}}})
    \end{align*}
    is identified and equal to the time-varying superoptimal regime $g^{\textbf{sup}}$.
    \label{prop: MDP sup better than opt}
\end{proposition}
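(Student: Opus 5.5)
The plan is to show that, under Assumption \ref{ass: Markov general}, the $K$-stage problem over $\mathcal{G}^{\textbf{sup}}$ decouples into $K$ independent point-treatment problems. Each of these is solved by $\Tilde{g}_k^{\textbf{sup}}$.

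The first step uses the second part of Assumption \ref{ass: Markov general}. For any $g \in \mathcal{G}^{\textbf{sup}}$, the stage-$k$ counterfactuals under $g$ coincide with the one-step objects:
\begin{align*}
(R_k^g, A_k^g, L_k^g, U_k^g) = (R_k^{A_k^{g+}}, A_k, L_k, U_k).
\end{align*}
In words, the natural treatment value and covariates at time $k$ are not affected by past interventions, and the reward depends only on the current assigned treatment. Hence
\begin{align*}
\mathbb{E}[R^g] = \sum_{k=1}^K \mathbb{E}[R_k^{g_k(H_k^g, \overline{A}_k^g)}].
\end{align*}

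The second step bounds each summand, using the first part of the assumption (independence across stages). The bundle $(R_k^{0}, R_k^{1}, A_k, L_k, U_k)$ is independent of everything at earlier stages. Since $H_k^g$ and $\overline{A}_{k-1}^g$ are built from earlier-stage variables, it is also independent of the pair $(H_k^g \setminus L_k, \overline{A}_{k-1}^g)$.

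Conditioning on $(L_k, A_k)$ and on the past $W_k := (\overline{R}_{k-1}^g, \overline{L}_{k-1}^g, \overline{A}_{k-1}^g)$, we can write $g_k = d(L_k, A_k, W_k)$ for some function $d$. The independence then gives
\begin{align*}
\mathbb{E}[R_k^{g_k} \mid L_k, A_k, W_k] = \mathbb{E}[R_k^{d} \mid L_k, A_k] \le \max_{a}\mathbb{E}[R_k^a \mid L_k, A_k] = \mathbb{E}[R_k^{\Tilde{g}_k^{\textbf{sup}}} \mid L_k, A_k].
\end{align*}
Taking expectations gives $\mathbb{E}[R_k^{g}] \le \mathbb{E}[R_k^{\Tilde g^{\textbf{sup}}}]$ for every $k$. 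Summing over $k$ then shows $\mathbb{E}[R^g] \le \mathbb{E}[R^{\underline{g}^{\textbf{sup}}}]$.

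The regime $\underline{g}^{\textbf{sup}}$ uses only $(L_k, A_k^g)$, so it belongs to $\mathcal{G}^{\textbf{sup}}$. It therefore attains the maximum and equals $g^{\textbf{sup}}$, up to ties on null sets, which I would handle by a tie-breaking convention.

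The third step is identification. I would write
\begin{align*}
\mathbb{E}[R_k^a \mid L_k, A_k=a] = \mathbb{E}[R_k \mid L_k, A_k=a]
\end{align*}
by consistency and positivity (Assumptions \ref{ass: consistency} and \ref{ass: positivity}). By the Markov reduction, the identified quantity $\mathbb{E}[R_k^{\overline{a}_k}\mid H_k^{\overline{a}_{k-1}}]$ equals $\mathbb{E}[R_k^{a}\mid L_k]$. Applying Corollary \ref{cor: Lemma 1 implies lemma 1} stage-wise, with $L_1$ replaced by $L_k$, then identifies $\mathbb{E}[R_k^{a}\mid L_k, A_k = 1-a]$. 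This gives identification of $\Tilde g_k^{\textbf{sup}}$ for each $k$, and hence of $\underline{g}^{\textbf{sup}}$.

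The main obstacle is the second step: rigorously justifying that conditioning on the regime-dependent past does not help. This requires extending the cross-stage independence in Assumption \ref{ass: Markov general} from factual to counterfactual variables under $g$. I would do this by induction on $k$, showing that $W_k$ is a measurable function of stage-$j<k$ bundles, and then invoking the independence.
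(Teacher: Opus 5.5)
Your proposal is correct. Its identification step matches the paper's: Assumption \ref{ass: Markov general} reduces $\mathbb{E}[R_k^{\overline{a}_k}\mid H_k^{\overline{a}_{k-1}}]$ to $\mathbb{E}[R_k^{a_k}\mid L_k]$. The point-treatment formula (Lemma 1 of \citet{stensrud_optimal_2024}, i.e.\ Corollary \ref{cor: Lemma 1 implies lemma 1} with $L_k$ in place of $L_1$) then identifies $\mathbb{E}[R_k^{a}\mid L_k,A_k]$.

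You differ from the paper on the equality $\underline{g}^{\textbf{sup}}=g^{\textbf{sup}}$. The paper treats it as holding ``by definition'' once each $\tilde g_k^{\textbf{sup}}$ is identified. It spends the rest of its proof on the comparison $\mathbb{E}[R^{g^{\textbf{opt}}}]\le\mathbb{E}[R^{g^{\textbf{sup}}}]$, applying Proposition 1 of \citet{stensrud_optimal_2024} stage by stage. You instead prove that no $g\in\mathcal{G}^{\textbf{sup}}$ beats the stagewise rule, by showing that conditioning on the regime-dependent past $W_k$ leaves $\mathbb{E}[R_k^a\mid L_k,A_k]$ unchanged. Your route therefore establishes the equality the statement actually asserts, and the comparison with $g^{\textbf{opt}}$ follows at once from $\mathcal{G}\subseteq\mathcal{G}^{\textbf{sup}}$. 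The paper's route delivers mainly the performance comparison.

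One thing to tighten in the step you flagged as the main obstacle. Assumption \ref{ass: Markov general} gives independence only for each fixed $\overline{a}_k$: it makes $(R_k^{a_k},A_k,L_k,U_k)$ independent of $(R_j^{a_j},A_j,L_j,U_j)_{j<k}$, with one potential reward per earlier stage. It does not make $(R_k^0,R_k^1,A_k,L_k,U_k)$ independent of bundles containing both $R_j^0$ and $R_j^1$. These conditions genuinely differ: if $R_1^0,R_1^1$ are independent Rademacher variables, $R_1^0R_1^1$ is independent of each one but not of the pair. So do not write $W_k$ as a function of two-potential-outcome bundles.

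Instead, partition on the $2^{k-1}$ events $\{\overline{A}_{k-1}^{g+}=\overline{a}_{k-1}\}$. Each such event, and $W_k$ restricted to it, is a measurable function of $(R_j^{a_j},A_j,L_j)_{j<k}$ for that fixed path; this follows recursively through $g_j(H_j^{\overline{a}_{j-1}},\overline{A}_j)=a_j$. Applying the assumption with $\overline{a}_k=(\overline{a}_{k-1},a)$ on each piece gives $\mathbb{E}[R_k^a\mid L_k,A_k,W_k]=\mathbb{E}[R_k^a\mid L_k,A_k]$, which is all your bound requires.
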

See Definition \ref{def: superoptimal regime} for a formal definition of $g^{\textbf{sup}}$.

\subsection{A weaker Markov assumption: The Forgetfulness assumption}
\label{sec: Forgetfulness assumption}

To address limitations of Assumption \ref{ass: Markov general}, several relaxations have been used in the MDP literature \citep{bellman_mdp_1957, howard1960dynamic, sutton1999reinforcement, kallus2020confounding,wang_provably_2021, shi_off_policy_2024}. For example, \citet{wang_provably_2021, shi_off_policy_2024} made the following assumption.
\begin{assumption}[\cite{wang_provably_2021, shi_off_policy_2024}]
    For all $k > 1$, $(U_k, A_k, R_k) \independent \{(U_j, A_j, L_j, R_j)\}_{j = 1}^{k -1} | L_k$.
    \label{ass: shi mdp}
\end{assumption}
 
 We introduce a memoryless unmeasured confounding, or ``forgetfulness", assumption, a relaxation of Assumptions \ref{ass: Markov general} and \ref{ass: shi mdp}, that allows us to identify $g^{\textbf{sup}}$ by assuming that the expected outcome at time $k$ only varies with treatment $A_k$, conditional on past history.
 \begin{assumption}[Forgetfulness]
For all $k$ and $\overline{a}_k \in \{0,1\}^k$, 
    \begin{align*}
        \mathbb{E}[R_k^{\overline{a}_k} |H_k^{\overline{a}_{k-1}}, A_k^{\overline{a}_{k-1}}] = \mathbb{E}[R_k^{\overline{a}_k} |H_k^{\overline{a}_{k-1}}, \overline{A}_k^{\overline{a}_{k-1}}] \text{ a.s.,}
    \end{align*}
    \label{Ass: Forgetfulness}
\end{assumption}

\begin{proposition}
    Assumption \ref{ass: shi mdp} and $R_k^g \independent \overline{A}_k^g | H_k^g, U_k$ implies Assumption \ref{Ass: Forgetfulness}.
    \label{prop: shi mdp implies forgetfulness}
\end{proposition}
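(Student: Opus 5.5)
The plan is to reduce both sides of Assumption \ref{Ass: Forgetfulness} to the same quantity, using iterated expectations over the unmeasured confounder $U_k$. Fix $k$ and $\overline{a}_k$, and write $g$ for the static regime $\overline{a}_{k-1}$ (with $a_k$ assigned at time $k$). Interpret Assumption \ref{ass: shi mdp} under this intervention: $(U_k^g, A_k^g, R_k^g) \independent \{(U_j^g, A_j^g, L_j^g, R_j^g)\}_{j<k} \mid L_k^g$. The second hypothesis $R_k^g \independent \overline{A}_k^g \mid H_k^g, U_k$ is read with $R_k^g = R_k^{\overline{a}_k}$.

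Step 1 is to condition on $U_k$ inside the right-hand side:
\begin{align*}
\mathbb{E}[R_k^{\overline{a}_k}\mid H_k^g,\overline{A}_k^g]
=\mathbb{E}\big[\mathbb{E}[R_k^{\overline{a}_k}\mid H_k^g,U_k^g,\overline{A}_k^g]\,\big|\,H_k^g,\overline{A}_k^g\big]
=\mathbb{E}\big[m_k(H_k^g,U_k^g)\,\big|\,H_k^g,\overline{A}_k^g\big],
\end{align*}
where $m_k(h,u):=\mathbb{E}[R_k^{\overline{a}_k}\mid H_k^g=h,U_k^g=u]$. The second equality uses the conditional independence $R_k^g\independent\overline{A}_k^g\mid H_k^g,U_k$. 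The same computation, starting from $(H_k^g,A_k^g)$, gives
\[
\mathbb{E}[R_k^{\overline{a}_k}\mid H_k^g,A_k^g]=\mathbb{E}[m_k(H_k^g,U_k^g)\mid H_k^g,A_k^g].
\]
This follows because $R_k^g$ is independent of $\overline A_k^g$, and hence of $A_k^g$, given $(H_k^g,U_k)$.

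Step 2 is to show that the conditional law of $U_k^g$ given $(H_k^g,\overline{A}_k^g)$ equals its law given $(H_k^g,A_k^g)$. This means showing $U_k^g\independent \overline{A}_{k-1}^g\mid H_k^g,A_k^g$. The plan is to apply Assumption \ref{ass: shi mdp} together with the weak union and decomposition properties of conditional independence. The past block $\{(U_j,A_j,L_j,R_j)\}_{j<k}$ contains $\overline{A}_{k-1}$, $\overline{L}_{k-1}$ and $\overline{R}_{k-1}$. From $(U_k,A_k)\independent(\overline{A}_{k-1},\overline{L}_{k-1},\overline{R}_{k-1})\mid L_k$, weak union gives $U_k\independent \overline{A}_{k-1}\mid L_k,\overline{L}_{k-1},\overline{R}_{k-1},A_k$. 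Since $H_k=(\overline R_{k-1},\overline L_k)$, this is exactly $U_k\independent\overline A_{k-1}\mid H_k,A_k$. Combining this with Step 1, the two conditional expectations of $m_k(H_k^g,U_k^g)$ coincide almost surely, which gives Assumption \ref{Ass: Forgetfulness}.

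The main obstacle is a bookkeeping subtlety in Step 2 rather than a hard argument. Assumption \ref{ass: shi mdp} is stated for factual variables, and $R_{k-1}$ belongs to the past block while $H_k$ conditions on it. I would handle this in two ways. First, I would note that $\overline R_{k-1}$ is in the past set, so weak union legitimately moves it into the conditioning set. Second, I would state explicitly that the assumption is imposed on the counterfactual variables under each static regime $\overline a_{k-1}$, or equivalently derive this from consistency (Assumption \ref{ass: consistency}) on the event $\overline A_{k-1}=\overline a_{k-1}$ if only the factual version is available. I would try the counterfactual interpretation first, since the paper's notation $R_k^g\independent\overline{A}_k^g$ already signals it.
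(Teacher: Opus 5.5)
Your proposal is correct and follows essentially the same route as the paper: iterate expectations over $U_k$, use $R_k^g \independent \overline{A}_k^g \mid H_k^g, U_k$ to make the inner regression free of the treatment history, and use Assumption \ref{ass: shi mdp} (imposed on the counterfactual variables under $g$, as the paper also does) to replace conditioning on $\overline{A}_k^g$ by conditioning on $A_k^g$ in the outer expectation. The paper additionally reduces the inner regression to a function of $(L_k, U_k)$ by passing through factual quantities, which is not needed. Your explicit weak-union derivation of $U_k \independent \overline{A}_{k-1} \mid H_k, A_k$ spells out the step the paper justifies only by citing Assumption \ref{ass: shi mdp}.
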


Assumption \ref{Ass: Forgetfulness} fails when there are unmeasured confounders $U$ such that $\mathbb{E}[R_k^g | H_k^g, A_k^g, U = u_1] \neq \mathbb{E}[R_k^g | H_k^g, A_k^g, U = u_2]$ for some $u_1 \neq u_2$ and $\text{Cov}(U, A_j^g | H_k^g) \neq 0$ for some $j < k$, which is also ruled out by assumptions on MDPs. These assumptions essentially aim to reduce longitudinal data to a sequence of observations of similar data-generating mechanisms that are linked in a manageable manner for identification. 


Now, we can define the following regime $g^{\textbf{s}}$,
\begin{align*}
    g_K^{\textbf{s}}(h_K, \overline{a}_{K-1}, a_K') &= \argmax_{a_K \in \{0,1\}} \mathbb{E}[R_K^{\overline{a}_{K-1}, a_K} | H_K^{\overline{a}_{K-1}} = h_K, A_K^{\overline{a}_{K-1}} = a_K'],\\
    g^{\textbf{s}}_k(h_k, \overline{a}_{k-1}, a_k') &= \argmax_{a_k\in \{0,1\}} \mathbb{E}[R_k^{\overline{a}_{k-1}, a_k} + \mathcal{V}_{P,k+1}^{\overline{a}_k}(H_{k+1}^{\overline{a}_{k}}, \overline{A}_{k + 1}^{\overline{a}_k})|H_k^{\overline{a}_{k-1}} = h_k, A_k^{\overline{a}_{k-1}} = a_k']
\end{align*}
with associated value functions:
\begin{align*}
     \mathcal{V}_{P,K}^{\overline{a}_{K-1}}(H_K^{\overline{a}_{K-1}}, \overline{A}_K^{\overline{a}_{K-1}})&= \max_{a_K}  \mathbb{E}[R_K^{\overline{a}_{K-1}, a_K} | H_K^{\overline{a}_{K-1}}, A_K^{\overline{a}_{K-1}}],\\
     \mathcal{V}_{P,k}^{\overline{a}_{k-1}}(H_k^{\overline{a}_{k-1}}, \overline{A}_k^{\overline{a}_{k-1}}) &= \max_{a_k} \mathbb{E}[R_k^{\overline{a}_{k-1}, a_k} + \mathcal{V}_{P,k+1}^{\overline{a}_k}(H_{k+1}^{\overline{a}_k}, \overline{A}_{k + 1}^{\overline{a}_k})|H_k^{\overline{a}_{k-1}}, A_k^{\overline{a}_{k-1}}].
\end{align*}

Although Assumption \ref{Ass: Forgetfulness} is weaker than Assumption \ref{ass: Markov general}, both assumptions fail when treatments are confounded across time-points in observational data. 

Furthermore, Assumption \ref{Ass: Forgetfulness} is implied by
\begin{equation}
    R_k^g \independent \overline{A}_{k-1}^g | H_k^g, A_k^g,
    \label{eq: Forgetfulness independence}
\end{equation}
which can be read off Single World Intervention Graphs (SWIGs). 

Equation \eqref{eq: Forgetfulness independence} rules out any unmeasured confounding between $A_k^g$ and $\overline{A}_{k-1}^g$. If such unmeasured confounding exists, Assumption \ref{Ass: Forgetfulness} will fail if the $U_k$s verify $\mathbb{E}[R_k^g | H_k^g, A_k^g, U_k = u_1] \neq \mathbb{E}[R_k^g | H_k^g, A_k^g, U_k = u_2]$ for some $u_1 \neq u_2$ and $\text{Cov}(U_k, A_j^g) \neq 0$ for some $j < k$, see Figure \ref{fig: Graph implications of forgetfulness assumption}. 

\begin{figure}
    \centering
    \begin{tikzpicture}[scale = 0.8]
                \tikzset{line width=1.5pt, outer sep=0pt,
                ell/.style={draw,fill=white, inner sep=2pt,
                line width=1.5pt},
                swig vsplit={gap=5pt,
                inner line width right=0.5pt},
                swig hsplit={gap=5pt}
                };
                      \node[name=L1,ell,shape=ellipse] at (6,-2){$L_{k-1}^g$};
                    \node[name=A1,shape=swig vsplit] at (3,0){
                                                      \nodepart{left}{$A_{k-1}^{g}$}
                                                      \nodepart{right}{$A_{k-1}^{g+}$} };
                    \node[name=Y1, ell, shape=ellipse] at (9,0){$R_{k-1}^g$};
                    \node[name=U1, ell, shape = ellipse] at (6,2){$U_{k-1}^g$};
                    \node[name=U2, ell, shape = ellipse] at (15,2){$U_{k}^g$};
                    \node[name=A2, shape = swig vsplit] at (12,0){ 
                                                      \nodepart{left}{$A_{k}^{g}$}
                                                      \nodepart{right}{$A_{k}^{g+}$} };
                    \node[name=Y2, ell, shape = ellipse] at (18,0){$R_k^g$};
                    \node[name=L2, ell, shape = ellipse] at (15,-2){$L_k^g$};
                \begin{scope}[>={Stealth},
                  every node/.style={fill=white,circle},
                  every edge/.style={draw=gray}]
                    \path[->] (U1) edge (Y1);
                    \path[->] (U1) edge (A1.135);
                    \path[->] (L1) edge (A1.225);
                    \path[->] (A1) edge (Y1);
                    \path [->] (L1) edge (Y1);
                    \path[->] (Y1) edge[bend left = 15] (Y2);
                    \path[->] (L1) edge (Y2);
                    \path[->] (L2) edge (A2.225);
                    \path[->] (A2) edge (Y2);
                    \path[->] (U2) edge (A2.135);
                    \path[->] (U2) edge (Y2);
                    \path[->] (L2) edge (Y2);
                    \path[->] (Y1) edge (A2);
                    \path[->] (A1) edge[bend left = 15] (A2);
                    \path[->] (L1) edge (L2);
                    \path[->] (L1) edge (A2);
                    \path[->] (A1) edge (L2);
                    \path[->] (Y1) edge (L2);
                    \path[->] (A1) edge[bend left = 15] (Y2);
                    \path[->] (U1) edge[dotted, color = red] (U2);
                    \path[->] (A1) edge (U2);
                    \path[->] (Y1) edge (U2);
                \end{scope}
            \end{tikzpicture}
    \caption{Illustration of the edges not allowed by Assumption \ref{Ass: Forgetfulness} by considering time-point $k$ and $k-1$. The red dotted edges violate Equation \eqref{eq: Forgetfulness independence} under a faithfulness assumption.}
    \label{fig: Graph implications of forgetfulness assumption}
\end{figure}

Assumption \ref{Ass: Forgetfulness} restricts the conditional mean of $R_k^{\overline{a}_k}$ only. We need the following assumption to evaluate the effect of treatment $a_k$ on future outcomes.
\begin{assumption}[Future forgetfulness]
For all $k = 2, \ldots, K$ and $\overline{a}_k \in \{0,1\}^k$,
    \begin{align*}
        (R_k^{\overline{a}_k}, L_{k+1}^{\overline{a}_k}, A_{k+1}^{\overline{a}_k}) \independent \overline{A}_{k-1}^{\overline{a}_{k-2}} \mid H_k^{\overline{a}_{k-1}}, A_k^{\overline{a}_{k-1}},
    \end{align*}
    where $A_{K+1}^{\overline{a}_K}$ and $L_{K + 1}^{\overline{a}_K}$ are empty.
    \label{Ass: Forgetfulness future}
\end{assumption}
Assumption \ref{Ass: Forgetfulness future} implies Assumption \ref{Ass: Forgetfulness}.
\begin{proposition}
    Under Assumption \ref{Ass: Forgetfulness future}, $g^{\textbf{s}}$ is the superoptimal regime. Furthermore, if $\mathbb{E}[R_k^{\overline{a}_k} +  \mathcal{V}_{P,k+1}^{\overline{a}_k}(H_{k+1}^{\overline{a}_k}, \overline{A}_{k + 1}^{\overline{a}_k})| \overline{L}_k, \overline{R}_{k-1}, \overline{A}_{k-1}]$ is identified, $g^{\textbf{sup}} = g^{\textbf{s}}$ is identified.
    \label{prop: forgetfulness superopt identification}
\end{proposition}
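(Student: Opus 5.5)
We prove Proposition \ref{prop: forgetfulness superopt identification} by backward induction on $k$, using the Bellman characterization of $g^{\textbf{sup}}$. Since every $g\in\mathcal{G}^{\textbf{sup}}$ may use $(H_k^g,\overline{A}_k^g)$ at time $k$, the superoptimal regime is obtained by dynamic programming with the augmented state $(H_k^{\overline{a}_{k-1}},\overline{A}_k^{\overline{a}_{k-1}})$. Here $\overline{a}_{k-1}$ is the realized assigned history, which is itself a function of the augmented states. Denote the corresponding optimal value functions by $\mathcal{W}_k^{\overline{a}_{k-1}}$. The claim reduces to showing that, for every $k$,
\begin{align*}
\mathcal{W}_k^{\overline{a}_{k-1}}(H_k^{\overline{a}_{k-1}},\overline{A}_k^{\overline{a}_{k-1}})=\mathcal{V}_{P,k}^{\overline{a}_{k-1}}(H_k^{\overline{a}_{k-1}},\overline{A}_k^{\overline{a}_{k-1}}),
\end{align*}
and that this common value depends on $\overline{A}_k^{\overline{a}_{k-1}}$ only through $A_k^{\overline{a}_{k-1}}$. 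Given this, the argmax defining $g^{\textbf{sup}}_k$ coincides with the argmax defining $g^{\textbf{s}}_k$.

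\emph{Base case $k=K$.} The superoptimal rule maximizes $\mathbb{E}[R_K^{\overline{a}_{K-1},a_K}\mid H_K^{\overline{a}_{K-1}},\overline{A}_K^{\overline{a}_{K-1}}]$ over $a_K$. Assumption \ref{Ass: Forgetfulness future} at $k=K$ (equivalently, its implication Assumption \ref{Ass: Forgetfulness}) allows dropping $\overline{A}_{K-1}^{\overline{a}_{K-2}}$ from the conditioning set. This gives exactly $g^{\textbf{s}}_K$ and $\mathcal{V}_{P,K}$, and the value depends only on $(H_K,A_K)$.

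\emph{Inductive step.} Assume the claim holds at $k+1$. The stage-$k$ superoptimal criterion is
\begin{align*}
\mathbb{E}\big[R_k^{\overline{a}_k}+\mathcal{V}_{P,k+1}^{\overline{a}_k}(H_{k+1}^{\overline{a}_k},A_{k+1}^{\overline{a}_k})\,\big|\,H_k^{\overline{a}_{k-1}},\overline{A}_k^{\overline{a}_{k-1}}\big].
\end{align*}
By the inductive hypothesis, $\mathcal{V}_{P,k+1}$ is a measurable function of $(H_k^{\overline{a}_{k-1}},R_k^{\overline{a}_k},L_{k+1}^{\overline{a}_k},A_{k+1}^{\overline{a}_k})$. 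Assumption \ref{Ass: Forgetfulness future} states that $(R_k^{\overline{a}_k},L_{k+1}^{\overline{a}_k},A_{k+1}^{\overline{a}_k})\independent\overline{A}_{k-1}^{\overline{a}_{k-2}}\mid H_k^{\overline{a}_{k-1}},A_k^{\overline{a}_{k-1}}$, so the whole integrand is conditionally independent of $\overline{A}_{k-1}^{\overline{a}_{k-2}}$. The conditioning on $\overline{A}_{k-1}^{\overline{a}_{k-2}}$ can therefore be dropped, which yields the $g^{\textbf{s}}_k$ criterion. Taking the maximum over $a_k$ closes the induction.

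\emph{Main obstacle.} The delicate step is the bookkeeping of counterfactual indices. In Assumption \ref{Ass: Forgetfulness future}, the natural values $\overline{A}_{k-1}^{\overline{a}_{k-2}}$ are indexed by the assigned past, whereas in a dynamic regime $\overline{a}_{k-1}$ is itself random. The plan is to handle this by conditioning on the event $\overline{A}_{k-1}^{g+}=\overline{a}_{k-1}$, which is determined by $(H_{k-1},\overline{A}_{k-1}^g)$, and then applying consistency as in Assumption \ref{ass: consistency} to replace $H_k^g$ with $H_k^{\overline{a}_{k-1}}$ on that event. Optimality over all of $\mathcal{G}^{\textbf{sup}}$, rather than merely a local argmax, then follows from the standard Bellman verification argument, mirroring Theorem \ref{thm: opt Bellman eqs} with the augmented state.

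\emph{Identification.} Since the criterion at each $k$ reduces to $\mathbb{E}[R_k^{\overline{a}_k}+\mathcal{V}_{P,k+1}\mid H_k^{\overline{a}_{k-1}},A_k^{\overline{a}_{k-1}}]$, it suffices that this quantity is identified. Evaluating it on the observed branch where $\overline{A}_{k-1}=\overline{a}_{k-1}$, consistency turns it into the functional $\mathbb{E}[R_k^{\overline{a}_k}+\mathcal{V}_{P,k+1}\mid\overline{L}_k,\overline{R}_{k-1},\overline{A}_{k-1}]$. This functional is identified by hypothesis, and $\mathcal{V}_{P,k+1}$ is identified by the induction running backward from $K$. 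Hence $g^{\textbf{sup}}=g^{\textbf{s}}$ is identified.
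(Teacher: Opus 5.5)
Your argument that $g^{\textbf{s}}=g^{\textbf{sup}}$ is essentially the paper's. The paper also runs backward induction on $Q$-functions with the augmented state $(h_k,\overline{a}_{k-1},\overline{a}_k')$ and uses Assumption \ref{Ass: Forgetfulness} at $k=K$. In the inductive step it likewise writes $R_k^{\overline{a}_k}+V_{k+1}$ as a function of $(H_k^{\overline{a}_{k-1}},R_k^{\overline{a}_k},L_{k+1}^{\overline{a}_k},A_{k+1}^{\overline{a}_k})$, so that Assumption \ref{Ass: Forgetfulness future} removes $\overline{A}_{k-1}^{\overline{a}_{k-2}}$ from the conditioning set. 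That part is fine.

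The identification paragraph has a genuine gap. After you use Assumption \ref{Ass: Forgetfulness future} to add back $\overline{A}_{k-1}^{\overline{a}_{k-2}}=\overline{a}_{k-1}$ and apply consistency, the stage-$k$ criterion becomes
\begin{align*}
\mathbb{E}\big[R_k^{a_k}+\mathcal{V}_{P,k+1}^{\overline{a}_k}(H_{k+1}^{\overline{a}_k},\overline{A}_{k+1}^{\overline{a}_k})\,\big|\,\overline{L}_k,\overline{R}_{k-1},\overline{A}_{k-1}=\overline{a}_{k-1},A_k=a_k'\big].
\end{align*}
This still conditions on the natural value $A_k=a_k'$. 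Consistency does not remove that conditioning, yet you claim it yields the hypothesised functional $\mathbb{E}[\,\cdot\mid\overline{L}_k,\overline{R}_{k-1},\overline{A}_{k-1}]$, which averages over $A_k$. If the conditioning on $A_k$ could simply be dropped, $g^{\textbf{s}}_k$ would not depend on $a_k'$ at all.

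For $a_k'=a_k$ the display is an observed-data regression by consistency; on $\overline{A}_k=\overline{a}_k$ one also has $A_{k+1}^{\overline{a}_k}=A_{k+1}$. For $a_k'=1-a_k$, however, it is a counterfactual mean among units whose natural value differs from the intervened one, which consistency alone does not reach.

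The missing step is the one the paper makes by invoking Lemma 1 of \citet{stensrud_optimal_2024}, the analogue of Lemma \ref{lemma: lemma 1 analog} here. Write $\Phi^{a_k}:=R_k^{a_k}+\mathcal{V}_{P,k+1}^{\overline{a}_k}(H_{k+1}^{\overline{a}_k},\overline{A}_{k+1}^{\overline{a}_k})$ and $\mathcal{B}:=(\overline{L}_k,\overline{R}_{k-1},\overline{A}_{k-1}=\overline{a}_{k-1})$. Then
\begin{align*}
\mathbb{E}[\Phi^{a_k}\mid\mathcal{B},A_k=1-a_k]=\frac{\mathbb{E}[\Phi^{a_k}\mid\mathcal{B}]-\mathbb{E}[\Phi^{a_k}\mid\mathcal{B},A_k=a_k]\,P(A_k=a_k\mid\mathcal{B})}{P(A_k=1-a_k\mid\mathcal{B})}.
\end{align*}
The first term in the numerator is the functional identified by hypothesis. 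The second is an observed regression by consistency. The denominator is nonzero by Assumption \ref{ass: positivity}, which you need to state. With this step inserted, your proof is complete.
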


\subsection{Markov Decision Processes}
In the reinforcement learning literature, it is often assumed that the data can be modeled as a  Markov Decision Process (MDP): 
\begin{definition}
A process verifying Assumptions \ref{ass: sequential exch} and \ref{ass: Markov general} is called a \emph{Markov Decision Process} (MDP) \citep{bellman_mdp_1957, howard1960dynamic, sutton1999reinforcement}.
\end{definition}
Assumptions \ref{ass: sequential exch} and \ref{ass: Markov general} are restrictive and implausible in many applications. Consider our running example on back pain, Assumption \ref{ass: sequential exch} rules out that doctors have access to outcome-relevant unobserved or unmeasured information not encoded in the data, including visual cues about the pain levels of the patients and additional medical history. Furthermore, Assumption \ref{ass: Markov general} assumes that the first treatment decision, outcome and covariates do not affect the treatment, outcome and covariates at the second time-point. For example, Assumption \ref{ass: Markov general} implies that the opioid usage and depression status are independent at baseline and 3 months later, which is seemingly unlikely given the persistence of depression and opioid addiction. Assumption \ref{ass: Markov general} also requires that patients treated with Duloxetine, an antidepressant, do not experience a change in depression status except through changes in their response to treatment. Although Assumption \ref{ass: Markov general} can be modified to use historical information \citep{sutton1999reinforcement}, it is often implicitly assumed when using the results of MDPs in black-box procedures.

In their widely cited book, \citet{sutton1999reinforcement} acknowledge some of the issues related to unused relevant information (violating Assumption \ref{ass: sequential exch}), and history (violating Assumption \ref{ass: Markov general}) when discussing optimal decision-making in poker: 

\begin{quote}
\emph{``Does Ellen like to bluff, or does she play conservatively? Does her face or demeanor provide clues to the strength of her hand? How does Joe’s play change when it is late at night, or when he has already won a lot of money? Although everything ever observed about the other players may have an
effect on the probabilities that they are holding various kinds of hands, in practice this is far too much to remember and analyze, and most of it will have no clear effect on one’s predictions and decisions. Very good poker players are
adept at remembering just the key clues, and at sizing up new players quickly, but no one remembers everything that is relevant. As a result, the state representations people use to make their poker decisions are undoubtedly non-Markov, and the decisions themselves are presumably imperfect. Nevertheless, people still make very good decisions in such tasks. We conclude that the
inability to have access to a perfect Markov state representation is probably not a severe problem for a reinforcement learning agent."} \citep[Example 3.6]{sutton1999reinforcement}
\end{quote}

\citet{sutton1999reinforcement} argue that the unused information in poker agents may not be a severe problem, and indeed very successful poker regimes based on MDPs that ignore player history and visual cues have been successful in poker tournaments \citep{moravcik_deepstack_2017}. However, as recognized by \citet{sutton1999reinforcement}, their decisions may not be optimal, which becomes a more severe problem in health and public policy settings, where using all available information to inform an optimal decision is more important.

Proposition \ref{prop: MDP sup better than opt} confirms that single-time-point methods can be extended to processes verifying Assumption \ref{ass: Markov general} to improve regimes, but Assumption \ref{ass: sequential exch} reduces to exchangeability for a point treatment setting when Assumption \ref{ass: Markov general} holds, $R^a \independent A |L$. This implies that $g^{\textbf{sup}} = g^{\textbf{init}} = g^{\textbf{opt}}$ \citep{stensrud_optimal_2024}.

However, there are many cases where unmeasured confounding is suspected, e.g. when human decision makers outperform reinforcement learning regimes. Thus, recent work has focused on extending the current literature on point-identification of value functions in single-time-point settings in combination with assumptions on the distribution of potential unmeasured confounders, see for example \citet{monahan_state_1982, zhang2019near, kallus2020confounding, joshi2024towards}. Moreover, it can be valuable to change the design to one of the alternative designs of Section \ref{sec: data fusion} to leverage natural treatment values in regimes; reinforcement learning regimes can learn from observing other agents interact in the same environment or solicit advice from human decision makers with access to unmeasured information to improve decision-making.

\subsection{Bounds}
\label{sec: Bounds}
When point-identification is infeasible, obtaining (sharp) bounds on $\mathbb{E}[R_k^g | H_k^g]$ has practical interest. In particular, we could maximize a chosen function of the bounds that reflects the decision maker's risk preferences. Let
\begin{align*}
    \mathbf{L}_k(H_k^g, a_k) \leq \mathbb{E}[R_k^{\overline{g}_{k-1}, a_k} | H_k^g] \leq \mathbf{U}_k(H_k^g, a_k).
\end{align*}
The \citet{balke_pearl_bounds_1997} or \citet{manski1998monotone} bounds are examples of $\mathbf{L}_k$ and $\mathbf{U}_k$ in instrumental variable settings with $K = 1$. Then, instead of maximizing $\mathbb{E}[R_k^g | H_k^g]$, which is unidentified, we can maximize a chosen utility function reflecting the planner's risk preferences. In particular, \citet{cui2021individualized} suggests using lower bounds of $\mathbb{E}[R_k^g | H_k^g]$ of the form
\begin{align*}
    \mathcal{V}_k(H_k^g,a_k)&:= \mathbb{E}\{(1-w_k(H_k^g))[(\mathbf{L}_k(H_k^g, 1) - \mathbf{U}_k(H_k^g, 0))I(a_k = 1) + \mathbf{L}_k(H_k^g, 0)]\\
    &+ w_k(H_k^g)[\mathbf{L}_k(H_k^g, 1) - (\mathbf{U}_k(H_k^g, 1) - \mathbf{L}_k(H_k^g, 0))I(a_k = 0)]\},
\end{align*}
where $0 \leq w_k(H_k^g) \leq 1$. As shown by \citet{cui2021individualized}, this includes the \citet{wald1949statistical} maximin criterion (pessimist), the \citet{savage1951theory} minimax regret criterion (opportunist), and the \citet{arrow1972optimality} criteria, if the appropriate values of $w_k(H_k^g)$ are chosen.

The results in our work can be extended to optimize functions of $\mathcal{V}_k$, instead of a partially identified value function $\mathbb{E}[R_k^{\overline{g}_{k-1}, a_k} | H_k^g]$ , similarly to results in \citet{chen_estimating_2023}. In particular, we can aim to maximize $\mathbb{E}[\sum_{k = 1}^K \mathcal{V}_k(H_k^g, g_k)]$, which is the aggregate utility function based on the choices of $w_k$. Similarly, we can extend results for the optimal initiation and superoptimal regimes by bounding their analogous value functions in the Bellman equations and choosing a preferred utility function. In particular, we can extend performance guarantees described in the point treatment setting by \citet{laurendeau2024improved}, see also related results \citep{ zhang2019near}. For example, if the planner chooses a utility function that corresponds to the conventionality criteria, the planner can be guaranteed to do no-worse than the previous standard-of-care $g^{\textbf{obs}}$.

\subsection{Case study: Instrumental variables for identification in observational data}
\label{sec: IV identification for obs data}

The results in Section \ref{sec: data fusion} rely on identification of counterfactual distributions using randomization, but hold more generally, provided an identification method is given to replace the randomization.

In this section, we illustrate how to identify $g^{\textbf{init}}$ and $g^{\textbf{sup}}$ using longitudinal instrumental variables (IVs) as in \citet{chen_estimating_2023}.

For this section, assume that in addition to $(\overline{U}_K,\overline{L}_K, \overline{A}_K, \overline{R}_K)$, we have access to $\overline{Z}_K$, which we assume are not directly caused by the other variables, that verify the  longitudinal IV assumptions (see e.g. \citet{swanson2018partial,chen_estimating_2023, michael2024instrumental}): 
\begin{assumption}[Longitudinal IV assumptions]
    The instrumental variables $\overline{Z}_K$ are valid instruments for the effect of $\overline{A}_K$ on $R$ if:
    \begin{enumerate}
        \item IV relevance: $\text{Cov} (Z_k, A_k | H_k, \overline{A}_{k-1}) \neq 0$ for all $k$.
        \item Exclusion-restriction: $R_k^{\overline{z}_k, \overline{a}_k} = R_k^{\overline{a}_k}$ for all $k, \overline{a}_k, \overline{z}_k$.
        \item IV unconfoundedness: $Z_k \independent (R_k^{\overline{z}_k, \overline{a}_k}, \overline{A}_k^{\overline{a}_{k-1}, \overline{z}_k}) | H_k^{\overline{z}_{k-1}, \overline{a}_{k-1}}, \overline{A}_{k-1}^{\overline{z}_{k-1}, \overline{a}_{k-2}}$. 
    \end{enumerate}
    \label{ass: IV assumptions}
\end{assumption}

Assumption \ref{ass: IV assumptions} is not sufficient for identification but with additional assumptions, which allow some heterogeneity necessary for the natural treatment values to improve the performance of the $H$-optimal regime \citep{cui2021necessary,stensrud_optimal_2024}.

Then, following \citet{chen_estimating_2023}, we have the following proposition:
\begin{proposition}
    Under Assumptions \ref{ass: consistency}, \ref{ass: positivity}, \ref{ass: IV assumptions} and an additional IV point-identification assumption, $g^{\textbf{opt}}$ and $g^{\textbf{init}}$ are identified.
    \label{prop: IV identification of opt and init}
\end{proposition}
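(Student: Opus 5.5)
The plan is to reduce the statement to Proposition \ref{prop: initiation regime id}. By that result, $g^{\textbf{init}}$ is identified once the quantities $\mathbb{E}[R_k^{a_k} \mid \overline{L}_k, \overline{R}_{k-1}, \overline{A}_{k-1}]$ and $\mathbb{E}[R_k^{\overline{a}_k} \mid H_k^{\overline{a}_{k-1}}, \overline{A}_j = \overline{A}_j^{g+} = \overline{a}_j]$ are identified for all $j<k$ and all $\overline{a}_k$. Likewise, $g^{\textbf{opt}}$ is identified through the Bellman recursion of Theorem \ref{thm: opt Bellman eqs} once $\mathbb{E}[R_k^{\overline{a}_k} \mid H_k^{\overline{a}_{k-1}}]$ is identified. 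The $g^{\textbf{opt}}$ case is simply the special case $j=0$ of the conditioning sets above, so it is enough to handle the family indexed by $0 \le j < k$ uniformly.

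\textbf{Step 1: point identification of the conditional counterfactual means.} I would take the ``additional IV point-identification assumption'' to be the sequential analogue of the one used by \citet{chen_estimating_2023}. This is a no-unmeasured-common-effect-modifier or homogeneity condition at each stage $k$, made conditional on the enlarged history $(H_k, \overline{A}_{k-1})$ rather than on $H_k$ alone. Within the stratum $\overline{A}_j=\overline{a}_j$ of people who followed their natural treatments through time $j$, Assumption \ref{ass: IV assumptions} (relevance, exclusion, and IV unconfoundedness given $H_k$ and $\overline{A}_{k-1}$) makes $Z_k$ a valid point-treatment instrument for $A_k$. Under the homogeneity condition, the Wald-type ratio then identifies $\mathbb{E}[R_k^{\overline{a}_{k-1},a_k} \mid H_k, \overline{A}_{k-1}]$, and in particular the one-step quantity $\mathbb{E}[R_k^{a_k} \mid \overline{L}_k,\overline{R}_{k-1},\overline{A}_{k-1}]$.

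\textbf{Step 2: extend to longer horizons.} For $j<k-1$ I would proceed by backward induction from $k$ down to $j+1$. The induction hypothesis gives an identified pseudo-outcome $\mathbb{E}[R_k^{\overline{a}_k}\mid H_{i+1}^{\overline{a}_i}, \overline{A}_j]$. At stage $i$, I use $Z_i$ with the same IV argument to remove the intervention on $a_i$, averaging over the law of $H_{i+1}$ given $H_i,\overline{A}_j$. This uses IV unconfoundedness of $Z_i$ conditional on $(H_i^{\overline{a}_{i-1}},\overline{A}_{i-1}^{\overline{a}_{i-2}})$, where the natural treatments after $j$ are marginalized out. This yields exactly the functionals required by Proposition \ref{prop: initiation regime id}. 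Lemma \ref{lemma: lemma 1 analog} is then used to recover $\gamma_j$ from the identified $\omega_j$, $\eta_j$ and the observed $\pi_j$, as in Algorithm \ref{algo: superoptimal switching}.

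\textbf{Step 3: conclude.} Plugging the identified value functions into Algorithm \ref{algo: superoptimal switching} gives $g^{\textbf{init}}$. Plugging them into \eqref{eq: id Bellman opt 1}--\eqref{eq: id Bellman opt 2} gives $g^{\textbf{opt}}$. Assumption \ref{ass: positivity} ensures all the conditional expectations involved are well defined.

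\textbf{Main obstacle.} The hard part is Step 2, specifically checking that the homogeneity assumption, stated conditional on $H_k$, remains valid after additionally conditioning on past natural treatments $\overline{A}_j$. Those treatments are colliders and proxies for $U$, so conditioning on them could break it. I would first try to state the additional assumption directly in the conditional form, given $(H_k,\overline{A}_{j})$ for every $j<k$. I would then argue that this requirement is exactly what is needed for $g^{\textbf{init}}$ but weaker than what $g^{\textbf{sup}}$ needs, because it never involves natural treatments $A_k^{g}$ under intervention.
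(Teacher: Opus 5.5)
Your top-level reduction matches the paper's. Its proof only says the result follows from \citet{chen_estimating_2023} once the natural treatments up to the initiation time are added to the covariates. You likewise enlarge the history, impose the point-identification condition given it, and feed the result into Proposition~\ref{prop: initiation regime id} and the Bellman recursion.

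The gap is in Step~2, where you try to supply by hand the sequential identification that the paper imports. For $i>j+1$ the target $\mathbb{E}[\,\cdot\mid H_i^{\overline{a}_{i-1}},\overline{A}_j]$ lives in the world where $a_{j+1},\ldots,a_{i-1}$ have been imposed. The IV-unconfoundedness part of Assumption~\ref{ass: IV assumptions} is stated in that world. But any Wald-type functional there is built from $A_i^{\overline{a}_{i-1}}$, $H_i^{\overline{a}_{i-1}}$ and $\overline{A}_{i-1}^{\overline{a}_{i-2}}$. By Assumption~\ref{ass: consistency}, these coincide with observed variables only on $\{\overline{A}_{i-1}=\overline{a}_{i-1}\}$. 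So $Z_i$ can only deliver quantities that also condition on $A_m^{\overline{a}_{m-1}}=a_m$ for $j<m<i$. Under unmeasured confounding these natural values are proxies for $U$, so this is not the mean given $(H_i^{\overline{a}_{i-1}},\overline{A}_j)$.

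``Marginalizing out the natural treatments after $j$'' would require the law of $A_m^{\overline{a}_{m-1}}$ given the counterfactual history, i.e.\ natural treatment values under intervention. That is exactly what your closing sentence says the argument never uses, and what the paper names as the obstacle for $g^{\textbf{sup}}$. Your base case has the same defect. Step~1 gives $\mathbb{E}[R_k^{\overline{a}_k}\mid H_k^{\overline{a}_{k-1}},\overline{A}_{k-1}=\overline{a}_{k-1}]$, not $\mathbb{E}[R_k^{\overline{a}_k}\mid H_k^{\overline{a}_{k-1}},\overline{A}_j]$. Since $g^{\textbf{opt}}$ is your case $j=0$, the $g^{\textbf{opt}}$ claim for $K\geq 2$ rests on this step as well. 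The collider concern you single out is real, but it is dispatched by assumption (as in the paper) and is not where the argument breaks.

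To repair it, either invoke the sequential IV identification result on the enlarged history, as the paper does, or carry out the recursion properly:
\begin{itemize}
\item Remove the conditioning on each earlier natural treatment $A_m$ with its own instrument $Z_m$. Apply $Z_m$ to the IV-weighted later-stage outcome, not to the factual-stratum means from Step~1. Those means have already averaged over $U$ given $A_m=a_m$, so plugging them in targets the wrong quantity.
\item Recover conditioning on counterfactual histories as ratios of such identified quantities with outcomes $R_kI(H_k=h_k)$ and $I(H_k=h_k)$, or their kernel versions.
\item For stages after a deviation, the rule also conditions on $A_{j+1}\neq a_{j+1}$, and the relevant counterfactual histories are never observed. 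There you need an additional Lemma~\ref{lemma: lemma 1 analog} step with these outcomes, as in Appendix~\ref{app: Dirac identification}.
\end{itemize}

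Two smaller points. First, Lemma~\ref{lemma: lemma 1 analog} needs the arm-specific mean $\omega_j$, which a Wald contrast alone does not give. Second, a pure homogeneity condition (no $U$--$A_k$ interaction) makes the natural treatments uninformative. Then $g^{\textbf{init}}$ is identified but no better than $g^{\textbf{opt}}$, which is why the paper asks for a point-identification condition that allows such heterogeneity.
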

Proposition \ref{prop: IV identification of opt and init} follows immediately from \citet{chen_estimating_2023} by adding natural treatment values as covariates until time $j$ where $g^{\textbf{init}}$ initiates the counterfactual regime.

Proposition \ref{prop: IV identification of opt and init} does not say that $g^{\textbf{sup}}$ is identified, and that is because additional assumptions are required to identify longitudinal regimes that leverage counterfactual natural treatment values \citep{richardson2013single, young2014identification}. We clarify in Appendix \ref{app: sequential IV} what these conditions are, and detail how to identify $g^{\textbf{sup}}$ using Bellman equations. Again, this clarifies that identification of $g^{\textbf{sup}}$ requires non-standard assumptions relative to $g^{\textbf{init}}$, which is clearly illustrated in the need for larger study designs in Section \ref{sec: data fusion}.

We detail how to identify $g^{\textbf{init}}$ with instrumental variables that satisfy Assumption \ref{ass: IV assumptions} and additional IV identification assumptions or IV bounds in Appendix \ref{app: Dirac identification}. Then, we can use influence-function based estimators as in Section \ref{app: general estimation result} and the modified Q-learning Algorithm \ref{algo: ssw Q-Learning estimation} in Appendix \ref{app: chen and zhang time-varing IV estimation} to estimate $g^{\textbf{init}}$ and $\mathbb{E}[R^{g^{\textbf{init}}}]$. As an illustrative example of the use of instrumental variables to identify and estimate $g^{\textbf{init}}$ and $\mathbb{E}[R^{g^{\textbf{init}}}]$, we give a simulation analysis in Appendix \ref{sec: sequential IV data analysis}.

\section{Bellman equation and $g^\textbf{sup}$}
\label{app: sequential IV}
Analogously to Lemma 1 in \citet{stensrud_optimal_2024}, we derive the following equality for $a_k' \neq a_k$:
\begin{align*}
    \mathbb{E}[R_k^{\overline{a}_k}| H_k^{\overline{a}_{k-1}}, A_k^{\overline{a}_{k-1}} = a_k'] = \frac{\mathbb{E}[R_k^{\overline{a}_k}| H_k^{\overline{a}_{k-1}}] - \mathbb{E}[R_k^{\overline{a}_{k-1}}I(A_k^{\overline{a}_{k-1}} = a_k)| H_k^g]}{P(A_k^{\overline{a}_{k-1}} = a_k' | H_k^{\overline{a}_{k-1}})}.
\end{align*}
 If we can identify $\mathbb{E}[R_k^{\overline{a}_{k-1}}I(A_k^{\overline{a}_{k-1}} = a_k)| H_k^{\overline{a}_{k-1}}]$ and $\mathbb{E}[I(A_k^{\overline{a}_{k-1}} = a_k') | H_k^{\overline{a}_{k-1}}]$, for example with some IV methods that have $R_k^{\overline{a}_{k-1}}I(A_k^{\overline{a}_{k-1}} = a_k)$ and $I(A_k^{\overline{a}_{k-1}} = a_k')$ as outcomes for treatment $A_k$, then we can leverage the natural treatment value $A_k$ for our regimes. 

 Indeed, 
 \begin{align*}
     \mathbb{E}[R_k^{\overline{a}_{k-1}}I(A_k^{\overline{a}_{k-1}} = a_k)| H_k^{\overline{a}_{k-1}}] &= \mathbb{E}[R_k^{\overline{a}_{k-1}}I(A_k^{\overline{a}_{k-1}} = a_k)| H_{k-1}, L_{k}^{a_{k-1}}, R_{k-1}^{a_{k-1}}],\\
     \mathbb{E}[I(A_k^{\overline{a}_{k-1}} = a_k')| H_k^{\overline{a}_{k -1}}] &= \mathbb{E}[I(A_k^{\overline{a}_{k-1}} = a_k')| H_{k-1}^{\overline{a}_{k-2}}, L_{k}^{a_{k-1}}, R_{k-1}^{a_{k-1}}].\\
 \end{align*}
 To clarify the identification structure, we represent variables at time $k$ and $k-1$ in a sequential IV setting using a SWIG in Figure \ref{fig:Sequential IV SWIG Id}.
 
 \begin{figure}
            \centering
            \begin{tikzpicture}[scale = 0.85, transform shape]
                \tikzset{line width=1.5pt, outer sep=0pt,
                ell/.style={draw,fill=white, inner sep=2pt,
                line width=1.5pt},
                swig vsplit={gap=5pt,
                inner line width right=0.5pt},
                swig hsplit={gap=5pt}
                };
                      \node[name=L,ell,shape=ellipse] at (6,-2){$H_{k-1}$};
                    \node[name=A,shape=swig vsplit] at (3,0) {
                                                      \nodepart{left}{$A_{k-1}^{\overline{a}_{k-2}}$}
                                                      \nodepart{right}{$a_{k-1}$} };
                    \node[name=Y, ell, shape=ellipse] at (9,0){$R_{k-1}^{\overline{a}_{k-1}}$};
                    \node[name=U, ell, shape=ellipse] at (10.5,2){$U$};
                    \node[name=Z, ell, shape=ellipse] at (3,-2){$Z_{k-1}$};
                    \node[name=A1, ell, shape = ellipse] at (12,0){$A_{k}^{\overline{a}_{k-1}}$};
                    \node[name=Z1, ell, shape = ellipse] at (12,-2){$Z_{k}$};
                    \node[name=Y1, ell, shape = ellipse] at (18,0){$R_k^{\overline{a}_{k-1}}$};
                    \node[name=L1, ell, shape = ellipse] at (15,-2){$L_{k}^{\overline{a}_{k-1}}$};
                \begin{scope}[>={Stealth[black]},
                  every node/.style={fill=white,circle},
                  every edge/.style={draw=black,very thick}]
                    \path[->] (U) edge (Y);
                    \path[->] (U) edge[bend right=5] (3,0.6);
                    \path[->] (L) edge (2.9,-0.6);
                    \path[->] (Z) edge (A);
                    \path[->] (A) edge (Y);
                    \path [->] (L) edge (Y);
                    \path[->] (Y) edge (A1);
                    \path[->] (Z1) edge (A1);
                    \path[->] (A1) edge (Y1);
                    \path[->] (U) edge (A1);
                    \path[->] (U) edge (Y1);
                    \path[->] (L1) edge (A1);
                    \path[->] (L1) edge (Y1);
                    \path[->] (A) edge[bend left=15] (A1);
                    \path[->] (A) edge[bend left=15] (Y1);
                    \path[->] (A) edge (L1);
                \end{scope}
            \end{tikzpicture}
            \caption{SWIG of variables at times $k-1$ and $k$.}
            \label{fig:Sequential IV SWIG Id}
\end{figure}
To identify $\mathbb{E}[R_k^{\overline{a}_{k-1}}I(A_k^{\overline{a}_{k-1}} = a_k)| H_k^{\overline{a}_{k-1}}]$ and $\mathbb{E}[I(A_k^{\overline{a}_{k-1}} = a_k') | H_k^{\overline{a}_{k-1}}]$ using IV methods, we need, at least, the following conditions to hold at time $k$:
\begin{enumerate}
    \item $\text{Cov}[Z_{k-1}, A_{k-1}^{\overline{a}_{k-2}} | H_{k-1}] \neq 0$, \label{IV cond 1}
    \item $(A_k^{\overline{a}_{k-1}, z_{k-1}}, R_k^{\overline{a}_{k-1}, z_{k-1}}) = (A_k^{\overline{a}_{k-1}}, R_k^{\overline{a}_{k-1}})$ for all $z_{k-1} \in \{0,1\}$, \label{IV cond 2}
    \item $Z_{k-1} \independent A_{k-1}^{\overline{a}_{k-2},z_{k-1}}, (A_k^{\overline{a}_{k-1}, z_{k-1}}, R_k^{\overline{a}_{k-1}, z_{k-1}}) | H_{k-1}$. \label{IV cond 3}
\end{enumerate}
Condition \ref{IV cond 1} is not excluded by the SWIG in Figure \ref{fig:Sequential IV SWIG Id}. Condition \ref{IV cond 2} holds by the structure of the SWIG when intervening on $Z_{k-1}$, see Figure \ref{fig:Sequential IV SWIG with Z_0 intervention}.
 \begin{figure}
            \centering
            \begin{tikzpicture}[scale = 0.85, transform shape]
                \tikzset{line width=1.5pt, outer sep=0pt,
                ell/.style={draw,fill=white, inner sep=2pt,
                line width=1.5pt},
                swig vsplit={gap=5pt,
                inner line width right=0.5pt},
                swig hsplit={gap=5pt}
                };
                      \node[name=L,ell,shape=ellipse] at (6,-2){$H_{k-1}$};
                    \node[name=A,shape=swig vsplit] at (3,0) {
                                                      \nodepart{left}{$A_{k-1}^{\overline{a}_{k-2}, z_{k-1}}$}
                                                      \nodepart{right}{$a_{k-1}$} };
                    \node[name=Y, ell, shape=ellipse] at (9,0){$R_{k-1}^{\overline{a}_{k-1}}$};
                    \node[name=U, ell, shape=ellipse] at (10.5,2){$U$};
                    \node[name=Z, ell, shape= swig hsplit] at (3,-3){\nodepart{lower}{$Z_{k-1}$}
                                                      \nodepart{upper}{$z_{k-1}$} };
                    \node[name=A1, ell, shape = ellipse] at (12,0){$A_{k}^{\overline{a}_{k-1}}$};
                    \node[name=Z1, ell, shape = ellipse] at (12,-2){$Z_{k}$};
                    \node[name=Y1, ell, shape = ellipse] at (18,0){$R_k^{\overline{a}_{k-1}}$};
                    \node[name=L1, ell, shape = ellipse] at (15,-2){$L_{k}^{\overline{a}_{k-1}}$};
                \begin{scope}[>={Stealth[black]},
                  every node/.style={fill=white,circle},
                  every edge/.style={draw=black,very thick}]
                    \path[->] (U) edge (Y);
                    \path[->] (U) edge[bend right=5] (3,0.6);
                    \path[->] (L) edge (2.9,-0.6);
                    \path[->] (Z) edge (A);
                    \path[->] (A) edge (Y);
                    \path [->] (L) edge (Y);
                    \path[->] (Y) edge (A1);
                    \path[->] (Z1) edge (A1);
                    \path[->] (A1) edge (Y1);
                    \path[->] (U) edge (A1);
                    \path[->] (U) edge (Y1);
                    \path[->] (L1) edge (A1);
                    \path[->] (L1) edge (Y1);
                    \path[->] (A) edge[bend left=15] (A1);
                    \path[->] (A) edge[bend left=15] (Y1);
                    \path[->] (A) edge (L1);
                \end{scope}
            \end{tikzpicture}
            \caption{SWIG of variables relevant to sequential IV identification at step $k$, with intervention on $Z_{k-1}$.}
                            \label{fig:Sequential IV SWIG with Z_0 intervention}
\end{figure}
As $Z_{k-1}$ is at least exogenous conditional on $H_{k-1}$, Condition \ref{IV cond 3} also holds, see Figure \ref{fig:Sequential IV SWIG with Z_0 intervention}.

However, when we have a dynamic regime that depends on history and natural treatment values, the IV conditions change, as seen in Figure \ref{fig:Sequential IV SWIG with dynamic regime}.

\begin{figure}
            \centering
            \begin{tikzpicture}[scale = 0.85, transform shape]
                \tikzset{line width=1.5pt, outer sep=0pt,
                ell/.style={draw,fill=white, inner sep=2pt,
                line width=1.5pt},
                swig vsplit={gap=5pt,
                inner line width right=0.5pt},
                swig hsplit={gap=5pt}
                };
                      \node[name=L,ell,shape=ellipse] at (6,-2){$H_{k-1}$};
                    \node[name=A,shape=swig vsplit] at (3,0) {
                                                      \nodepart{left}{$A_{k-1}^{g, z_{k-1}}$}
                                                      \nodepart{right}{$A_{k-1}^{(g, z_{k-1})+}$} };
                    \node[name=Y, ell, shape=ellipse] at (10,0){$R_{k-1}^{g, z_{k-1}}$};
                    \node[name=U, ell, shape=ellipse] at (10.5,3){$U$};
                    \node[name=Z, ell, shape= swig hsplit] at (3,-3){\nodepart{lower}{$Z_{k-1}$}
                                                      \nodepart{upper}{$z_{k-1}$} };
                    \node[name=A1, ell, shape = ellipse] at (14,0){$A_{k}^{g, z_{k-1}}$};
                    \node[name=Z1, ell, shape = ellipse] at (14,-3){$Z_{k}$};
                    \node[name=Y1, ell, shape = ellipse] at (18,0){$R_k^{g, z_{k-1}}$};
                    \node[name=L1, ell, shape = ellipse] at (16,-2){$L_{k}^{g, z_{k-1}}$};
                \begin{scope}[>={Stealth[black]},
                  every node/.style={fill=white,circle},
                  every edge/.style={draw=black,very thick}]
                    \path[->] (U) edge (Y);
                    \path[->] (U) edge[bend right=5] (2.5,0.6);
                    \path[->] (L) edge (2.6,-0.6);
                    \path[->] (Z) edge (2.5,-0.6);
                    \path[->] (A) edge (Y);
                    \path [->] (L) edge (Y);
                    \path[->] (Y) edge (A1);
                    \path[->] (Z1) edge (A1);
                    \path[->] (A1) edge (Y1);
                    \path[->] (U) edge (A1);
                    \path[->] (U) edge (Y1);
                    \path[->] (L1) edge (A1);
                    \path[->] (L1) edge (Y1);
                    \path[->] (A) edge[bend left=15] (A1);
                    \path[->] (A) edge[bend left=15] (Y1);
                    \path[->] (A) edge (L1);
                    \path[->] (A.150) edge[color=blue,dotted,  out=120, in=130, bend left = 90] (A.60);
                    \path[->] (L) edge[color = blue, dotted] (A.300);
                \end{scope}
            \end{tikzpicture}
            \caption{SWIG of variables relevant to sequential IV identification at step $k$ with dynamic intervention regime $g$.}
                            \label{fig:Sequential IV SWIG with dynamic regime}
\end{figure}
Then, the IV conditions are
\begin{enumerate}
    \item $\text{Cov}[Z_{k-1}, A_{k-1}^{g} | H_{k-1}, A_{k-1}^{g+}] \neq 0$, \label{dyn-IV cond 1}
    \item $(A_k^{g, z_{k-1}}, R_k^{g, z_{k-1}}) = (A_k^{g}, R_k^{g})$ for all $z_{k-1} \in \{0,1\}$, \label{dyn-IV cond 2}
    \item $Z_{k-1} \independent  (A_k^{g, z_{k-1}}, R_k^{g, z_{k-1}}) | H_{k-1}$. \label{dyn-IV cond 3}
\end{enumerate}
 $(A_k^{g, z_{k-1}}, R_k^{g, z_{k-1}}) = (A_k^{\overline{g}_{k-2}, A_{k-1}^{(g, z_{k-1})+}}, R_k^{\overline{g}_{k-2}, A_{k-1}^{(g, z_{k-1})+}})$ can depend on $z_{k-1}$, so Condition \ref{dyn-IV cond 2} can fail in the setting of Figure \ref{fig:Sequential IV SWIG with dynamic regime} where $g^{\textbf{sup}}$ may depend on current and future natural treatment values. 

 

\begin{remark}
    We need to use a more computationally intensive method to identify the superoptimal regime because $g^{\textbf{sup}}_k$ depends on future natural treatment values $A_{k +1}^{g^{\textbf{sup}}}, \ldots, A_{K}^{g^{\textbf{sup}}}$ for $k = 1, \ldots, K-1$, see Appendix \ref{app: Dirac identification}. This highlights the benefits of using initiation regimes such as $g^{\textbf{init}}$, which only uses future natural treatment values if it follows the identified observed regime.
\end{remark}

\section{Beyond binary treatments and regime identification in trials}
\label{app: multiple treatments}
We redefine the superoptimal regime $g^{\textbf{sup}}$ when treatments may take a finite, but larger than two, number of values. Let $\mathcal{A}_1, \ldots, \mathcal{A}_K$ be the supports of $A_1, \ldots, A_K$, respectively, and $|\mathcal{A}_k| < \infty$ for all $k = 1, \ldots, K$. The superoptimal regime $g^{\textbf{sup}}$ is then defined as 
\begin{align*}
    g^{\textbf{sup}} := \argmax_{g \text{ s.t. } g_k: \mathcal{H}_k \times \mathcal{A}_1 \times \cdots \times \mathcal{A}_k \to \mathcal{A}_k} \mathbb{E}[\sum_{k = 1}^K R_k^g].
\end{align*}
\begin{proposition}
    Data Structure \ref{algo: recorded nat value trial} identifies $g^{\textbf{sup}}$ for any $|\mathcal{A}_k| \geq 2$. However, Data Structures \ref{algo: obs + trial }, \ref{algo: randomized trial obs}, and \ref{algo: three choice trial } do not.
    \label{prop: multiple treatments}
\end{proposition}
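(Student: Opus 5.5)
The proof has two parts: a positive identification result for Data Structure \ref{algo: recorded nat value trial} for general finite $\mathcal{A}_k$, and non-identification results for the other three designs.

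\emph{Positive part.} I would redo the argument behind Proposition \ref{prop: recorded nat value trial id} with $\{0,1\}$ replaced by $\mathcal{A}_k$. In Data Structure \ref{algo: recorded nat value trial}, the natural value $A_k^g$ is recorded at every time $k$. Assigned treatments are then randomized with known probabilities $p(a\mid H_k^g,\overline{A}_k^g)>0$ for all $a\in\mathcal{A}_k$; this is the multi-valued analogue of Assumption \ref{ass: positivity}. Under consistency and randomization, the joint law of $(H_{k+1}^{g},\overline{A}_{k+1}^{g},R_k^g)$ is observed for every assigned history $\overline{a}_k^+$, conditional on $(H_k,\overline{A}_k^g)$. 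The randomization depends only on recorded variables, so it is sequentially exchangeable given $(H_k,\overline{A}_k^g)$. Hence the quantities
\[
\mathbb{E}[R_k^{\overline{a}_k^+}+\mathcal{V}_{k+1}(H_{k+1},\overline{A}_{k+1}^g)\mid H_k,\overline{A}_k^g,\overline{A}_{k-1}^+=\overline{a}_{k-1}^+]
\]
are identified by the corresponding conditional means in the arm $A_k^{+}=a$. I would then run the Bellman recursion \eqref{eq: id Bellman sup 1}--\eqref{eq: id Bellman sup 2} backward, taking $\argmax$ over the finite set $\mathcal{A}_k$ instead of $\{0,1\}$. Each step is a finite maximization of identified functionals, so $g^{\textbf{sup}}$ is identified by induction on $k=K,\dots,1$. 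Nothing in this argument uses $|\mathcal{A}_k|=2$.

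\emph{Negative part.} For each remaining design I would build two laws $P,P'$ that induce the same observed-data distribution but have different $g^{\textbf{sup}}$. It suffices to take $K=1$, or to embed such an example at a single time with all other variables trivial, since a failure of identification for $K=1$ propagates. For Data Structures \ref{algo: obs + trial } and \ref{algo: randomized trial obs}, one arm is purely observational, giving $P(A,R)$, and another is randomized, giving $\mathbb{E}[R^a]$ for each $a$. With $|\mathcal{A}|=3$, the quantity $g^{\textbf{sup}}(a')=\argmax_a\mathbb{E}[R^a\mid A=a']$ needs the off-diagonal entries $\mathbb{E}[R^a\mid A=a']$ for $a\ne a'$. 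The data only fix their $P(A=a')$-weighted sums over $a'\ne a$, via the analogue of Lemma \ref{lemma: lemma 1 analog}. With three values there are two unknown off-diagonal cells per $a$ and one constraint. I can therefore perturb them in opposite directions, keeping the same marginals and observed law, so as to flip which $a$ attains the maximum for some $a'$. A binary-treatment counterexample already exists via Proposition \ref{prop: usual data fusion impossibility}; that one also transfers directly to $\mathcal{A}_k\supseteq\{0,1\}$ by giving extra levels probability $0$ in the natural arm (positivity in the randomized arm still holds). I would present the three-level construction explicitly as a $3\times3$ table of potential-outcome means.

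\emph{Three-arm design: the main obstacle.} For Data Structure \ref{algo: three choice trial }, for $|\mathcal{A}_k|=2$ the arms ``$0$, $1$, natural'' do identify $g^{\textbf{sup}}$ (Proposition \ref{prop: three choice trial id}). With $|\mathcal{A}_k|\ge3$, the multi-valued analogue of this design is ambiguous, and this is the step I expect to be hardest. The natural reading, which I would fix explicitly, is that units are randomized to each fixed level or to the natural value, and that the natural value is not recorded in the fixed-level arms. Under this reading, the data again reveal only $P(A,R)$ from the natural arm and $\mathbb{E}[R^a]$ from the fixed arms. This is exactly the information available in the previous counterexample, so the same $3\times3$ construction shows non-identification. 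For $K=1$ this is plain; for $K>1$ I would make the extra time points trivial, so the counterexample carries over.
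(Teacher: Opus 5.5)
Your proposal is correct and follows the paper's proof. The positive part is the same backward Bellman recursion over the recorded natural values. Your explicit $3\times 3$ construction is a rigorous version of the paper's counting argument: the law of total expectation gives only one constraint on the $|\mathcal{A}_k|-1\geq 2$ unknown off-diagonal cells $\mathbb{E}[R^{a}\mid A=a']$, $a'\neq a$. Your reading of Data Structure 4, with natural values unrecorded in the fixed-level arms, is the one the paper's proof uses.

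Two small points. First, use binary rewards (or perturb full conditional laws), so that the full marginal laws of $R^{a}$ revealed by the randomized arms, and not just their means, coincide under both models. Second, drop the binary aside via Proposition~\ref{prop: usual data fusion impossibility}. It is unnecessary, and it would need $K\geq 2$, because for $K=1$ Lemma~\ref{lemma: lemma 1 analog} identifies $g^{\textbf{sup}}$ from Data Structure 1.
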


The proofs of Propositions \ref{prop: randomized trial obs id}, \ref{prop: recorded nat value trial id} and
\ref{prop: three choice trial id} for binary treatments follow the same steps. In particular, Data Structure
\ref{algo: randomized trial obs} does not identify $g^{\textbf{sup}}$ for any $|\mathcal{A}_k| \geq 2$: natural
treatment values are not recorded after trial entry, so the sequential identification used in Appendix
\ref{app: Dirac identification} is not available.

\section{Example: \citet{chen_estimating_2023} time-varying IV}
\label{app: chen and zhang time-varing IV estimation}

As an explicit example, we consider the time-varying IV setting in \citet{chen_estimating_2023}. We show how to modify their algorithm to estimate the optimal initiation regime; estimation results of Section \ref{app: general estimation result} can then be applied to obtain convergence guarantees. \citet{chen_estimating_2023} defined the $H$-optimal regime as the regime minimizing the regret function
\begin{align*}
    &g_k^{\textbf{opt}}(H_k^g)=\argmin_{a_k \in \{0,1\}} |Q_k(H_k^g, 1) - Q_k(H_k^g,0)|I[\text{sign}(Q_k(H_k^g, 1) - Q_k(H_k^g,0)) \neq 2a_k - 1],
\end{align*}
where
\begin{align*}
    Q_k^{\overline{a}_{k-1}}(H_k^{\overline{a}_{k-1}}, a_k) := \mathbb{E}[R_k^{\overline{a}_{k -1}, a_k} + \mathcal{V}_{P,k + 1}^{\overline{a}_{k}}(H_k^{\overline{a}_{k-1}}, R_k^{\overline{a}_{k-1}, a_k}, L_{k+1}^{\overline{a}_{k-1}, a_k}) | H_k^{\overline{a}_{k-1}}]
\end{align*}
is the value function of intervention $a_k$ at time $k$.

Then, a modification of Q-learning can be applied to estimate the $H$-optimal regime, see Algorithm \ref{algo: Q-Learning estimation}.

\begin{algorithm}
\spacingset{1}
    \KwData{Observed data $(\overline{L}_K, \overline{A}_K, \overline{R}_K, \overline{Z}_K)$.}
    \KwResult{Estimated $H$-optimal regime $\hat{g}^{\textbf{opt}}$.}
    Estimate $Q_K$ using observations of $(H_K^g, R_K, Z_K)$\;
    Estimate $\hat{C}_K(H_K^g) := \hat{Q}_K(H_K^g, 1) - \hat{Q}_K(H_K^g, 0)$\;
    Set $\hat{g}^{\textbf{opt}}_K(H_K^g) := I(\hat{C}_K(H_K^g)>0)$\;
    Set $\hat{V}_K(H_K^g):= \hat{Q}_K(H_K^g, \hat{g}_K^{\textbf{opt}}(H_K^g))$\;
    \For{$k = K-1, \ldots, 1$}{
    Construct pseudo-outcomes $\hat{\text{PO}}_k := r_k + \hat{V}_{k + 1}(H_{k+1}^g)$\;
    Estimate $Q_k$ using observations of $(H_k^g, \hat{\text{PO}}_k, Z_k)$\;
    Estimate $\hat{C}_k(H_k^g) := \hat{Q}_k(H_k^g, 1) - \hat{Q}_k(H_k^g, 0)$\;
     Set $\hat{g}^{\textbf{opt}}_k(H_k^g) :=  I(\hat{C}_k(H_k^g)>0)$\;
    Set $\hat{V}_k(H_k^g):= \hat{Q}_k(H_k^g, \hat{g}_k^{\textbf{opt}}(H_k^g))$\;
    }
    Return $\hat{g}^{\textbf{opt}}$\;
    \caption{$Q$-learning estimation from \citet{chen_estimating_2023}}
    \label{algo: Q-Learning estimation}
\end{algorithm}

The algorithm suggested by \citet{chen_estimating_2023} needs to be adapted for estimation of $g^{\textbf{init}}$. As seen in Algorithm \ref{algo: superoptimal switching} in Appendix \ref{app: Bellman eqs}, $g^{\textbf{init}}$ is an optimal regime for modified value functions. Then, let $Q_k^{\textbf{init}}$ be defined as
\begin{align*}
    &Q_k^{\textbf{init}}(h_k, \overline{a}_k')\\
    &:= \max_{a_k \in \{0,1\}} \bigg\{I[\overline{A}_{k-1}^{g+} = \overline{a}_{k - 1}'] \\
    &\cdot\mathbb{E}[R_k^{a_k} + \mathcal{V}_{P,k+1}^g(H_{k+1}^g, I[a_k = A_k]\overline{A}_{k + 1} + I[a_k \neq A_k]\overline{A}_{k})| H_k^g = h_k, \overline{A}_k = \overline{a}_k']\\
    & + I[\overline{A}_{k-1}^{g+} \neq \overline{a}_{k - 1}']\\
    &\cdot\mathbb{E}[R_k^{A_{k-1}^{g+}, a_k} + \mathcal{V}_{P,k + 1}^g(H_{k+1}^g, \overline{A}_{\max\{j = 1, \ldots, k: \overline{A}_j^{g^+} = \overline{A}_j = \overline{a}_j'\} }) | H_k^g = h_k, \overline{A}_{\max\{j = 1, \ldots, k: \overline{A}_j^{g^+} = \overline{A}_j = \overline{a}_j'\} }] \bigg\}.
\end{align*}
An estimation procedure for $g^{\textbf{init}}$ can be derived from the algorithm in \citet{chen_estimating_2023}, see Algorithm \ref{algo: ssw Q-Learning estimation}.
\begin{algorithm}
\spacingset{1}
    \KwData{Observed data $(\overline{L}_K, \overline{A}_K, \overline{R}_K, \overline{Z}_K)$.}
    \KwResult{Estimated optimal initiation regime $\hat{g}^{\textbf{init}}$.}
    Estimate $Q_K^{\textbf{init}}$ using observations of $(H_K^g, R_K, \overline{A}_K, Z_K)$\;
    Estimate $\hat{C}_K^{\textbf{init}}(H_K^g, \overline{a}_K) := \hat{Q}_K^{\textbf{init}}(H_K^g, \overline{a}_K, 1) - \hat{Q}_K^{\textbf{init}}(H_K^g, \overline{a}_K, 0)$\;
    Set $\hat{g}^{\textbf{init}}_K(H_K^g, \overline{a}_K) := I(\hat{C}_K^{\textbf{init}}(H_K^g, \overline{a}_K) > 0)$\;
    Set $\hat{V}^{\textbf{init}}_K(H_K^g, \overline{a}_K):= \hat{Q}^{\textbf{init}}_K(H_K^g, \overline{a}_K, \hat{g}_K^{\textbf{init}}(H_K^g, \overline{a}_K))$\;
    \For{$k = K-1, \ldots, 1$}{
    Construct pseudo-outcomes $\hat{\text{PO}}_k := r_k + \hat{V}_{k + 1}^{\textbf{init}}(H_{k+1}^g, \overline{a}_{k+1})$\;
    Estimate $Q_k^{\textbf{init}}$ using observations of $(H_k^g, \hat{\text{PO}}_k, \overline{A}_k, Z_k)$\;
    Estimate $\hat{C}_k^{\textbf{init}}(H_k^g, \overline{a}_k) := \hat{Q}^{\textbf{init}}_k(H_k^g, \overline{a}_k, 1) - \hat{Q}^{\textbf{init}}_k(H_k^g, \overline{a}_k, 0)$\;
     Set $\hat{g}^{\textbf{init}}_k(H_k^g, \overline{a}_k) := I(\hat{C}^{\textbf{init}}_k(H_k^g, \overline{a}_k) > 0)$\;
    Set $\hat{V}^{\textbf{init}}_k(H_k^g, \overline{a}_k):= \hat{Q}^{\textbf{init}}_k(H_k^g, \overline{a}_k, \hat{g}_k^{\textbf{init}}(H_k^g, \overline{a}_k))$\;
    }
    Return $\hat{g}^{\textbf{init}}$\;
    \caption{$Q$-learning estimation for $g^{\textbf{init}}$.}
    \label{algo: ssw Q-Learning estimation}
\end{algorithm}

\subsection{Simulation of an IV setting}
\label{sec: sequential IV data analysis}

Using Assumption \ref{ass: IV assumptions}, we use \citet{manski1998monotone} at the first time point $k = 1$ and \citet{balke_pearl_bounds_1997} bounds at $k = 2 =:K$, and the IV-optimal decision criterion from \citet{chen_estimating_2023} as an outcome $R$. We use Algorithms \ref{algo: Q-Learning estimation} and \ref{algo: ssw Q-Learning estimation}, with influence-function based estimators for $\mathbb{E}[R_k^g | H_k^g, \overline{A}_j^g]$ as in Section \ref{app: general estimation result}, to estimate $g^{\textbf{opt}}$ and $g^{\textbf{init}}$, respectively. We subsequently estimate $\mathbb{E}[R^{\hat{g}^{\textbf{init}}}]$, $\mathbb{E}[R^{\hat{g}^{\textbf{opt}}}]$, and $\mathbb{E}[R^{\hat{g}^{\textbf{obs}}}]$ on $n_{\text{eval}}$ new samples empirically, keeping the estimated regimes $\hat{g}^{\textbf{init}}$ and $\hat{g}^{\textbf{opt}}$ fixed, and compute their respective $95\%$ Gaussian confidence intervals on the new samples.

We first start with the data-generating mechanism used by \citet[Section 6.1]{chen_estimating_2023}, where the observed covariates $L_1 \sim (\text{Bernoulli}(1/2), \text{Bernoulli}(1/2))$, the unobserved confounders $U_1, U_2 \sim \text{Rademacher}(1/2)$, the instrumental variables $Z_1, Z_2 \sim \text{Rademacher}(1/2)$, the treatments $A_i \sim \text{Rademacher}(\text{expit}(-3+ C(1+ Z_i)/2+ \xi(1+ U_i)/2))$ for $i = 1,2$, and outcomes $R_1 \sim \text{Bernoulli}(\text{expit}(0.2L_{1,1} + 0.05L_{1,2}(1+A_1) - \xi(1+U_1)/2))$ and $R_2 \sim \text{Bernoulli}(\text{expit}(0.2L_{1,1} + 0.05R_1(1+A_2) - \xi(1+U_2)/2))$ with $R = R_1 + R_2.$

The data-generating mechanism in the simulation section of \citet[Section 6.1]{chen_estimating_2023} does not have any interaction between $U$ and $A$ for the outcome $R$, hence the initiation and superoptimal regimes will match the optimal regime, as it outperforms the observed regime. For completeness, we include Table \ref{tab: chen and zhang sim results}, with the outcome of applying Algorithm \ref{algo: ssw Q-Learning estimation} for $g^{\textbf{init}}$ compared with $g^{\textbf{opt}}$ and $g^{\textbf{obs}}$.

\begin{table}
    \centering
    \spacingset{1}
    \begin{tabular}{c | c}
         \textbf{Regime} & \textbf{Expected Value and CI}\\
         \hline
         $g^{\textbf{obs}}$& $0.936$ $(0.934, 0.937)$\\
         $\hat{g}^{\textbf{opt}}$& $0.947$ $(0.946, 0.949)$\\
         $\hat{g}^{\textbf{init}}$& $0.946$ $(0.945, 0.947)$\\
    \end{tabular}
    \caption{Simulation results of implementation of Algorithms \ref{algo: Q-Learning estimation} and \ref{algo: ssw Q-Learning estimation} on $n_{\text{eval}} = 1'000'000$ samples for the \citet{chen_estimating_2023} data-generating mechanism.}
    \label{tab: chen and zhang sim results}
\end{table}

We also simulate a data-generating mechanism representing a sequential IV setting using a time-varying version of a simulation in \citet{laurendeau2024improved}, where there is interaction between the confounder and the natural treatment value for the outcome, see Figure \ref{fig:sim DAG} for a graphical representation.

The data-generating mechanism of the sequential IV simulation goes as follows.

Consider two time-points, $K = 2$, and the following distributions:
\begin{itemize}
    \item $U \sim \text{Bernoulli}(0.8)$,
    \item $Z_1,Z_2 \sim \text{Bernoulli}(0.5)$,
    \item $A_1 \sim Z_1\text{Bernoulli}(0.85U + 0.1) + (1-Z_1)\text{Bernoulli}(0.3U + 0.01)$,
    \item $R_1 \sim 2I(A_1\mathcal{N}(U-1,1) + (1-A_1)\mathcal{N}(1-U,1) > 0)$,
    \item $A_2 \sim Z_2\text{Bernoulli}(0.85U + 0.05(2A_1 - 1) + 0.05) + (1-Z_2)\text{Bernoulli}(0.1U + 0.01)$,
    \item $R_2 \sim I(\mathcal{N}(1-U + (1-U)(2A_1-1)Y_1 - (1-U)(2A_2-1)Y_1 + 2(2U-1)(1-2A_2),1) > 0)$.
\end{itemize}

\begin{figure}
\centering
    \begin{tikzpicture}[scale = 0.8]
                \tikzset{line width=1.5pt, outer sep=0pt,
                ell/.style={draw,fill=white, inner sep=2pt,
                line width=1.5pt},
                swig vsplit={gap=5pt,
                inner line width right=0.5pt},
                swig hsplit={gap=5pt}
                };
                    \node[name=A1,ell,shape=ellipse] at (3,0){$A_1$};
                    \node[name=Y1, ell, shape=ellipse] at (9,0){$Y_1$};
                    \node[name=U, ell, shape=ellipse] at (10.5,2){$U$};
                    \node[name=Z1, ell, shape=ellipse] at (3,-2){$Z_{1}$};
                    \node[name=A2, ell, shape = ellipse] at (12,0){$A_2$};
                    \node[name=Z2, ell, shape = ellipse] at (12,-2){$Z_2$};
                    \node[name=Y2, ell, shape = ellipse] at (18,0){$Y_2$};
                \begin{scope}[>={Stealth[black]},
                  every node/.style={fill=white,circle},
                  every edge/.style={draw=black,very thick}]
                    \path[->] (U) edge (Y1);
                    \path[->] (U) edge (A1);
                    \path[->] (Z1) edge (A1);
                    \path[->] (A1) edge (Y1);
                    \path[->] (Y1) edge[bend left = 15] (Y2);
                    \path[->] (Z2) edge (A2);
                    \path[->] (A2) edge (Y2);
                    \path[->] (U) edge (A2);
                    \path[->] (U) edge (Y2);
                \end{scope}
            \end{tikzpicture}
    \caption{DAG representing the data-generating mechanism of the sequential IV simulation in Appendix \ref{sec: sequential IV data analysis}.}
    \label{fig:sim DAG}
\end{figure}
We simulated $n = 10'000$ samples from this data-generating mechanism, and evaluated the performance of the $H$-optimal regime and the optimal initiation regime $g^{\textbf{init}}$ on $n_{\text{eval}} = 1'000'000$ samples using Algorithms \ref{algo: Q-Learning estimation} and \ref{algo: ssw Q-Learning estimation}, respectively.

\begin{table}
    \centering
    \spacingset{1}
    \begin{tabular}{c | c}
         \textbf{Regime} & \textbf{Expected Value and CI}\\
         \hline
         $g^{\textbf{obs}}$& $1.546$ $(1.544, 1.548)$\\
         $\hat{g}^{\textbf{opt}}$& $1.703$ $(1.701, 1.705)$\\
         $\hat{g}^{\textbf{init}}$& $1.935$ $(1.934, 1.937)$\\
    \end{tabular}
    \caption{Simulation results of implementation of Algorithms \ref{algo: Q-Learning estimation} and \ref{algo: ssw Q-Learning estimation} on $n_{\text{eval}} = 1'000'000$ samples.}
    \label{tab: sim results}
\end{table}
The optimal initiation regime $g^{\textbf{init}}$ outperforms the optimal and observed regimes, see Table \ref{tab: sim results}.

\section{Identification of the superoptimal regime with binary treatments using Dirac's delta function}
\label{app: Dirac identification}
We show that $g^{\textbf{sup}}$ is identified when value functions are sequentially identified and treatments are binary, for example in the time-varying IV example of \citet{chen_estimating_2023}. We use Dirac delta function $\delta(x)$ to define new outcomes of interest for identification, which have the property that
\begin{align*}
    \mathbb{E}[\delta(X - a)] = f_X(a),
\end{align*}
where $f_X(a)$ is $X$'s density evaluated in $a$ \citep[p. 276]{tidemand-lichtenberg_nonlinear_2014}. 

Our arguments will hold for all $K \geq 2$, but for simplicity consider $K = 2$. Using previous results on superoptimal regimes in point treatment settings, we know that we can identify $\mathbb{E}[R_1^{a_1} | L_1, A_1]$ \citep{stensrud_optimal_2024}. It remains to be shown whether we can identify $\mathbb{E}[R_2^{a_1,a_2} | L_1, L_2^{a_1}, R_1^{a_1}, A_1 = a_1', A_2^{a_1} = a_2']$. However,
\begin{align}
    &\mathbb{E}[R_2^{a_1,a_2} | L_1, L_2^{a_1}, R_1^{a_1}, A_1 = a_1', A_2^{a_1} = a_2'] = \nonumber\\
    &\begin{cases}
       \mathbb{E}[R_2^{a_1} | L_1, L_2^{a_1}, R_1^{a_1}, A_1 = a_1', A_2^{a_1} = a_2], & \text{ if $a_2' = a_2$,}\\
       \frac{\mathbb{E}[R_2^{a_1,a_2} | L_1, L_2^{a_1}, R_1^{a_1}, A_1 = a_1'] - \mathbb{E}[R_2^{a_1} | L_1, L_2^{a_1}, R_1^{a_1}, A_1 = a_1', A_2^{a_1} = a_2] P(A_2^{a_1} = a_2 | L_1, L_2^{a_1}, R_1^{a_1}, A_1 = a_1')}{P(A_2^{a_1} = a_2' | L_1, L_2^{a_1}, R_1^{a_1}, A_1 = a_1')}, & \text{ if $a_2' \neq a_2$.}
    \end{cases} \label{eq: R_1 lemma 1}
\end{align}
Furthermore, 
\begin{align*}
    &\mathbb{E}[R_2^{a_1} | L_1, L_2^{a_1}, R_1^{a_1}, A_1 = a_1', A_2^{a_1} = a_2] \\
    &= \begin{cases}
        \mathbb{E}[R_2 | L_1, L_2, R_1, A_1 = a_1, A_2 = a_2], & \text{ if $a_1' = a_1$,}\\
        \frac{\mathbb{E}[R_2^{a_1} | L_1, L_2^{a_1}, R_1^{a_1}, A_2^{a_1} = a_2] - \mathbb{E}[R_2 | L_1, L_2, R_1, A_1 = a_1, A_2 = a_2]P(A_1 = a_1 | L_1, L_2^{a_1}, R_1^{a_1}, A_2^{a_1} = a_2)}{P(A_1 = a_1' | L_1, L_2^{a_1}, R_1^{a_1}, A_2^{a_1} = a_2)}, & \text{ if $a_1' \neq a_1$.}
    \end{cases}
\end{align*}
We also have that
\begin{align*}
    P(A_1 = a_1' | L_1, L_2^{a_1} = l_2, R_1^{a_1} = r_1, A_2^{a_1} = a_2) &= f_{A_1 | L_1, L_2^{a_1} = l_2, R_1^{a_1} = r_1, A_2^{a_1} = a_2}(a_1'),\\
    &= \frac{f_{L_2^{a_1}, R_1^{a_1} | A_1 = a_1', L_1}(l_2,r_1) f_{A_1 | L_1}(a_1')}{f_{L_2^{a_1}, R_1^{a_1} |  L_1}(l_2,r_1)},
\end{align*}
and
\begin{align*}
    f_{L_2^{a_1}, R_1^{a_1} |  L_1}(l_2,r_1) = \mathbb{E}[\delta( (L_2^{a_1}; R_1^{a_1}) - (l_2;r_1)) | L_1],
\end{align*}
which we can identify as $\delta( (L_2^{a_1}, R_1^{a_1}) - (l_2,r_1))$ can be identified when there is sequential identification, e.g. when $Z_1$ is a valid instrument for all future variables in the sequential IV setting. Then, 
\begin{align*}
    &f_{L_2^{a_1}, R_1^{a_1} | A_1 = a_1', L_1}(l_2,r_1)\\
    &= \mathbb{E}[\delta( (L_2^{a_1}; R_1^{a_1}) - (l_2;r_1)) | A_1 = a_1', L_1],\\
    &= \frac{\mathbb{E}[\delta( (L_2^{a_1}; R_1^{a_1}) - (l_2;r_1)) | L_1] - \mathbb{E}[\delta((L_2; R_1) - (l_2;r_1)) | A_1 = a_1, L_1]P(A_1 = a_1 | L_1)}{P(A_1 = a_1' | L_1)},\\
    &= \frac{f_{L_2^{a_1}, R_1^{a_1} |  L_1}(l_2,r_1) - f_{L_2, R_1 |A_1 = a_1',  L_1}(l_2,r_1)P(A_1 = a_1 | L_1)}{P(A_1 = a_1' | L_1)},
\end{align*}
which we can also identify. Similarly, we can identify $P(A_1 = a_1' | L_1, L_2^{a_1} = l_2, R_1^{a_1} = r_1, A_2^{a_1} = a_2)$ and $P(A_1 = a_1 | L_1, L_2^{a_1} = l_2, R_1^{a_1} = r_1, A_2^{a_1} = a_2)$.

Furthermore, 
\begin{align*}
    f_{R_2^{a_1} | L_1, L_2^{a_1} = l_2, R_1^{a_1} = r_1, A_2^{a_1} = a_2}(r_2) &= \frac{f_{R_2^{a_1}, L_2^{a_1}, R_1^{a_1}, A_2^{a_1} | L_1} (r_2,l_2,r_1,a_2)}{f_{L_2^{a_1}, R_1^{a_1}, A_2^{a_1} | L_1} (l_2,r_1,a_2)}\\
    &= \frac{\mathbb{E}[\delta((R_2^{a_1}; L_2^{a_1}; R_1^{a_1}; A_2^{a_1}) - (r_2;l_2;r_1;a_2)) | L_1 ]}{\mathbb{E}[\delta((L_2^{a_1}; R_1^{a_1}; A_2^{a_1}) - (l_2;r_1;a_2))| L_1 ]}
\end{align*}
The numerator and denominators are identified as $\delta((R_2^{a_1}; L_2^{a_1}; R_1^{a_1}; A_2^{a_1}) - (r_2;l_2;r_1;a_2))$ and $\delta(( L_2^{a_1}; R_1^{a_1}; A_2^{a_1}) - (l_2;r_1;a_2))$ are valid outcomes for $A_1$. Hence, 
\begin{align*}
    \mathbb{E}[R_2^{a_1} | L_1, L_2^{a_1} = l_2, R_1^{a_1} = r_1, A_2^{a_1} = a_2] = \int r_2 f_{R_2^{a_1} | L_1, L_2^{a_1} = l_2, R_1^{a_1} = r_1, A_2^{a_1} = a_2}(r_2) dr_2
\end{align*}
is identified. Thus, $\mathbb{E}[R_2^{a_1} | L_1, L_2^{a_1} = l_2, R_1^{a_1} = r_1, A_2^{a_1} = a_2, A_1 = a_1']$ is identified.

In the same way, 
\begin{align*}
    &P(A_2^{a_1} = a_2 | L_1, L_2^{a_1} = l_2, R_1^{a_1} = r_1, A_1 = a_1')\\
    &= \frac{\left(\frac{\mathbb{E}[\delta((A_2^{a_1} ; L_2^{a_1} ;R_1^{a_1}) - (a_2, l_2, r_1)) | L_1] - \mathbb{E}[\delta((A_2 ; L_2 ;R_1) - (a_2, l_2, r_2)) | L_1, A_1 = a_1] P(A_1 = a_1 | L_1)}{P(A_1 = a_1' | L_1)} \right)}{\left(\frac{\mathbb{E}[\delta(( L_2^{a_1} ;R_1^{a_1}) - ( l_2, r_1)) | L_1] - \mathbb{E}[\delta(( L_2 ;R_1) - ( l_2, r_1) )| L_1, A_1 = a_1] P(A_1 = a_1 | L_1)}{P(A_1 = a_1' | L_1)} \right)},\\
    &= \frac{\mathbb{E}[\delta((A_2^{a_1} ; L_2^{a_1} ;R_1^{a_1}) - (a_2, l_2, r_1)) | L_1]}{\mathbb{E}[\delta(( L_2^{a_1} ;R_1^{a_1}) - ( l_2, r_1)) | L_1] - \mathbb{E}[\delta(( L_2 ;R_1) - ( l_2, r_1)) | L_1, A_1 = a_1] P(A_1 = a_1 | L_1)}\\
    & - \frac{\mathbb{E}[\delta((A_2 ; L_2 ;R_1) - (a_2, l_2, r_1)) | L_1, A_1 = a_1] P(A_1 = a_1 | L_1)}{\mathbb{E}[\delta(( L_2^{a_1} ;R_1^{a_1}) - ( l_2, r_1)) | L_1] - \mathbb{E}[\delta(( L_2 ;R_1) - ( l_2, r_1) )| L_1, A_1 = a_1] P(A_1 = a_1 | L_1)},
\end{align*}
is identified. A similar argument shows $P(A_2^{a_1} = a_2' | L_1, L_2^{a_1} = l_2, R_1^{a_1} = r_1, A_1 = a_1')$ is identified.

Furthermore,
\begin{align*}
   &\mathbb{E}[R_2^{a_1,a_2} | L_1, L_2^{a_1}, R_1^{a_1}, A_1 = a_1']\\
   &= \frac{\mathbb{E}[R_2^{a_1,a_2} | L_1, L_2^{a_1}, R_1^{a_1}] - \mathbb{E}[R_2^{a_2} | L_1, L_2, R_1, A_1 = a_1]P(A_1 = a_1 | L_1, L_2^{a_1}, R_1^{a_1})}{P(A_1 = a_1' | L_1, L_2^{a_1}, R_1^{a_1})}.
\end{align*}
We can prove that $P(A_1 = a_1 | L_1, L_2^{a_1}, R_1^{a_1})$ and $P(A_1 = a_1' | L_1, L_2^{a_1}, R_1^{a_1})$ are identified using Dirac delta functions and Bayes' theorem as above. The value function $\mathbb{E}[R_2^{a_2} | L_1, L_2, R_1, A_1 = a_1]$ can be identified, for example, with IV methods. Finally, $\mathbb{E}[R_2^{a_1,a_2} | L_1, L_2^{a_1}, R_1^{a_1}]$ can be written as (or bounded by) a function of counterfactuals of $A_1$ only (and not $a_2$), by assumption. As shown above, the joint density of these counterfactuals conditional on $A_1$ and $L_1$ is identified using Dirac's delta function. Therefore, we can identify the elements in Equation \eqref{eq: R_1 lemma 1}, identifying $\mathbb{E}[R_2^{a_1,a_2} | L_1, L_2^{a_1}, R_1^{a_1}, A_1 = a_1', A_2^{a_1} = a_2']$. 

\begin{remark}
    This argument can be extended to $K > 2$ in the same manner. Crucially, this argument relies on the treatments $A_k$ being binary. We do not expect $g^{\textbf{sup}}$ to be identified, even when there are valid IVs at every time point, when treatments can take more than two values.
\end{remark}

\begin{remark}
    In practice, Dirac's delta function cannot be used as an outcome, as it is only ``nonzero" at one point, but with a nonzero expectation. Indeed, Dirac's delta defines a degenerate measure, see \citet{kallenberg1997foundations} for more details. However, we can use smooth approximations, such as Gaussian densities with variances quickly converging to $0$, or other kernel density estimation techniques, for appropriate asymptotic convergence rates of estimators, see \citet{van2000asymptotic} for an overview of results. 
\end{remark}

\section{Expression for the influence function}
\label{app: IF formula}
We give the formula for $\Psi(\mathbb{IF}(\mathbb{E}[R_k^{a_k} | H_k, \overline{A}_{k-1} = \overline{a}_{k-1}]), \overline{A}_k, H_k$), analogously to the formula derived in the single-time-point setting by \citet{stensrud_optimal_2024}:
\begin{align*} 
    &\Psi(\mathbb{IF}(\mathbb{E}[R_k^{a_k} | H_k, \overline{A}_{k-1} = \overline{a}_{k-1}]), \overline{A}_k, H_k)\\
    &:= \bigg\{ \big(\mathbb{IF}(\mathbb{E}[R_k^{a_k} | H_k, \overline{A}_{k-1} = \overline{a}_{k-1}])\\
    &- \underbrace{\frac{I(\overline{A}_k = \overline{a}_k)}{P(\overline{A}_k = \overline{a}_k \mid H_k)}(R_k - \mathbb{E}[R_k | H_k, \overline{A}_k = \overline{a}_k])}_{\mathbb{IF}(\mathbb{E}[R_k | H_k, \overline{A}_k = \overline{a}_k])}P(A_k = a_k | H_k, \overline{A}_{k-1} = \overline{a}_{k-1})\\
    &- \mathbb{E}[R_k | H_k, \overline{A}_k = \overline{a}_k]\\
    &\cdot\underbrace{\frac{I(\overline{A}_{k-1} = \overline{a}_{k-1})}{P(\overline{A}_{k-1} = \overline{a}_{k-1} \mid H_k)}(I(A_k = a_k) - P(A_k = a_k |H_k, \overline{A}_{k-1} = \overline{a}_{k-1}))}_{\mathbb{IF}(P(A_k = a_k | H_k, \overline{A}_{k-1} = \overline{a}_{k-1}))} \big)\\
    & \cdot P(A_k = a_k' | H_k, \overline{A}_{k-1} = \overline{a}_{k-1})\\
    &- \underbrace{\frac{I(\overline{A}_{k-1} = \overline{a}_{k-1})}{P(\overline{A}_{k-1} = \overline{a}_{k-1} \mid H_k)}(I(A_k = a_k') - P(A_k = a_k' | H_k, \overline{A}_{k-1} = \overline{a}_{k-1}))}_{\mathbb{IF}(P(A_k = a_k'| H_k, \overline{A}_{k-1} = \overline{a}_{k-1}))}\\
    &\cdot(\mathbb{E}[R_k^{a_k}| H_k, \overline{A}_{k - 1} = \overline{a}_{k-1}] - \mathbb{E}[R_k|H_k, \overline{A}_{k} = \overline{a}_{k}]P(A_k = a_k | H_k, \overline{A}_{k-1} = \overline{a}_{k-1}))\bigg\}\\
    & \cdot P(A_k = a_k'| H_k, \overline{A}_{k-1} = \overline{a}_{k-1})^{-2}.
\end{align*}

\section{Data-generating mechanisms}
\label{app: Data generating mechs}

\subsection{\citet{batorsky2024integrating} example}
Recall the data-generating mechanism given by \citet{batorsky2024integrating}:
\begin{itemize}
    \item $U \sim \mathcal{N}(0,1)$,
    \item $\text{age} \sim \mathcal{N}(52, 8^2)$,
    \item $\text{opioid}_1 \sim \text{Bernoulli}(0.2)$,
    \item $\text{depression}_1 \sim \text{Bernoulli}(0.3)$,
    \item $A_1 \sim \text{Bernoulli}(\text{expit}(-0.5\text{opioid}_1 - 0.2\text{depression}_1 + 2U)),$
    \item $Y_1 = 4.5 - \text{age}_{\text{std}} + 0.3\text{opioid}_1 - \text{opioid}_1 A_1 + 2\text{depression}_1 A_1 - 0.3\text{age}_{\text{std}} A_1 - 0.6 \text{age}_{\text{std}}^2 A_1 - 0.01 \text{age}_{\text{std}}^3 A_1 + 2UA_1 + \mathcal{N}(0, 0.5)$,
    \item $\text{opioid}_2 \sim \text{Bernoulli}(\text{expit}(\text{opioid}_1 - 0.5A_1))$,
    \item $\text{depression}_2 \sim \text{Bernoulli}(\text{expit}(\text{depression}_1 + 0.7A_1))$,
    \item $A_2 \sim \text{Bernoulli}(\text{expit}(-0.5\text{opioid}_2 - 0.2 \text{depression}_2 + 2U))$,
    \item $Y_2 = 4.5 - \text{age}_{\text{std}} + 0.2\text{opioid}_2 -0.1 \text{depression}_2 + A_2 - \text{opioid}_2 A_2 - 1.5\text{depression}_2 A_2 + 0.1 \text{resp} - 0.5\text{resp}A_2 + 0.3A_1 - 0.3\text{age}_{\text{std}} A_1 - 0.6 \text{age}_{\text{std}}^2 A_1 - 0.01 \text{age}_{\text{std}}^3 A_1 + 2UA_2 + \mathcal{N}(0, 1)$,
\end{itemize}
where $\text{resp} = I(Y_1 > c)$ (the cutoff value $c$ is chosen as in \citet{batorsky2024integrating}), and $\text{age}_{\text{std}}$ is the standardized age, see Figure \ref{fig:trial sim DAG} for the DAG representing the observed data-generating mechanism.

\begin{figure}
\centering
    \begin{tikzpicture}[scale = 0.8]
                \tikzset{line width=1.5pt, outer sep=0pt,
                ell/.style={draw,fill=white, inner sep=2pt,
                line width=1.5pt},
                swig vsplit={gap=5pt,
                inner line width right=0.5pt},
                swig hsplit={gap=5pt}
                };
                    \node[name=A1,ell,shape=ellipse] at (3,0){$A_1$};
                    \node[name=Y1, ell, shape=ellipse] at (9,0){$Y_1$};
                    \node[name=U, ell, shape=ellipse] at (10.5,2){$U$};
                    \node[name=L1, ell, shape=ellipse] at (6,-2){$L_{1}$};
                    \node[name=A2, ell, shape = ellipse] at (12,0){$A_2$};
                    \node[name=L2, ell, shape = ellipse] at (15,-2){$L_2$};
                    \node[name=Y2, ell, shape = ellipse] at (18,0){$Y_2$};
                \begin{scope}[>={Stealth[black]},
                  every node/.style={fill=white,circle},
                  every edge/.style={draw=black,very thick}]
                    \path[->] (U) edge (Y1);
                    \path[->] (U) edge (A1);
                    \path[->] (L1) edge (A1);
                    \path[->] (L1) edge (Y1);
                    \path[->] (A1) edge (Y1);
                    \path[->] (Y1) edge[bend left = 15] (Y2);
                    \path[->] (L1) edge (L2);
                    \path[->] (L1) edge (Y2);
                    \path[->] (A1) edge[bend left = 15] (Y2);
                    \path[->] (L2) edge (A2);
                    \path[->] (L2) edge (Y2);
                    \path[->] (A2) edge (Y2);
                    \path[->] (U) edge (A2);
                    \path[->] (U) edge (Y2);
                \end{scope}
            \end{tikzpicture}
    \caption{DAG representing the data-generating mechanism of the trial design simulation in Section \ref{sec: Sim}.}
    \label{fig:trial sim DAG}
\end{figure}

The observed regime performs relatively well as $U$ is positively correlated with $A_1$ and $A_2$ and the interaction terms in the expectations of $Y_1$ and $Y_2$ are ``$+2UA_1$" and ``$+2UA_2$", respectively: if $U > 0$, $A_1$ and $A_2$ are more likely to be equal to one, which makes $2UA_1$ and $2UA_2$ positive, and if $U < 0$, $A_1$ and $A_2$ are more likely to be equal to zero, which sets them to zero. The optimal initiation regime improves on the observed regime by deviating when the recorded history indicates that the natural treatment value is unlikely to be optimal. To obtain a data-generating mechanism where the observed regime is clearly suboptimal, we replace ``$+2UA_1$" and ``$+2UA_2$" by ``$-2U(1-2A_1)$" and ``$+2U(1-2A_2)$", respectively. If $U > 0$, it is then advantageous to pick $A_1 = 1$ and $A_2 = 0$ to maximize the interaction terms. Reciprocally, if $U < 0$, it is advantageous to pick $A_1 = 0$ and $A_2 = 1$.

\subsubsection{$K = 3$ and $K = 4$ simulations}
 We provide simulations of Data Structures \ref{algo: randomized trial obs}-\ref{algo: three choice trial } for $K = 3$ and $K = 4$.

For both simulations, we keep the data-generating mechanisms for $\overline{L}_2, \overline{A}_2, \overline{Y}_2$.

For $k \geq 2$, we simulate the stage $k$ covariates as follows:
\begin{align*}
O_k &\sim \text{Bernoulli}\big(\text{expit}(O_{k-1}-0.5A_{k-1})\big),\\
D_k &\sim \text{Bernoulli} \big(\text{expit}(D_{k-1}+0.7A_{k-1})\big),\\
A_k &\sim \text{Bernoulli}\big(\text{expit}(-0.5O_k-0.2D_k+2U)\big),
\end{align*}
where $O_2 = \text{opioid}_2$ and $D_2 = \text{depression}_2$.

The stage $k \geq 2$ outcomes are simulated as follows:
\begin{align*}
    Y_k =& 4.5- \text{age}_{\text{std}}+0.2O_k-0.1D_k +A_k-O_kA_k-1.5D_kA_k+0.1\text{resp}_{k-1}-0.5\text{resp}_{k-1}A_k\\
      &+0.3A_{k-1}-\big(0.3\text{age}_{\text{std}}+0.6\text{age}_{\text{std}}^{2}+0.01\text{age}_{\text{std}}^{3}\big)A_{k-1}
        +2U(1-2A_k)\\
        &-0.8A_kA_{k-1}+UA_kA_{k-1}-0.5A_kA_{k-2}+\mathcal N(0,1),
\end{align*}
where $\text{resp}_{k} = I(Y_k > c)$.

The results for $K = 3$ and $K = 4$ are in Tables \ref{tab: combined sim K3} and \ref{tab: combined sim K4}, respectively. As expected $g^{\textbf{init}}$ outperforms $g^{\textbf{obs}}$ and $g^{\textbf{opt}}$ in both settings.

\begin{table}
    \centering
    \spacingset{1}
    \begin{tabular}{c | c | c | c}
         \textbf{Regime} & \textbf{DS \ref{algo: randomized trial obs}} & \textbf{DS \ref{algo: recorded nat value trial}} & \textbf{DS \ref{algo: three choice trial }}\\
         \hline
         $\hat{g}^{\textbf{obs}}$ & 11.432 (11.366, 11.499) & 11.420 (11.354, 11.486) & 11.443 (11.376, 11.510)\\
         $\hat{g}^{\textbf{opt}}$ & 15.466 (15.396, 15.537) & 15.455 (15.382, 15.528) & 15.381 (15.310, 15.452)\\
         $\hat{g}^{\textbf{init}}$ & 16.041 (15.966, 16.115) & 16.171 (16.097, 16.245) & 15.951 (15.876, 16.025)\\
    \end{tabular}
    \caption{Expected values and confidence intervals of the regimes obtained by Data Structures \ref{algo: randomized trial obs}, \ref{algo: recorded nat value trial}, and \ref{algo: three choice trial } over $K=3$ time-points, on the test set with $n_{\text{eval}} = 20'000$ samples based on the modified chronic back pain example of \citet{batorsky2024integrating}. Higher values are better.}
    \label{tab: combined sim K3}
\end{table}

\begin{table}
    \centering
    \spacingset{1}
    \begin{tabular}{c | c | c | c}
         \textbf{Regime} & \textbf{DS \ref{algo: randomized trial obs}} & \textbf{DS \ref{algo: recorded nat value trial}} & \textbf{DS \ref{algo: three choice trial }}\\
         \hline
         $\hat{g}^{\textbf{obs}}$ & 14.293 (14.204, 14.381) & 14.255 (14.166, 14.343) & 14.301 (14.213, 14.389)\\
         $\hat{g}^{\textbf{opt}}$ & 20.580 (20.488, 20.673) & 20.479 (20.384, 20.574) & 20.416 (20.330, 20.501)\\
         $\hat{g}^{\textbf{init}}$ & 20.921 (20.820, 21.023) & 21.114 (21.019, 21.208) & 20.829 (20.732, 20.926)\\
    \end{tabular}
    \caption{Expected values and confidence intervals of the regimes obtained by Data Structures \ref{algo: randomized trial obs}, \ref{algo: recorded nat value trial}, and \ref{algo: three choice trial } over $K=4$ time-points, on the test set with $n_{\text{eval}} = 20'000$ samples in the modified chronic back pain example of \citet{batorsky2024integrating}. Higher values are better.}
    \label{tab: combined sim K4}
\end{table}

\section{Proofs}
\label{app: proofs}

\subsection{Proof of Proposition \ref{prop: ssw > max(obs,opt)}}
\label{app: proof of ssw > max(obs,opt)}
By definition of $\mathcal{G}^{\textbf{init}}$, $g^{\textbf{obs}}, g^{\textbf{opt}}, g^{\textbf{osh}} \in \mathcal{G}^{\textbf{init}}$. Then, the inequality in Proposition \ref{prop: ssw > max(obs,opt)} follows by the definition of $g^{\textbf{init}}$.

The fact that there exist distributions for which the inequality is strict comes from the fact that for any time-point, the superoptimal regime can strictly outperform the optimal and observed regimes \citep{stensrud_optimal_2024, laurendeau2024improved}, which the optimal initiation regime can identify if this happens at the initiation point, unlike the optimal-shifting, optimal and observed regimes.


Let $U_1 \sim \text{Bern}(0.5)$, $A_1\sim \text{Bern}(0.1 + 0.8U_1)$, $R_1 = -4A_1 + 10U_1 + 10A_1U_1 + \epsilon_{R_1}$, where $\mathbb{E}[\epsilon_{R_1}] = 0$, and $U_2 \sim \text{Bern}(0.5)$, $A_2 \sim \text{Bern}(0.1 + 0.8U_2)$, $R_2 = -4(1-A_2) + 10U_2 + 10(1-A_2)U_2 + \epsilon_{R_2}$, where $\mathbb{E}[\epsilon_{R_2}] = 0$. 

Then,
\begin{enumerate}
    \item $\mathbb{E}[R_1^{a_1}] = -4a_1 + 5 + 5a_1 = a_1 + 5$,
    \item $\mathbb{E}[R_1^{a_1} | A_1 = a_1'] = -4a_1 + (10 + 10a_1)\mathbb{E}[U_1 | A_1 = a_1'] = -4a_1 + (10 + 10a_1)\left[0.9a_1' + 0.1(1-a_1')\right]$,
    \item $\mathbb{E}[R_2^{a_2}] = -4(1-a_2) + 5 + 5(1-a_2) = 6 - a_2$,
    \item $\mathbb{E}[R_2^{a_2} | A_2 = a_2'] = -4(1-a_2) + \left[10 + 10(1-a_2)\right]\mathbb{E}[U_2 | A_2 = a_2'] = -4(1-a_2) + \left[10 + 10(1-a_2)\right]\left[0.9a_2' + 0.1(1-a_2')\right]$.
\end{enumerate}
Hence,
\begin{enumerate}
    \item $g^{\textbf{opt}}_1 = 1$ and $g^{\textbf{opt}}_2 = 0$,
    \item $g^{\textbf{init}}_1(A_1) = A_1$ and $g^{\textbf{init}}_2(A_2) = 1-A_2$,
    \item $g^{\textbf{osh}} = g^{\textbf{obs}_{2}}$, that is, $g^{\textbf{osh}}_1(A_1) = A_1$ and $g^{\textbf{osh}}_2 = 0$,
\end{enumerate}
and, writing $v(g) := \mathbb{E}[R_1^{g} + R_2^{g}]$,
\begin{align*}
    v(g^{\textbf{init}}) &= \tfrac{1}{2}(14) + \tfrac{1}{2}(1)  +  \tfrac{1}{2}(14) + \tfrac{1}{2}(1) = 15, \\
    v(g^{\textbf{osh}}) &= 7.5 + 6 = 13.5, \qquad
    v(g^{\textbf{opt}}) = 6 + 6 = 12, \qquad
    v(g^{\textbf{obs}}) = 7.5 + 3.5 = 11,
\end{align*}
so that
\begin{align*}
    \mathbb{E}[R_1^{g^{\textbf{init}}} + R_2^{g^{\textbf{init}}}] = 15 > 13.5 = \max\left(\mathbb{E}[R_1^{g^{\textbf{osh}}} + R_2^{g^{\textbf{osh}}}], \mathbb{E}[R_1^{g^{\textbf{obs}}} + R_2^{g^{\textbf{obs}}}], \mathbb{E}[R_1^{g^{\textbf{opt}}} + R_2^{g^{\textbf{opt}}}]\right),
\end{align*}
which establishes the second claim of Proposition \ref{prop: ssw > max(obs,opt)}.

\subsection{Proof of Proposition \ref{prop: usual data fusion impossibility}}
        For $k > 1$, the natural treatment value $A_k^g$ is unobserved in general, as it is not recorded in the sequentially randomized trial after interventions are made. By consistency,
        \begin{align}
            \mathbb{E}[R_{k}^{a_{k}}|H_{k}, \overline{A}_{k-1}^g = \overline{A}_{k - 1}^{g+}, A_k^g = a_k'] = \mathbb{E}[R_k^{a_k} | \overline{L}_k, \overline{R}_{k-1}, \overline{A}_{k-1}, A_k = a_k']. \label{eq: obs + trial single time-point equivalence}
        \end{align}
        We can interpret Equation \eqref{eq: obs + trial single time-point equivalence} to be a counterfactual parameter of a single intervention that sets $A_k$ to $a_k$. Hence, Equation \eqref{eq: obs + trial single time-point equivalence} can be considered as a single-time-point counterfactual conditional on the observed past. However, Lemma 1 in \citet{stensrud_optimal_2024} shows that, under positivity and consistency, identifying the expression in Equation \eqref{eq: obs + trial single time-point equivalence} for $a_k' = 1-a_k$ is equivalent to identifying
        \begin{align*}
            \mathbb{E}[R_k^{a_k} | \overline{L}_k, \overline{R}_{k-1},  \overline{A}_{k-1} = \overline{A}_{k-1}^{g+}].
        \end{align*}

        As Equation \eqref{eq: obs + trial single time-point equivalence} contains counterfactual quantities, trial data is needed for identification if we do not impose any additional assumptions on the observed data. However, in the trial data, the natural treatment values $\overline{A}_{k-1}$ are unmeasured. Hence, the quantity in Equation \eqref{eq: obs + trial single time-point equivalence} cannot be identified in general.



    
    \subsection{Proof of Proposition \ref{prop: randomized trial obs id}}
    The proof follows from the trial design and Proposition \ref{prop: initiation regime id}, but we provide an alternative proof with structured tree graphs introduced in \citet{robins1986new}.
    \subsubsection{Identification using fully randomized measured causally interpreted structured tree graph models}
\label{app: FRMCISTG models}
In a seminal paper, \citet{robins1986new} introduced Fully Randomized Measured Causally Interpreted Structured Tree Graphs (FRMCISTGs). FRMCISTGs are tree graphs, where each node represents a randomization step and each edge a treatment; intranodal edges are used to represent different natural treatment values at each time point, analogously to their use to represent different covariate levels in \citet{robins1986new}. FRMCISTGs allow for easy representation of the class of parameters identified by the trial designs in Section \ref{sec: data fusion}. In particular, regimes corresponding to source to leaf paths are identified by the trial designs. 

In Figure \ref{fig: Trial Class I FRMCISTGs}, we show the FRMCISTG for the conventional sequentially randomized trial and the staggered entry trial which follows the natural treatment value at the first time point. Figure \ref{fig: Trial Class II FRMCISTGs} presents the FRMCISTG graphs for ``adapted" staggered entry trials, that randomize conditionally on the natural treatment values at the first and first two time points, respectively. Finally, Figure \ref{fig: Trial Class III FRMCISTG} shows the FRMCISTG representation for restricted preference trials that randomize between following the natural treatment values or starting a sequentially randomized trial. This corresponds to Trial design \ref{algo: randomized trial obs} in Section \ref{sec: data fusion}. It is straightforward to see that the initiation regimes can be represented as source-to-leaf paths in the FRMCISTGs for adapted staggered entry SMART trials and restricted patient preference trials, which proves that they can be identified by these trial designs as per \citet{robins1986new}. This provides a nice identification argument for the initiation regimes and $g^{\textbf{init}}$ as a result. Moreover, one can also see that $g^{\textbf{sup}}$ will not in general be identified by the trial designs presented in Figures \ref{fig: Trial Class I FRMCISTGs}-\ref{fig: Trial Class III FRMCISTG} unless additional assumptions are made.

\begin{figure}
    \centering
    \begin{tikzpicture}
           \node (start) at (0,0) {};
            \node (A1=1) at (2,-2) {};
            \node (3DL) at (-3,-3.5) {$\cdots$};
            \node (3DR) at (3,-3.5) {$\cdots$};
            \node (3DML) at (-1,-3.5) {$\cdots$};
            \node (3DMR) at (1,-3.5) {$\cdots$};

            \draw[-] (0,0) -- node[rotate = 45, above, scale = .85] {$0$} (-2,-2);
            \draw[-] (0,0) -- node[rotate = 315, above, scale = .85] {$1$} (2,-2);
            \draw[-] (-2,-2) -- node[rotate = 55, above, scale = .85]{$0$} (3DL);
            \draw[-] (-2,-2) -- node[rotate = 305, above, scale = .85] {$1$} (3DML);
            \draw[-] (2,-2) -- node[rotate = 55, above, scale = .85]{$0$} (3DMR);
            \draw[-] (2,-2) -- node[rotate = 305, above, scale = .85] {$1$} (3DR);

        \node (source) at (8,0) {};
        \node (source_left) at (7.5, -1) {};
        \node (source_right) at (8.5, -1) {};
        \node (L) at (5.5, -3) {};
        \node (R) at (10.5, -3) {};
        \node (LL) at (3.5, -5) {$\cdots$};
        \node (LR) at (7.5, -5) {$\cdots$};
        \node (RL) at (8.5, -5) {$\cdots$};
        \node (RR) at (12.5,-5) {$\cdots$};

        \draw[-] (8,0) -- node[rotate = 55, above, scale = .85]{$0$} (7.5,-1);
        \draw[-] (8,0) -- node[rotate = 305, above, scale = .85] {$1$} (8.5,-1);
        \draw (8,-0.63) circle (0.63);
        \draw (7.5,-1) -- node[rotate = 55, above, scale = .85]{$0$} (5.5,-3);
        \draw (8.5, -1) -- node[rotate = 305, above, scale = .85] {$1$} (10.5,-3);
        \draw (5.5,-3) -- node[rotate = 55, above, scale = .85]{$0$} (LL);
        \draw (5.5, -3) -- node[rotate = 305, above, scale = .85] {$1$} (LR);
        \draw (10.5, -3 ) -- node[rotate = 55, above, scale = .85]{$0$} (RL);
        \draw (10.5, -3) -- node[rotate = 305, above, scale = .85] {$1$} (RR);
    \end{tikzpicture}
    \caption{FRMCISTGs for conventional sequentially randomized trial and staggered entry SMART trials, where patients enter at second time point.}
    \label{fig: Trial Class I FRMCISTGs}
\end{figure}
\begin{figure}
    \centering
    \begin{tikzpicture}[scale = 0.45, transform shape]
        \node (source) at (8,0) {};
        \node (source_left) at (7.5, -1) {};
        \node (source_right) at (8.5, -1) {};
        \node (EL) at (1,-3) {};
        \node (ELL) at (-0.5,-5) {$\cdots$};
        \node (ELR) at (2.5,-5) {$\cdots$};
        \node (L) at (5.5, -3) {};
        \node (R) at (10.5, -3) {};
        \node (LL) at (4, -5) {$\cdots$};
        \node (LR) at (7, -5) {$\cdots$};
        \node (RL) at (9, -5) {$\cdots$};
        \node (RR) at (12,-5) {$\cdots$};
        \node (ERL) at (13.5,-5) {$\cdots$};
        \node (ER) at (15,-3) {};
        \node (ERR) at (16.5,-5) {$\cdots$}; 

        \draw[-] (8,0) -- node[rotate = 55, above, scale = .85]{$0$} (7.5,-1);
        \draw[-] (8,0) -- node[rotate = 305, above, scale = .85] {$1$} (8.5,-1);
        \draw (8,-0.63) circle (0.63);
        \draw (7.5,-1) -- node[rotate = 55, above, scale = .85]{$0$} (5.5,-3);
        \draw (8.5, -1) -- node[rotate = 305, above, scale = .85] {$1$} (10.5,-3);
        \draw (5.5,-3) -- node[rotate = 55, above, scale = .85]{$0$} (LL);
        \draw (5.5, -3) -- node[rotate = 305, above, scale = .85] {$1$} (LR);
        \draw (10.5, -3 ) -- node[rotate = 55, above, scale = .85]{$0$} (RL);
        \draw (10.5, -3) -- node[rotate = 305, above, scale = .85] {$1$} (RR);

        \draw (7.5,-1) -- node[rotate = 25, above, scale = .85]{$1$} (1,-3);
        \draw (1,-3) -- node[rotate = 55, above, scale = .85]{$0$} (ELL);
        \draw (1,-3) -- node[rotate = 305, above, scale = .85] {$1$} (ELR);

        \draw (8.5,-1) -- node[rotate = 335, above, scale = .85]{$0$} (15,-3);
        \draw (15,-3) -- node[rotate = 55, above, scale = .85]{$0$} (ERL);
        \draw (15,-3) -- node[rotate = 305, above, scale = .85] {$1$} (ERR);

        \node (source2) at (26,0) {};
        \node (source_left2) at (25.5, -1) {};
        \node (source_right2) at (26.5, -1) {};
        
        \node (EL2) at (19,-3) {};
        \node (EL2_left) at (18.5,-4) {};
        \node (EL2_right) at (19.5,-4) {};
        
        \node (L2) at (23.5, -3) {};
        \node (L2_left) at (23,-4) {};
        \node (L2_right) at (24,-4) {};
        
        \node (R2) at (28.5, -3) {};
        \node (R2_left) at (28,-4) {};
        \node (R2_right) at (29,-4) {};
        
        \node (LLLL) at (17.5,-6) {$\cdots$};
        \node (LLLR) at (18.5,-6) {$\cdots$};
        \node (LLRL) at (19.5,-6) {$\cdots$};
        \node (LLRR) at (20.5,-6) {$\cdots$};

        \node (LRLL) at (22,-6) {$\cdots$};
        \node (LRLR) at (23,-6) {$\cdots$};
        \node (LRRL) at (24,-6) {$\cdots$};
        \node (LRRR) at (25,-6) {$\cdots$};

        \node (RLLL) at (27,-6) {$\cdots$};
        \node (RLLR) at (28,-6) {$\cdots$};
        \node (RLRL) at (29,-6) {$\cdots$};
        \node (RLRR) at (30,-6) {$\cdots$};

        \node (RRLL) at (31.5,-6) {$\cdots$};
        \node (RRLR) at (32.5,-6) {$\cdots$};
        \node (RRRL) at (33.5,-6) {$\cdots$};
        \node (RRRR) at (34.5,-6) {$\cdots$};

        \node (ER2) at (33,-3) {};
        \node (ER2_left) at (32.5,-4) {};
        \node (ER2_right) at (33.5,-4) {};
        

        \draw[-] (26,0) -- node[rotate = 55, above, scale = .85]{$0$} (25.5,-1);
        \draw[-] (26,0) -- node[rotate = 305, above, scale = .85] {$1$} (26.5,-1);
        \draw (26,-0.63) circle (0.63);
        \draw (25.5,-1) -- node[rotate = 55, above, scale = .85]{$0$} (23.5,-3);
        \draw (23.5,-3) -- node[rotate = 55, above, scale = .85]{$0$} (23,-4);
        \draw (23.5,-3) -- node[rotate = 305, above, scale = .85] {$1$} (24,-4);
        \draw (23.5,-3.63) circle (0.63);
        \draw (23,-4) -- node[rotate = 55, above, scale = .85]{$0$} (LRLL);
        \draw (23,-4) -- node[rotate = 305, above, scale = .85] {$1$} (LRLR);
        \draw (24,-4) -- node[rotate = 55, above, scale = .85]{$0$} (LRRL);
        \draw (24,-4) -- node[rotate = 305, above, scale = .85] {$1$} (LRRR);
        
        \draw (26.5, -1) -- node[rotate = 305, above, scale = .85] {$1$} (28.5,-3);
        \draw (28.5,-3) -- node[rotate = 55, above, scale = .85]{$0$} (28,-4);
        \draw (28.5,-3) -- node[rotate = 305, above, scale = .85] {$1$} (29,-4);
        \draw (28.5,-3.63) circle (0.63);
        \draw (28,-4) -- node[rotate = 55, above, scale = .85]{$0$} (RLLL);
        \draw (28,-4) -- node[rotate = 305, above, scale = .85] {$1$} (RLLR);
        \draw (29,-4) -- node[rotate = 55, above, scale = .85]{$0$} (RLRL);
        \draw (29,-4) -- node[rotate = 305, above, scale = .85] {$1$} (RLRR);

        \draw (25.5,-1) -- node[rotate = 25, above, scale = .85]{$1$} (19,-3);
        \draw (19,-3) -- node[rotate = 55, above, scale = .85]{$0$} (18.5,-4);
        \draw (19,-3) -- node[rotate = 305, above, scale = .85] {$1$} (19.5,-4);
        \draw (19, -3.63) circle (0.63);
        \draw (18.5,-4) -- node[rotate = 55, above, scale = .85]{$0$} (LLLL);
        \draw (18.5,-4) -- node[rotate = 305, above, scale = .85] {$1$} (LLLR);
        \draw (19.5,-4) -- node[rotate = 55, above, scale = .85]{$0$} (LLRL);
        \draw (19.5,-4) -- node[rotate = 305, above, scale = .85] {$1$} (LLRR);

        \draw (26.5,-1) -- node[rotate = 335, above, scale = .85]{$0$} (33,-3);
        \draw (33,-3) -- node[rotate = 55, above, scale = .85]{$0$} (32.5,-4);
        \draw (33,-3) -- node[rotate = 305, above, scale = .85] {$1$} (33.5,-4);
        \draw (33,-3.63) circle (0.63);
        \draw (32.5,-4) -- node[rotate = 55, above, scale = .85]{$0$} (RRLL);
        \draw (32.5,-4) -- node[rotate = 305, above, scale = .85] {$1$} (RRLR);
        \draw (33.5,-4) -- node[rotate = 55, above, scale = .85]{$0$} (RRRL);
        \draw (33.5,-4) -- node[rotate = 305, above, scale = .85] {$1$} (RRRR);
    \end{tikzpicture}
    \caption{FRMCISTGs for adapted staggered entry SMART trials, where natural treatment values are recorded at the first time point only and the first two time points, respectively.}
    \label{fig: Trial Class II FRMCISTGs}
\end{figure}

\begin{figure}
    \centering
     \begin{tikzpicture}[scale = 0.5, transform shape]
        \node (source2) at (26,0) {};
        \node (source_left2) at (25.5, -1) {};
        \node (source_right2) at (26.5, -1) {};
        
        \node (EL2) at (19,-3) {};
        \node (EL2_left) at (18.5,-4) {};
        \node (EL2_right) at (19.5,-4) {};
        
        \node (L2) at (23.5, -3) {};
        
        \node (R2) at (28.5, -3) {};
        
        \node (LL2) at (22, -5) {$\cdots$};
        \node (LR2) at (25, -5) {$\cdots$};
        \node (RL2) at (27, -5) {$\cdots$};
        \node (RR2) at (30,-5) {$\cdots$};
        \node (LLLL) at (17.5,-6) {$\cdots$};
        \node (LLLR) at (18.5,-6) {$\cdots$};
        \node (LLRL) at (19.5,-6) {$\cdots$};
        \node (LLRR) at (20.5,-6) {$\cdots$};



        \node (RRLL) at (31.5,-6) {$\cdots$};
        \node (RRLR) at (32.5,-6) {$\cdots$};
        \node (RRRL) at (33.5,-6) {$\cdots$};
        \node (RRRR) at (34.5,-6) {$\cdots$};

        \node (ER2) at (33,-3) {};
        \node (ER2_left) at (32.5,-4) {};
        \node (ER2_right) at (33.5,-4) {};
        

        \draw[-] (26,0) -- node[rotate = 55, above, scale = .85]{$1$} (25.5,-1);
        \draw[-] (26,0) -- node[rotate = 305, above, scale = .85] {$0$} (26.5,-1);
        \draw (26,-0.63) circle (0.63);
        \draw (25.5,-1) -- node[rotate = 55, above, scale = .85]{$0$} (23.5,-3);
        
        \draw (26.5, -1) -- node[rotate = 305, above, scale = .85] {$1$} (28.5,-3);
        \draw (23.5,-3) -- node[rotate = 55, above, scale = .85]{$0$} (LL2);
        \draw (23.5, -3) -- node[rotate = 305, above, scale = .85] {$1$} (LR2);
        \draw (28.5, -3 ) -- node[rotate = 55, above, scale = .85]{$0$} (RL2);
        \draw (28.5, -3) -- node[rotate = 305, above, scale = .85] {$1$} (RR2);

        \draw (25.5,-1) -- node[rotate = 25, above, scale = .85]{$1$} (19,-3);
        \draw (19,-3) -- node[rotate = 55, above, scale = .85]{$0$} (18.5,-4);
        \draw (19,-3) -- node[rotate = 305, above, scale = .85] {$1$} (19.5,-4);
        \draw (19, -3.63) circle (0.63);
        \draw (18.5,-4) -- node[rotate = 55, above, scale = .85]{$0$} (LLLL);
        \draw (18.5,-4) -- node[rotate = 305, above, scale = .85] {$1$} (LLLR);
        \draw (19.5,-4) -- node[rotate = 55, above, scale = .85]{$0$} (LLRL);
        \draw (19.5,-4) -- node[rotate = 305, above, scale = .85] {$1$} (LLRR);

        \draw (26.5,-1) -- node[rotate = 335, above, scale = .85]{$0$} (33,-3);
        \draw (33,-3) -- node[rotate = 55, above, scale = .85]{$0$} (32.5,-4);
        \draw (33,-3) -- node[rotate = 305, above, scale = .85] {$1$} (33.5,-4);
        \draw (33,-3.63) circle (0.63);
        \draw (32.5,-4) -- node[rotate = 55, above, scale = .85]{$0$} (RRLL);
        \draw (32.5,-4) -- node[rotate = 305, above, scale = .85] {$1$} (RRLR);
        \draw (33.5,-4) -- node[rotate = 55, above, scale = .85]{$0$} (RRRL);
        \draw (33.5,-4) -- node[rotate = 305, above, scale = .85] {$1$} (RRRR);
    \end{tikzpicture}
    \caption{FRMCISTGs for restricted patient preference trials, where patients are randomized between following their natural treatment values or entering a conventional sequentially randomized trial.}
    \label{fig: Trial Class III FRMCISTG}
\end{figure}

    \subsection{Proof of Proposition \ref{prop: recorded nat value trial id}}
    The proof follows from the trial design and Proposition \ref{prop: initiation regime id}.

    \subsection{Proof of Proposition \ref{prop: three choice trial id}}
    The proof follows from the trial design and Proposition \ref{prop: initiation regime id}.

    \subsection{Proof of Proposition \ref{prop: multiple treatments}}

For identifiability of the superoptimal regime, it is useful to think of each intervention as a separate counterfactual world. That is, at every time $k$, conditional on history $H_k^g$, if we intervene to set $A_k^g$ to $a_k \in \mathcal{A}_k$, then:
\begin{enumerate}
    \item The superoptimal regime uses the natural treatment value $A_k^g$ and thus needs to evaluate the effect of intervention $a_k$ for all possible values of $A_k^g$, that is, there are $|\mathcal{A}_k|$ unknowns we need to evaluate.
    \item The sequentially randomized trial data and the observed data give us information on the unknown values through the law of total expectation:
    \begin{align*}
        \mathbb{E}[R_k^{\overline{A}_{k-1}^{g+}, a_k} | H_k^g] &= \sum_{a_k' \in \mathcal{A}_k} \mathbb{E}[R_k^{\overline{A}_{k-1}^{g+}, a_k} | H_k^g, A_k^g = a_k'] P(A_k^g = a_k' | H_k^g)
    \end{align*}
    A sequentially randomized trial gives us the value of the left hand term. The observed data gives us the value of $\mathbb{E}[R_k^{\overline{A}_{k-1}^{g+}, a_k} | H_k^g, A_k^g = a_k']$ and the $P(A_k^g = a_k' | H_k^g)$ if and only if $\overline{A}_{k-1}^{g+} = \overline{A}_{k-1}$ in Trial \ref{algo: obs + trial }. However, if $|\mathcal{A}_k| > 2$, there will always be one equation for more than one unknowns and hence the value functions for $g^{\textbf{sup}}$ will not be identified in general.
\end{enumerate}

We derive the results for each design separately.
    \begin{itemize}
        \item \textbf{Data Structure \ref{algo: obs + trial }:} Suppose that Data Structure \ref{algo: obs + trial } can identify $g^{\textbf{sup}}$ without additional assumptions. Then, for any set of unmeasured confounders $\overline{U}_K$, take the data-generating mechanisms where the natural treatment values $A_k^g$ are equal to the $(H,U)$-optimal regime, that is, the $H$-optimal regime conditional on history $H_k^g$ and all unmeasured confounders $\overline{U}_k$. Then, as we assumed that $g^{\textbf{sup}}$ is identified, this implies that for any set of unmeasured confounders, Data Structure \ref{algo: obs + trial } will identify the optimal treatment for all units, as it does not know the distribution of the $A_k^g$ conditional on $\overline{U}_k$ and $H_k^g$ and it is a possibility that they are equal to the $(H,U)$-optimal regime. However, it is also possible that $A_k^g$ is equal to the worst possible treatment conditional on $H_k^g$ and $\overline{U}_k$. As we do not observe the $U_k$, and the value functions conditional on $A_k^g$ are not identified as discussed above, Data Structure \ref{algo: obs + trial } cannot distinguish between these two settings when the $A_k^g$ are counterfactual, unmeasured, covariates.
        
        \item \textbf{Data Structure \ref{algo: randomized trial obs}:} The argument in Appendix \ref{app: Dirac identification} does not hold when there are more than two treatments as there are then less equations than unknowns, that is, the same argument as for Data Structure \ref{algo: obs + trial } holds as we do not observe counterfactual natural treatment values after interventions.
        
        \item \textbf{Data Structure \ref{algo: recorded nat value trial}:} It is immediate to see that this design identifies $g^{\textbf{sup}}$ if the recorded natural treatment values are accurate.
        
        \item \textbf{Data Structure \ref{algo: three choice trial }:} This design does not identify $g^{\textbf{sup}}$ at time $k$ when the $|\mathcal{A}_k| > 2$ as the law of total expectation has $|\mathcal{A}_k| - 1 > 1$ unknowns, the natural treatment values $A_k^g$ that are not recorded when intervening.
    \end{itemize}

\subsection{Proof of Proposition \ref{prop: initiation regime id}}
    The optimal initiation regime can be decomposed in three steps:
    \begin{enumerate}
        \item It follows the observed regime until a time-point $j$ (potentially $j = K$),
        \item It performs an ``initiation" step at time-point $j + 1$ (if $j < K$),
        \item It follows the $H$-optimal regime after the initiation step, conditional on the observed history up to time $j + 1$ (if $j + 2 \leq K$).
    \end{enumerate}
    We show that $g^{\textbf{init}}_k$ is identified for all $k = 1, \ldots, K$ by considering three cases:
    \begin{enumerate}
        \item $\mathbf{k \leq j:}$ For $k \leq j$, $g_k^{\textbf{init}} = A_k$, which is observed. Hence, $g_k^{\textbf{init}}$ is identified.

        \item $\mathbf{k = j + 1:}$ As $g^{\textbf{init}}$ follows the observed regime until time-point $j$, we have by consistency that at time $k = j + 1$,
    \begin{align*}
        g_{k}^{\textbf{init}}(h_k, \overline{a}_k') &= \argmax_{a_k \in \{0,1\}} \mathbb{E}[R_k^{a_k} + \mathcal{V}_{P,k + 1}^g(H_{k+1}^g)|  \overline{L}_k = \overline{l}_k, \overline{R}_{k-1} = \overline{r}_{k-1}, \overline{A}_{k-1} = \overline{a}_{k-1}', A_k = a_k'].
    \end{align*}
    However, by assumption we can identify $\mathbb{E}[R_k^{a_k} | \overline{L}_k, \overline{R}_{k-1}, \overline{A}_{k-1}]$. Hence, by Lemma 1 in \citet{stensrud_optimal_2024}, we can identify $\mathbb{E}[R_k^{a_k} |  \overline{L}_k, \overline{R}_{k-1}, \overline{A}_{k-1}, A_k]$, which implies that $g_k^{\textbf{init}}$ is identified if $\mathcal{V}_{P,k + 1}^g(H_{k+1}^g)$ is identified.

    \item  $\mathbf{k > j + 1:}$ For $k > j + 1$, suppose that all $g^{\textbf{init}}_i$ are identified and constant for $i < k$ conditional on history $(H_i, \overline{A}_j = \overline{A}_j^{g^{\textbf{init}}+})$ from time $j$, 
    Then, 
    \begin{align*}
        g_k^{\textbf{init}}(h_k , \overline{a}_k') &= \argmax_{a_k \in \{0,1\}} \mathbb{E}[R_k^{\overline{g}_k^{\textbf{init}}} + \mathcal{V}_{P,k + 1}^g(H_{k+1}^g, \overline{A}_{k + 1}) | H_k^g = h_k, \overline{A}_k^g = \overline{a}_k'],
    \end{align*}
    which is identified by assumption, provided $\mathcal{V}_{P,k + 1}^g(H_{k+1}^g, \overline{A}_{k + 1}^g)$ is identified. 
    \end{enumerate}
    However,
    \begin{align*}
          \mathcal{V}_{P,K}^g(H_{K}, \overline{A}_K) &= \mathbb{E}[R_K^{\overline{g}_K^{\textbf{init}}} | H_K^g, \overline{A}_K^g]
    \end{align*}
    is in one of the three previous cases and thus is identified. The result then follows by the induction principle.

  \subsection{Proof of Proposition \ref{prop: estimation convergence}}
    \label{app: estimation convergence prop proof}


    We use stochastic process notation, that is, let $Pf := \mathbb{E}_P[f(O)]$ and $\mathbb{P}_n f := \frac{1}{n}\sum_{i = 1}^n f(O_i)$. Then,
    \begin{align*}
        &\hat{\mathbb{E}}(\sum_{k = 1}^K R_k^{\hat{g}}) - \mathbb{E}(\sum_{k = 1}^K R_k^g)\\
        &= \hat{\mathbb{E}}(\sum_{k = 1}^K R_k^g) - \mathbb{E}(\sum_{k = 1}^K R_k^g) + \underbrace{\hat{\mathbb{E}}(\sum_{k = 1}^K R_k^{\hat{g}} - R_k^g)}_{= \mathcal{E}^{\hat{g}}}\\
        &= \sum_{k = 1}^K (\hat{\mathbb{E}}(R_k^g) - \mathbb{E}(R_k^g)) + \mathcal{E}^{\hat{g}}\\
        &= \sum_{k = 1}^K (\mathbb{P}_n(\hat{f}_k^g - f_k^g) + \mathcal{E}^{\hat{g}}\\
        &= \sum_{k = 1}^K \mathbb{P}_n(\hat{f}_k^g - f_k^g) + (\mathbb{P}_n - P)f_k^g + \mathcal{E}^{\hat{g}}\\
        &= \sum_{k = 1}^K (\mathbb{P}_n-P)(\hat{f}_k^g - f_k^g)+ P(\hat{f}_k^g - f_k^g) + (\mathbb{P}_n - P)f_k^g + \mathcal{E}^{\hat{g}}
    \end{align*}
    However, $(\mathbb{P}_n-P)(\hat{f}_k^g - f_k^g)$ is $o_P(n^{-1/2})$ because of conditions \ref{assL2: 1} and \ref{assL2: 7} of Proposition \ref{prop: estimation convergence}, $P(\hat{f}_k^g - f_k^g)$ is $o_P(n^{-1/2})$ because of condition \ref{assL2: 6}, and $\sqrt{n}(\mathbb{P}_n - P)f_k^g$ is asymptotically normal because of the central limit theorem.
    
    Furthermore, 
    \begin{align*}
        \mathcal{E}^{\hat{g}} = \hat{\mathbb{E}}(R^{\hat{g}}) - \hat{\mathbb{E}}(R^{g})
         = \sum_{k=1}^K \Big[(\mathbb{P}_n - P)(\hat{f}_k^{\hat{g}} - \hat{f}_k^{g}) + P(\hat{f}_k^{\hat{g}} - \hat{f}_k^{g})\Big].
    \end{align*}
    Let $D_k := \max_{j \leq k} \big(\hat{g}_j(H_j^g, \overline{A}_j^{g+}) - g_j(H_j^g, \overline{A}_j^{g+})\big)^2 = I(\hat{g}_j \neq g_j \text{ for some } j \leq k)$. When $D_k = 0$, $\hat{g}$ and $g$ assign the same treatments until stage $k$, so $R_k^{\hat{g}} = R_k^g$ and $H_k^{\hat{g}} = H_k^g$ by consistency, and hence $\hat{f}_k^{\hat{g}} = \hat{f}_k^{g}$. Therefore
    \begin{align*}
        \hat{f}_k^{\hat{g}} - \hat{f}_k^{g} = (\hat{f}_k^{\hat{g}} - \hat{f}_k^{g}) D_k \text{ and }
         |\hat{f}_k^{\hat{g}} - \hat{f}_k^{g}| \leq 2C D_k
    \end{align*}
    by condition \ref{assL2: 3}. For the second term, condition \ref{assL2: 9} gives
    \begin{align*}
        |P(\hat{f}_k^{\hat{g}} - \hat{f}_k^{g})| \leq 2C\, P(D_k = 1)
        \leq 2C \sum_{j \leq k} P(\hat{g}_j \neq g_j)
        = 2C \sum_{j \leq k} \|\hat{g}_j - g_j\|_{L_2(P)}^2 = o_P(n^{-1/2}).
    \end{align*}
    
    For the first term, $\|(\hat{f}_k^{\hat{g}} - \hat{f}_k^{g}) D_k\|_{L_2(P)}^2 \leq 4C^2 P(D_k = 1) \to_P 0$, and by conditions \ref{assL2: 7} and \ref{assL2: 8} the function $(\hat{f}_k^{\hat{g}} - \hat{f}_k^{g}) D_k$ lies in a $P$-Donsker class (products of uniformly bounded Donsker classes are Donsker \citep[Example 2.10.8]{van1996weak}), so $(\mathbb{P}_n - P)((\hat{f}_k^{\hat{g}} - \hat{f}_k^{g}) D_k) = o_P(n^{-1/2})$ by \citet[Lemma 19.24]{van2000asymptotic}. Hence $\mathcal{E}^{\hat{g}} = o_P(n^{-1/2})$, and $\sqrt{n}(\hat{\mathbb{E}}(R^{\hat{g}}) - \mathbb{E}(R^{g}))$ has the same limit distribution as $\sqrt{n}(\hat{\mathbb{E}}(R^{g}) - \mathbb{E}(R^{g}))$.

    \subsection{Proof of Proposition \ref{prop: IF asymptotic normality}}
    \label{app: IF asymptotic normality prop proof}

To prove Proposition \ref{prop: IF asymptotic normality}, we introduce the following lemma for influence functions:

\begin{lemma}
Fix $k$, $\overline{a}_{k-1}$ and $a_k \neq a_k'$, and let $S_k := I(\overline{A}_{k-1} = \overline{a}_{k-1})$,
$\omega_k(H_k) := \mathbb{E}[R_k^{a_k} | H_k, \overline{A}_{k-1} = \overline{a}_{k-1}]$ and
\begin{align*}
    \psi_k := \mathbb{E}\big[S_k\, \mathbb{E}[R_k^{a_k} | H_k, \overline{A}_{k-1} = \overline{a}_{k-1}, A_k = a_k']\, I(A_k = a_k')\big]
    = \mathbb{E}\big[S_k \{\omega_k(H_k) - R_k I(A_k = a_k)\}\big],
\end{align*}
where the equality follows from Lemma \ref{lemma: lemma 1 analog}. Let $\hat{\omega}_k$ be an estimator of
$\omega_k$ built from an outcome regression $\hat{\mu}_k$ and a treatment (or trial-entry) probability
$\hat{\pi}_k$, with influence-function correction $\hat{\varphi}_k$ satisfying $P(S_k \varphi_k) = 0$ and
\begin{align*}
    \big|P\{S_k(\hat{\omega}_k + \hat{\varphi}_k - \omega_k)\}\big| \leq c\, \|\hat{\mu}_k - \mu_k\|_{L_2(P)}\, \|\hat{\pi}_k - \pi_k\|_{L_2(P)}
\end{align*}
for some $c < \infty$, and let $\tilde{\psi}_k := \mathbb{P}_n\big[S_k\{\hat{\omega}_k + \hat{\varphi}_k - R_k I(A_k = a_k)\}\big]$. If
\begin{enumerate}[label=(\roman*)]
  \item $\|\hat{\mu}_k - \mu_k\|_{L_2(P)}\, \|\hat{\pi}_k - \pi_k\|_{L_2(P)} = o_P(n^{-1/2})$, \label{assK1: i}
  \item $\|\hat{\omega}_k + \hat{\varphi}_k - \omega_k - \varphi_k\|_{L_2(P)} = o_P(1)$, \label{assK1: ii}
  \item $\hat{\omega}_k + \hat{\varphi}_k$ is $P$-Donsker (or estimated in a separate sample), \label{assK1: iii}
  \item $\hat{\pi}_k \in (\epsilon, 1-\epsilon)$ and $|R_k| \leq C$ with probability one, \label{assK1: iv}
\end{enumerate}
then $\sqrt{n}(\tilde{\psi}_k - \psi_k) \to^d N\big(0, \mathrm{Var}[S_k\{\omega_k + \varphi_k - R_k I(A_k = a_k)\}]\big)$.
\label{lemma: one-step convergence result}
\end{lemma}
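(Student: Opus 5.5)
The plan is to use the standard one-step/von Mises decomposition for a linear functional of the form $P(S_k\{\omega_k - R_k I(A_k=a_k)\})$. First, write the estimation error in three terms:
\begin{align*}
\tilde{\psi}_k - \psi_k
&= (\mathbb{P}_n - P)\big[S_k\{\omega_k + \varphi_k - R_k I(A_k = a_k)\}\big] \\
&\quad + (\mathbb{P}_n - P)\big[S_k(\hat{\omega}_k + \hat{\varphi}_k - \omega_k - \varphi_k)\big]
+ P\big[S_k(\hat{\omega}_k + \hat{\varphi}_k - \omega_k)\big],
\end{align*}
where I use $P(S_k\varphi_k)=0$ and $\psi_k = P[S_k\{\omega_k - R_k I(A_k=a_k)\}]$. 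The latter identity is the displayed equality in the statement. It follows from Lemma \ref{lemma: lemma 1 analog}, applied pointwise in $H_k$ with $\overline{A}_{k-1}=\overline{a}_{k-1}$, after multiplying through by $P(A_k = a_k' \mid H_k, \overline{A}_{k-1}=\overline{a}_{k-1})$. By iterated expectations, the left side becomes $\mathbb{E}[S_k\,\gamma\, I(A_k=a_k')]$. By consistency (Assumption \ref{ass: consistency}), $\mathbb{E}[S_k\eta\pi] = \mathbb{E}[S_k R_k I(A_k=a_k)]$.

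Next, I bound each term in turn. The first term is a fixed function, bounded because $|R_k|\le C$, $\hat\pi_k$ is bounded away from zero, and hence $\varphi_k$ is bounded. The classical CLT then gives $\sqrt{n}$ times it converging to $N(0,\mathrm{Var}[S_k\{\omega_k+\varphi_k-R_kI(A_k=a_k)\}])$. The second term is an empirical process term. By (ii) its $L_2(P)$ norm is $o_P(1)$ ($S_k$ is an indicator bounded by one). By (iii) the function lies in a Donsker class, or is independent of the sample under cross-fitting, in which case a conditional Chebyshev bound suffices. Lemma 19.24 of \citet{van2000asymptotic} then makes it $o_P(n^{-1/2})$, exactly as in the proof of Proposition \ref{prop: estimation convergence}. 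The third term is the second-order remainder. The assumed product bound $c\|\hat\mu_k-\mu_k\|\|\hat\pi_k-\pi_k\|$ combined with (i) gives $o_P(n^{-1/2})$. Slutsky's lemma then finishes the proof.

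The main obstacle is the remainder term. If one needs to justify the product-rate bound rather than take it as given, the step is to expand $\hat\varphi_k$ using the explicit form $\Psi$ in Appendix \ref{app: IF formula}. One checks that, after taking conditional expectations given $(H_k,\overline{A}_{k-1})$, the first-order terms cancel, leaving only cross terms of the form $(\hat\pi-\pi)(\hat\mu-\mu)/\hat\pi$. Bounding $1/\hat\pi$ by $1/\epsilon$ via (iv) and applying Cauchy--Schwarz yields the product bound. I would take the bound as stated in the lemma, and only remark that the positivity bound (iv) is what keeps the constant $c$ finite and the influence function square-integrable.
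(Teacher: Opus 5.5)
Your proposal is correct and follows the paper's proof essentially line by line. It uses the same three-term split of $\tilde\psi_k-\psi_k$: the CLT for the fixed-function term, Lemma 19.24 of \citet{van2000asymptotic} via (ii)--(iii) for the empirical-process term, and $P(S_k\varphi_k)=0$ with the assumed product bound and (i) for the remainder. The paper additionally checks that this remainder is exactly zero in Data Structure \ref{algo: randomized trial obs}, rather than appealing to $\Psi$.

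There is one small slip, which the paper shares. $\varphi_k$ is built from the true $\pi_k$, so (iv) on $\hat\pi_k$ does not by itself make $\varphi_k$ bounded. What the CLT step needs is $S_k\{\omega_k+\varphi_k-R_kI(A_k=a_k)\}\in L_2(P)$. That follows, e.g., if the true $\pi_k$ is bounded away from $0$ and $1$, or if the estimated functions are uniformly bounded, since their $L_2(P)$ limit is then bounded too.
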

\begin{proof}
Let $f_k := S_k\{\omega_k + \varphi_k - R_k I(A_k = a_k)\}$ and $\hat{f}_k$ its estimated version, so that
$P f_k = \psi_k$ and $\tilde{\psi}_k = \mathbb{P}_n \hat{f}_k$. Then
\begin{align*}
    \tilde{\psi}_k - \psi_k = (\mathbb{P}_n - P) f_k + (\mathbb{P}_n - P)(\hat{f}_k - f_k) + P(\hat{f}_k - f_k).
\end{align*}
The term $R_k I(A_k = a_k)$ is the same in $\hat{f}_k$ and $f_k$ and $P(S_k \varphi_k) = 0$, so
$P(\hat{f}_k - f_k) = P\{S_k(\hat{\omega}_k + \hat{\varphi}_k - \omega_k)\} = o_P(n^{-1/2})$ by \ref{assK1: i}. By \ref{assK1: ii}-\ref{assK1: iv} and \citet[Lemma 19.24]{van2000asymptotic}, $(\mathbb{P}_n - P)(\hat{f}_k - f_k)
= o_P(n^{-1/2})$. The first term is asymptotically normal by \ref{assK1: iv} and the central limit
theorem. 
 In Data Structure \ref{algo: randomized trial obs}, with $E_k$ the trial-entry
 indicator, $\hat{\varphi}_k = E_k I(A_k^{g+} = a_k)(R_k - \hat{\mu}_k)/(\pi_k p_k)$ and $R_k I(A_k = a_k)$
 replaced by $(1 - E_k) R_k I(A_k = a_k)/(1 - \pi_k)$, the tower property over $(E_k, A_k^{g+})$ given
 $(H_k, \overline{A}_{k-1})$ gives $P\{S_k(\hat{\omega}_k + \hat{\varphi}_k - \omega_k)\} =
 P\{S_k(\hat{\mu}_k - \mu_k)(1 - \pi_k p_k/(\pi_k p_k))\} = 0$.
\end{proof}

    Lemma \ref{lemma: one-step convergence result} gives a conditional convergence result for the one-step estimator at an ``initiation" point $k$. 

     Then,
    \begin{align*}
        &\sum_{k = 1}^K \mathbb{P}_n[\hat{\mathbb{E}}[R_k^g | H_k^g, \overline{A}_{\max \{j < k: \overline{A}_j^{g+} = \overline{A}_j\}}] + \hat{\mathbb{IF}}(\mathbb{E}[R_k^g | H_k^g, \overline{A}_{\max \{j < k: \overline{A}_j^{g+} = \overline{A}_j\}}])] - \mathbb{E}[R^g]\\
        &= \sum_{k = 1}^K (\mathbb{P}_n-P)[\hat{\mathbb{E}}[R_k^g | H_k^g, \overline{A}_{\max \{j < k: \overline{A}_j^{g+} = \overline{A}_j\}}] + \hat{\mathbb{IF}}(\mathbb{E}[R_k^g | H_k^g, \overline{A}_{\max \{j < k: \overline{A}_j^{g+} = \overline{A}_j\}}])]\\
        &+ P[\hat{\mathbb{E}}[R_k^g | H_k^g, \overline{A}_{\max \{j < k: \overline{A}_j^{g+} = \overline{A}_j\}}] + \hat{\mathbb{IF}}(\mathbb{E}[R_k^g | H_k^g, \overline{A}_{\max \{j < k: \overline{A}_j^{g+} = \overline{A}_j\}}])\\
        &- \mathbb{E}[R_k^g | H_k^g, \overline{A}_{\max \{j < k: \overline{A}_j^{g+} = \overline{A}_j\}}] - \mathbb{IF}(\mathbb{E}[R_k^g | H_k^g, \overline{A}_{\max \{j < k: \overline{A}_j^{g+} = \overline{A}_j\}}])]\\
        &= \sum_{k = 1}^K (\mathbb{P}_n-P)[\hat{\mathbb{E}}[R_k^g | H_k^g, \overline{A}_{\max \{j < k: \overline{A}_j^{g+} = \overline{A}_j\}}] + \hat{\mathbb{IF}}(\mathbb{E}[R_k^g | H_k^g, \overline{A}_{\max \{j < k: \overline{A}_j^{g+} = \overline{A}_j\}}])\\
        &- \mathbb{E}[R_k^g | H_k^g, \overline{A}_{\max \{j < k: \overline{A}_j^{g+} = \overline{A}_j\}}] - \mathbb{IF}(\mathbb{E}[R_k^g | H_k^g, \overline{A}_{\max \{j < k: \overline{A}_j^{g+} = \overline{A}_j\}}])]\\
        &+ (\mathbb{P}_n - P)[ \mathbb{E}[R_k^g | H_k^g, \overline{A}_{\max \{j < k: \overline{A}_j^{g+} = \overline{A}_j\}}] + \mathbb{IF}(\mathbb{E}[R_k^g | H_k^g, \overline{A}_{\max \{j < k: \overline{A}_j^{g+} = \overline{A}_j\}}])]\\
        &+ P[\hat{\mathbb{E}}[R_k^g | H_k^g, \overline{A}_{\max \{j < k: \overline{A}_j^{g+} = \overline{A}_j\}}] + \hat{\mathbb{IF}}(\mathbb{E}[R_k^g | H_k^g, \overline{A}_{\max \{j < k: \overline{A}_j^{g+} = \overline{A}_j\}}])\\
        &- \mathbb{E}[R_k^g | H_k^g, \overline{A}_{\max \{j < k: \overline{A}_j^{g+} = \overline{A}_j\}}] - \mathbb{IF}(\mathbb{E}[R_k^g | H_k^g, \overline{A}_{\max \{j < k: \overline{A}_j^{g+} = \overline{A}_j\}}])].
    \end{align*}
    The result is obtained using Lemma \ref{lemma: one-step convergence result}, the Donsker and boundedness properties, and the central limit theorem.

    The proof for $\hat{g}$ follows similarly as in the proof of Proposition \ref{prop: estimation convergence}.

    \subsection{Proof of Proposition \ref{prop: MDP sup better than opt}}
By the law of total expectation, 
\begin{align*}
    \mathbb{E}[R_k^g] = \mathbb{E}[\mathbb{E}[R_k^g | H_k^g]].
\end{align*}
However, by Assumption \ref{ass: Markov general}, 
\begin{align*}
    \mathbb{E}[R_k^g | H_k^g] = \mathbb{E}[R_k^{g_k} | L_k].
\end{align*}
Because $\mathbb{E}[R_k^g | H_k^g]$ is identified for any $g$ and $H_k^g$ by assumption, we have that $\mathbb{E}[R^a | L]$ is identified with probability one for any $a \in \{0,1\}$.

Therefore, by Lemma 1 in \citet{stensrud_optimal_2024}, $\mathbb{E}[R_k^a | L_k, A_k]$ is identified with probability one for any $a \in \{0,1\}$. Thus, $\Tilde{g}_k^{\textbf{sup}}$, and hence $g^{\textbf{sup}}$, are identified by definition. Furthermore, by Proposition 1 of \cite{stensrud_optimal_2024},
\begin{align*}
    \mathbb{E}[R_k^{g^{\textbf{opt}}_k} | L_k] \leq \mathbb{E}[R_k^{\Tilde{g}_k^{\textbf{sup}}} |L_k] = \mathbb{E}[R_k^{g^{\textbf{sup}}} | H_k^g].
\end{align*}

Hence,
\begin{align*}
    \mathbb{E}[\sum_{k = 1}^K R_k^{g^{\textbf{opt}}}] &= \sum_{k = 1}^K \mathbb{E}[R_k^{g^{\textbf{opt}}}]\\
    &= \sum_{k = 1}^K \mathbb{E}[\mathbb{E}[R_k^{g^{\textbf{opt}}} | H_k^g ]]\\
    &= \sum_{k = 1}^K \mathbb{E}[\mathbb{E}[R^{g^{\textbf{opt}}_k} | L = L_k ]]\\
    & \leq \sum_{k = 1}^K \mathbb{E}[\mathbb{E}[R_k^{g^{\textbf{sup}}} | H_k^g]] =  \mathbb{E}[\sum_{k = 1}^K R_k^{g^{\textbf{sup}}}].
\end{align*}

\subsection{Proof of Proposition \ref{prop: shi mdp implies forgetfulness}}
    If Assumption \ref{ass: shi mdp} holds, 
\begin{align*}
\mathbb{E}[R_k^g | H_k^g, \overline{A}_k^g]
 &= \mathbb{E}\big[\mathbb{E}[R_k^g | H_k^g, \overline{A}_k^g, U_k] \big| H_k^g, \overline{A}_k^g\big]\\
 &= \mathbb{E}\big[\mathbb{E}[R_k | H_k, \overline{A}_k = \overline{g}_k, U_k] \big| H_k^g, \overline{A}_k^g\big]\\
 &= \mathbb{E}\big[\mathbb{E}[R_k | L_k, A_k = g_k, U_k] \big| H_k^g, \overline{A}_k^g\big]\\
 &= \mathbb{E}\big[\mathbb{E}[R_k | L_k, A_k = g_k, U_k] \big| H_k^g, A_k^g\big]\\
 &= \mathbb{E}\big[\mathbb{E}[R_k^g | H_k^g, A_k^g, U_k] \big| H_k^g, A_k^g\big]
  = \mathbb{E}[R_k^g | H_k^g, A_k^g],
\end{align*}
    where the first equality follows from the law of total expectation, the second equality follows from unconfoundedness when conditioning on the $U_k$, $R_k^g \independent \overline{A}_k^g | H_k^g, U_k$, the third and fourth equality follow from Assumption \ref{ass: shi mdp}, where Assumption \ref{ass: shi mdp} is taken to hold for the counterfactual variables under $g$, the fifth equality follows from unconfoundedness when conditioning on $U_k$, and the last equality follows from the law of total expectation.

\subsection{Proof of Proposition \ref{prop: forgetfulness superopt identification}}
    First, we prove that $g^{\textbf{s}}$ is the superoptimal regime. For $g \in \mathcal{G}^{\textbf{sup}}$, the assigned treatments $\overline{A}_{k-1}^{g+}$ are a deterministic function of $(H_{k-1}^g, \overline{A}_{k-1}^g)$, so we may write $g_k = g_k(h_k, \overline{a}_{k-1}, \overline{a}_k')$, where $\overline{a}_{k-1}$ denotes the assigned and $\overline{a}_k'$ the natural treatment values. Consider the Q-functions of the class $\mathcal{G}^{\textbf{sup}}$,
    \begin{align*}
        Q_K^{\textbf{sup}}(h_K, \overline{a}_{K-1}, \overline{a}_K'; a_K) &:= \mathbb{E}[R_K^{\overline{a}_{K-1}, a_K} | H_K^{\overline{a}_{K-1}} = h_K, \overline{A}_K^{\overline{a}_{K-1}} = \overline{a}_K'],\\
        Q_k^{\textbf{sup}}(h_k, \overline{a}_{k-1}, \overline{a}_k'; a_k) &:= \mathbb{E}[R_k^{\overline{a}_{k-1}, a_k} + V_{k+1}^{\textbf{sup}}(H_{k+1}^{\overline{a}_k}, \overline{a}_k, \overline{A}_{k+1}^{\overline{a}_k}) | H_k^{\overline{a}_{k-1}} = h_k, \overline{A}_k^{\overline{a}_{k-1}} = \overline{a}_k'],
    \end{align*}
    and $V_k^{\textbf{sup}}(h_k, \overline{a}_{k-1}, \overline{a}_k') := \max_{a_k \in \{0,1\}} Q_k^{\textbf{sup}}(h_k, \overline{a}_{k-1}, \overline{a}_k'; a_k)$.
    By the standard dynamic programming argument \citep{bellman_mdp_1957, murphy2003optimal}, backward induction gives $\mathbb{E}[\sum_{j = k}^K R_j^g | H_k^g, \overline{A}_k^g] \leq V_k^{\textbf{sup}}(H_k^g, \overline{A}_{k-1}^{g+}, \overline{A}_k^g)$ for every $g \in \mathcal{G}^{\textbf{sup}}$, with equality when $g_j$ maximizes $Q_j^{\textbf{sup}}$ for all $j \geq k$. Hence $g^{\textbf{sup}}_k = \argmax_{a_k} Q_k^{\textbf{sup}}$.
    
    Analogously, let
    \begin{align*}
        Q_K^{\textbf{s}}(h_K, \overline{a}_{K-1}, a_K'; a_K) &:= \mathbb{E}[R_K^{\overline{a}_{K-1}, a_K} | H_K^{\overline{a}_{K-1}} = h_K, A_K^{\overline{a}_{K-1}} = a_K'],\\
        Q_k^{\textbf{s}}(h_k, \overline{a}_{k-1}, a_k'; a_k) &:= \mathbb{E}[R_k^{\overline{a}_{k-1}, a_k} + V_{k+1}^{\textbf{s}}(H_{k+1}^{\overline{a}_k}, \overline{a}_k, A_{k+1}^{\overline{a}_k}) | H_k^{\overline{a}_{k-1}} = h_k, A_k^{\overline{a}_{k-1}} = a_k'],
    \end{align*}
    and $V_k^{\textbf{s}} := \max_{a_k \in \{0,1\}} Q_k^{\textbf{s}}$, so that $g^{\textbf{s}}_k = \argmax_{a_k} Q_k^{\textbf{s}}$ and $V_k^{\textbf{s}}$ is the value function $\mathcal{V}_{P,k}^{\overline{a}_{k-1}}$ of $g^{\textbf{s}}$.

    We show by backward induction on $k$ that, for all $\overline{a}_{k-1}$ and $a_k$,
    \begin{equation}
        Q_k^{\textbf{sup}}(h_k, \overline{a}_{k-1}, \overline{a}_k'; a_k) = Q_k^{\textbf{s}}(h_k, \overline{a}_{k-1}, a_k'; a_k) \quad \text{a.s.,}
        \label{eq: Q sup equals Q s}
    \end{equation}
    which implies $V_k^{\textbf{sup}} = V_k^{\textbf{s}}$ and $g_k^{\textbf{sup}} = g_k^{\textbf{s}}$. For $k = K$, $Q_k^{\textbf{sup}} = Q_k^{\textbf{s}}$ by Assumption \ref{Ass: Forgetfulness} and hence $g_k^{\textbf{sup}} = g_k^{\textbf{s}}$. Now, suppose \eqref{eq: Q sup equals Q s} holds at time $k+1$. Then $V_{k+1}^{\textbf{sup}}(H_{k+1}^{\overline{a}_k}, \overline{a}_k, \overline{A}_{k+1}^{\overline{a}_k}) = V_{k+1}^{\textbf{s}}(H_{k+1}^{\overline{a}_k}, \overline{a}_k, A_{k+1}^{\overline{a}_k})$ a.s., and since $H_{k+1}^{\overline{a}_k} = (H_k^{\overline{a}_{k-1}}, R_k^{\overline{a}_k}, L_{k+1}^{\overline{a}_k})$, there is a function $\phi_k$ such that
    \begin{align*}
        R_k^{\overline{a}_k} + V_{k+1}^{\textbf{sup}}(H_{k+1}^{\overline{a}_k}, \overline{a}_k, \overline{A}_{k+1}^{\overline{a}_k}) = \phi_k(H_k^{\overline{a}_{k-1}}, R_k^{\overline{a}_k}, L_{k+1}^{\overline{a}_k}, A_{k+1}^{\overline{a}_k}) \quad \text{a.s.}
    \end{align*}
    Therefore,
    \begin{align*}
        Q_k^{\textbf{sup}}(h_k, \overline{a}_{k-1}, \overline{a}_k'; a_k) &= \mathbb{E}[\phi_k(H_k^{\overline{a}_{k-1}}, R_k^{\overline{a}_k}, L_{k+1}^{\overline{a}_k}, A_{k+1}^{\overline{a}_k}) | H_k^{\overline{a}_{k-1}} = h_k, \overline{A}_k^{\overline{a}_{k-1}} = \overline{a}_k']\\
        &= \mathbb{E}[\phi_k(H_k^{\overline{a}_{k-1}}, R_k^{\overline{a}_k}, L_{k+1}^{\overline{a}_k}, A_{k+1}^{\overline{a}_k}) | H_k^{\overline{a}_{k-1}} = h_k, A_k^{\overline{a}_{k-1}} = a_k']\\
        &= Q_k^{\textbf{s}}(h_k, \overline{a}_{k-1}, a_k'; a_k),
    \end{align*}
    where the second equality follows from Assumption \ref{Ass: Forgetfulness future}, since $(R_k^{\overline{a}_k}, L_{k+1}^{\overline{a}_k}, A_{k+1}^{\overline{a}_k})$ is independent of $\overline{A}_{k-1}^{\overline{a}_{k-2}}$ given $(H_k^{\overline{a}_{k-1}}, A_k^{\overline{a}_{k-1}})$, and the last equality follows from the definition of $Q_k^{\textbf{s}}$ and the induction hypothesis $V_{k+1}^{\textbf{sup}} = V_{k+1}^{\textbf{s}}$. This proves \eqref{eq: Q sup equals Q s} for all $k$, and thus $g^{\textbf{s}} = g^{\textbf{sup}}$.
    
    Now, we prove that $g^{\textbf{s}}$ is identified. By definition, for any $k = 1, \ldots, K$,
    \begin{align*}
        &g_k^{\textbf{s}}(h_k, \overline{A}_{k-1}^{g+}, a_k')\\
        &= \argmax_{a_k\in \{0,1\}} \mathbb{E}[R_k^{\overline{A}_{k-1}^{g+}, a_k} + V_{k+1}^{\textbf{s}}(H_{k+1}^g, \overline{A}_{k + 1}^g)|H_k^g = h_k, A_k^g = a_k'], \\
        &= \argmax_{a_k\in \{0,1\}} \mathbb{E}[R_k^{\overline{A}_{k-1}^{g+}, a_k} + V_{k+1}^{\textbf{s}}(H_{k+1}^g, \overline{A}_{k + 1}^g)|H_k^g = h_k, A_k^g = a_k', \overline{A}_{k-1}^{g} = \overline{A}_{k-1}^{g+}] \\
        &= \argmax_{a_k\in \{0,1\}} \mathbb{E}[R_k^{a_k} + V_{k+1}^{\textbf{s}}(H_{k+1}^g, \overline{A}_{k + 1}^g)|H_k = h_k, A_k = a_k', \overline{A}_{k-1} = \overline{A}_{k-1}^{g+}],
    \end{align*}
    where the second equality follows by Assumption \ref{Ass: Forgetfulness future} and the last equality follows by Assumption \ref{ass: consistency}.
    
    If $\mathbb{E}[R_k^{a_k} + \mathcal{V}_{P,k+1}^{\overline{a}_k}(H_{k+1}^{\overline{a}_k}, \overline{A}_{k + 1}^{\overline{a}_k}) | \overline{L}_k, \overline{R}_{k-1}, \overline{A}_{k-1}]$ is identified, $$\mathbb{E}[R_k^{a_k} + \mathcal{V}_{P,k+1}^{\overline{a}_k}(H_{k+1}^{\overline{a}_k}, \overline{A}_{k + 1}^{\overline{a}_k}) | \overline{L}_k,\overline{R}_{k-1}, A_k, \overline{A}_{k-1}]$$ is identified by Lemma 1 in \citet{stensrud_optimal_2024}.

\bibliographystyle{plainnat}
\spacingset{0.6}
\setlength{\bibsep}{2pt}
\bibliography{references1}

\end{document}